\DeclareMathOperator{\operatorClassNP}{{\sf NP}}
\newcommand{\classNP}{\ensuremath{\operatorClassNP}\xspace}
\DeclareMathOperator{\operatorClassCoNP}{{\sf co-NP}}
\newcommand{\classCoNP}{\ensuremath{\operatorClassCoNP}\xspace}
\DeclareMathOperator{\operatorClassFPT}{{\sf FPT}\xspace}
\newcommand{\classFPT}{\ensuremath{\operatorClassFPT}\xspace}
\DeclareMathOperator{\operatorClassW}{{\sf W}}
\newcommand{\classW}[1]{\ensuremath{\operatorClassW[#1]}\xspace}
\DeclareMathOperator{\operatorClassCoW}{{\sf co-W}}
\newcommand{\classCoW}[1]{\ensuremath{\operatorClassCoW[#1]}\xspace}
\DeclareMathOperator{\operatorClassXP}{{\sf X}P\xspace}
\newcommand{\classXP}{\ensuremath{\operatorClassXP}\xspace}
 \DeclareMathOperator{\operatorClassPSPACE}{{\sf PSPACE}\xspace}
\newcommand{\classPSPACE}{\ensuremath{\operatorClassPSPACE}\xspace}
 \DeclareMathOperator{\operatorClassETH}{{\sf ETH}\xspace}
\newcommand{\classETH}{\ensuremath{\operatorClassETH}\xspace}
\newcommand{\CopsR}{\ProblemName{Cops-and-Robber}}
\newcommand{\divider}{Divider\xspace}
\newcommand{\dinumber}{dynamic separation\xspace}
\newcommand{\Oh}{\mathcal{O}}
\newtheorem{theorem}{Theorem}
\newtheorem{claim}{Claim}
\newtheorem{corollary}{Corollary}
\newtheorem{observation}{Observation}
\newtheorem{proposition}{Proposition}
\newcommand{\true}{{\sf true}\xspace}
\newcommand{\false}{{\sf false}\xspace}
\newcommand{\dist}{{\sf dist}}
\newcommand{\nd}{{\sf nd}}
\newcommand{\id}{{\sf id}}
\newcommand{\pname}{\textsc}
\newcommand{\ProblemFormat}[1]{\pname{#1}}
\newcommand{\ProblemIndex}[1]{\index{problem!\ProblemFormat{#1}}}
\newcommand{\ProblemName}[1]{\ProblemFormat{#1}\ProblemIndex{#1}{}\xspace}
 \newcommand{\probR}{\ProblemName{Rendezvous}}
 \newcommand{\probRT}{\ProblemName{Rendezvous in Time}}
 \newcommand{\probILP}{\ProblemName{Integer Linear Programming Feasibility}}
\newlength{\RoundedBoxWidth}
\newsavebox{\GrayRoundedBox}
\newenvironment{GrayBox}[1]%
   {\setlength{\RoundedBoxWidth}{.93\textwidth}
    \def\boxheading{#1}
    \begin{lrbox}{\GrayRoundedBox}
       \begin{minipage}{\RoundedBoxWidth}}%
   {   \end{minipage}
    \end{lrbox}
    \begin{center}
    \begin{tikzpicture}%
       \node(Text)[draw=black!20,fill=white,rounded corners,%
             inner sep=2ex,text width=\RoundedBoxWidth]%
             {\usebox{\GrayRoundedBox}};
        \coordinate(x) at (current bounding box.north west);
        \node [draw=white,rectangle,inner sep=3pt,anchor=north west,fill=white] 
        at ($(x)+(6pt,.75em)$) {\boxheading};
    \end{tikzpicture}
    \end{center}}     
\newenvironment{defproblemx}[2][]{\noindent\ignorespaces%
                                \FrameSep=6pt%
                                \parindent=0pt%
                \vspace*{-1.5em}
                \ifthenelse{\isempty{#1}}{%
                  \begin{GrayBox}{\textsc{#2}}%                
                }{%
                  \begin{GrayBox}{\textsc{#2}  parameterized by~{#1}}%  
                }
                \begin{tabular*}{\textwidth}{@{\hspace{.1em}} >{\itshape} p{1.8cm} p{0.8\textwidth} @{}}%        
            }{
                \end{tabular*}%
                \end{GrayBox}%
                \ignorespacesafterend
            }
\newcommand{\defproblema}[3]{% FJR Version
  \begin{defproblemx}{#1}
    Input:  & #2 \\
    Task: & #3
  \end{defproblemx}
}%
\begin{document}
\title{Can Romeo  and  Juliet Meet?\\
Or Rendezvous Games with Adversaries on Graphs\thanks{This research was supported by  the French Ministry of Europe and Foreign Affairs, via the Franco-Norwegian project PHC AURORA. The two first authors have been supported by the Research Council of Norway via the project  ``MULTIVAL" (grant no. 263317). The last author was supported by the ANR projects DEMOGRAPH (ANR-16-CE40-0028), ESIGMA (ANR-17-CE23-0010), ELIT (ANR-20-CE48-0008), and the French-German Collaboration ANR/DFG Project UTMA (ANR-20-CE92-0027).}}

\author{
Fedor V. Fomin\thanks{
Department of Informatics, University of Bergen, Norway.} \addtocounter{footnote}{-1}
\and
Petr A. Golovach\footnotemark{} 
\and
Dimitrios M. Thilikos\thanks{LIRMM, Univ Montpellier, CNRS, Montpellier, France.}
}

\date{}

\maketitle

\epigraph{\scriptsize For never was a story of more woe than this of Juliet and {her} Romeo.}{ ---{\sl\scriptsize  William Shakespeare, Romeo and Juliet}}
\begin{abstract}
\noindent We introduce the rendezvous game with adversaries. In this game, two players, {\sl Facilitator} and {\sl \divider}, play against each other on a graph. Facilitator has two agents, and \divider has a team of $k$ agents located in some vertices of the graph. They take turns in moving their agents to adjacent vertices (or staying put). Facilitator wins if his agents meet in some vertex of the graph. The goal of \divider is to prevent the rendezvous of Facilitator's agents. Our interest is to decide whether Facilitator can win.
It appears that, in general, the problem is \classPSPACE-hard and, when parameterized by $k$,  \classCoW{2}-hard. Moreover, even the game's variant where we ask whether 
Facilitator can ensure the meeting of his agents
within $\tau$ steps is  \classCoNP-complete already for $\tau=2$. On the other hand, for chordal and $P_5$-free graphs, we prove that the problem is solvable in polynomial time. These algorithms exploit an interesting relation of the game and minimum vertex cuts in certain graph classes. Finally, we show that the problem is fixed-parameter tractable parameterized by both the graph's neighborhood diversity and $\tau$.
\end{abstract}

\section{Introduction}\label{sec:intro}

We introduce the \emph{Rendezvous Game with Adversaries} on graphs. In our game, a team of dividers tries to prevent two passionate lovers, say Romeo and Juliet, from meeting each other.  We are interested in the minimum size of the team of dividers sufficient to obstruct their romantic encounter.
%``separate'' Romeo from Juliet.
 In the static setting, when dividers do not move, this is just the problem of computing the minimum vertex cut between the pair of vertices occupied by Romeo and Juliet. But in the dynamic variant, when dividers are allowed to change their position, the team's size can be significantly smaller than the size of the minimum cut. In fact, this gives rise to a 
new interactive form of connectivity that is much more challenging 
both from the combinatorial and the algorithmic point of view.

Our rendezvous game rules are very similar to the rules of the classical \CopsR game of Nowakowski-Winkler and Quillioit \cite{NowakowskiW83,Quilliot85}, see also the book of Bonato and Nowakowski \cite{BonatoN11}. The difference is that in the \CopsR  game, a team of $k$ cops tries to capture a robber in a graph, while in our game, the group of $k$ dividers tries to keep the two   lovers separated. 

A bit more formally. 
The game is played on a finite undirected connected graph $G$ by two players: \emph{Facilitator} and \emph{\divider}. 
Facilitator has two agents $R$ and $J$ that are initially placed in designated vertices $s$ and $t$ of $G$. \divider has a team of $k\geq 1$ agents $D_1,\ldots,D_k$ that are initially placed in some vertices of $V(G)\setminus\{s,t\}$ selected by \divider. Several \divider's agents can occupy the same vertex. Then the players make their moves by turn, starting with Facilitator. At every move, each player moves some of {his} agents to adjacent vertices or keeps them in their old positions. No agent can be moved to a vertex that is currently occupied by adversary agents. Both players have complete information about $G$ and the positions of all the agents. Facilitator aims to ensure that $R$ and $J$ meet; that is, they are in the same vertex. The task of \divider is to prevent the rendezvous of $R$ and $J$ by maintaining $D_1,\ldots,D_k$ in positions that block the possibility to meet. Facilitator wins if $R$ and $J$ meet, and \divider wins if {he} succeed in preventing the meeting of $R$ and $J$ forever.

We define the following problem:

\defproblema{\probR}{A graph $G$ with two given vertices $s$ and $t$, and a positive integer $k$. }{Decide whether Facilitator can win on $G$ starting from $s$ and $t$ against \divider with $k$ agents.}
 
Another variant of the game is when the number of moves of the players is at most some parameter $\tau$. Then Facilitator wins if $R$ and $J$ meet within the first $\tau$ moves, and \divider wins otherwise. Thus the problem is.

 \defproblema{\probRT}{A graph $G$ with two given vertices $s$ and $t$, and positive integers $k$ and $\tau$. }{Decide whether Facilitator can win on $G$ starting from $s$ and $t$ in at most $\tau$ steps against \divider with $k$ agents.}
 Notice that, in the above problem, $\tau$ is part of the input. We also consider the version of the problem where $\tau$ is a fixed constant. This generates a family of problems, one  for each different value of $\tau$, and we refer to each of them as the $\tau$-\probRT problem.

\medskip
\noindent{\bf Our results.} We start with combinatorial results. 
If $s=t$ or if $s$ and $t$ are adjacent, then Facilitator wins by a trivial strategy. However, if $s$ and $t$ are distinct nonadjacent vertices, then \divider wins provided that {he} has sufficiently many agents. For example, the agents can be placed in the vertices of an $(s,t)$-separator and stay there.  Then $R$ and $J$ never meet. 
We call the minimum number $k$ of the agents of \divider that is sufficient for {his} winning, the \emph{$(s,t)$-\dinumber} number and denote it by $d_G(s,t)$. We put  $d_G(s,t)=+\infty$ for $s=t$ or when $s$ and $t$ are adjacent. The \dinumber number can be seen as a dynamic analog of the minimum size $\lambda_G(s,t)$ of a vertex  $(s,t)$-separator in $G$. (The minimum number of vertices whose removal leaves $s$ and $t$ in different connected components.)  
Then \probR can be restated as the problem of  deciding whether $d_G(s,t)>k$. 

The first natural question is: What is the relation between $d_G(s,t)$ and $\lambda_G(s,t)$? Clearly, $d_G(s,t)\leq \lambda_G(s,t)$. We show 
 that $d_G(s,t)=1$ if and only if $\lambda_G(s,t)=1$. 
If $d_G(s,t)\geq 2$, then we construct examples demonstrating that  the difference $\lambda_G(s,t)-d_G(s,t)$ can be arbitrary even for sparse graphs. Interestingly, there are  graph classes where both parameters are equal.  In particular, we show that $\lambda_G(s,t)=d_G(s,t)$ holds for $P_5$-free graphs and chordal graphs. This also yields a polynomial time algorithm computing $d_G(s,t)$ on these classes of graphs. 

Then we turn to the computational complexity of \probR and \probRT on general graphs. Both problems can be solved it $n^{\Oh(k)}$ time by using a backtracking technique. 
We show that this running time is asymptotically tight by proving that they are both \classCoW{2}-hard when parameterized by $k$ (we prove that it is \classW{2}-hard to decide whether $d_G(s,t)\leq k$) and cannot be solved in  $f(k) \cdot  n^{o(k)}$ time for any function $f$ of $k$, unless $\classETH$ fails. Moreover,  $\tau$-\probRT is \classW{2}-hard, for every   $\tau\geq 2$. If $\tau$ is a constant, then $\tau$-\probRT is in \classCoNP and our \classCoW{2}-hardness proof implies that $\tau$-\probRT is \classCoNP-complete for every   $\tau\geq 2$. For the general case, the problems are even harder as we prove that  \probR and \probRT are both \classPSPACE-hard.

Finally, we initiate  the study of the complexity of the problems under structural parameterization of the input graphs. We show that \probRT is \classFPT when parameterized by the neighborhood diversity of the input graph and $\tau$.

\medskip
\noindent{\bf Related work.}
The classical {rendezvous} game introduced by Alpern~\cite{Alpern95} is played by two agents that are placed in some unfamiliar area and whose task is to develop strategies that  maximize the probability that they meet. We refer to the book of Alpern  and Gal \cite{AlpernG03-Th} for detailed study of the subject. Deterministic rendezvous problem was studied by Ta-Shma and Zwock \cite{DBLP:journals/talg/Ta-ShmaZ14}. See also  \cite{DessmarkFKP06,FraigniaudP13} for other variants of rendezvous
problems on graphs. 

\probR is closely related to the  \CopsR game.   The game was defined (for one cop) by Winkler and
Nowakowski \cite{NowakowskiW83} and Quilliot \cite{Quilliot85} who
also characterized graphs for which one cop can catch the robber.
  Aigner and Fromme \cite{AignerF84} initiated the study of the problem with
several cops. The minimum number of cops that are required to capture the 
robber is called the cop number of a graph. This problem was studied 
intensively and we
 refer to   the book of Bonato and Nowakowski \cite{BonatoN11} for further  references. 
Kinnersley  \cite{Kinnersley15} established   that the problem is {\sf EXPTIME}-complete.
 The 
\CopsR game can be seen as a special case of search games played 
on graphs,   surveys \cite{bonato2013graph,FominT08} provide further
references on search and pursuit-evasion games on graphs.  
A   related variant of \CopsR game is the  
guarding game studied in 
\cite{DBLP:journals/algorithmica/FominGHMVW11,FominGL11,nagamochi2011cop,SamalV14}. Here the set of cops is trying to prevent the robber from entering a specified subgraph in a graph.

\medskip
\noindent{\bf Organization of the paper.} In Section~\ref{sec:prelim}, we give the basic definitions and introduce notation used throughout the paper. We also show that \probR and \probRT can be solved in $n^{\Oh(k)}$ time. In Section~\ref{sec:sep}, 
we investigate relations between $d_G(s,t)$ and $\lambda_G(s,t)$. In Section~\ref{sec:hardness}, we give algorithmic lower bounds for \probR and \probRT. 
In Section~\ref{sec:diversity}, we show that \probRT is fixed-parameter tractable  (\classFPT) when parameterized by $\tau$ and the neighborhood  diversity of the input graph.
We conclude in Section~\ref{sec:concl} by stating some open problems.

\section{Preliminaries}\label{sec:prelim}  
\noindent
{\bf Graphs.}
All graphs considered in this paper are finite undirected graphs without loops or multiple edges, unless it is said explicitly that we consider directed graphs. 
We follow the standard graph theoretic notation and terminology (see, e.g., \cite{Diestel12}). For each of the graph problems considered in this paper, we let $n=|V(G)|$ and $m=|E(G)|$ denote the number of vertices and edges,
respectively, of the input graph $G$ if it does not create confusion. 
For a graph $G$ and a subset $X\subseteq V(G)$ of vertices, we write $G[X]$ to denote the subgraph of $G$ induced by $X$.
For a set of vertices $S$, $G-S$ denotes the graph obtained by deleting the vertices of $S$, that is, $G-S=G[V(G)\setminus S]$; for a vertex $v$, we write $G-v$ instead of $G-\{v\}$.
%Similarly, for a set of edges $A$ (an edge $e$, respectively), $G-A$ ($G-e$, respectively) denotes the graph obtained by the deletion of the edges of $A$ (an edge $e$, respectively). 
For a vertex $v$, we denote by $N_G(v)$ the \emph{(open) neighborhood} of $v$, i.e., the set of vertices that are adjacent to $v$ in $G$.
We use $N_G[v]$ to denote the \emph{closed neighborhood}, that is $N_G(v)\cup \{v\}$.
%For $S\subseteq V(G)$, $N_G[S]=\bigcup_{v\in S}N_G[v]$ and $N_G(S)=N_G[S]\setminus S$. 
For two nonadjacent vertices $s$ and $t$, a set of vertices $S\subseteq V(G)\setminus\{s,t\}$ is an \emph{$(s,t)$-separator} if $s$ and $t$ are in distinct connected components of $G-S$. We use $\lambda_G(s,t)$ to denote the minimum size of an $(s,t)$-separator of $G$; $\lambda_G(s,t)=+\infty$ if $s=t$ or $s$ and $t$ are adjacent. 
%For a positive integer $k$, a graph $G$ is \emph{$k$-connected} if $G$ is a connected graph with at least $k+1$ vertices without a separator of size at most $k-1$ for any two vertices of $G$. In particular, $G$ is 2-connected if $G-v$ is connected for every $v\in V(G)$. 
A \emph{path} is a connected graph with at leat one and most two vertices (called \emph{end-vertices}) of degree at most one whose remaining vertices (called \emph{internal}) have degrees two. We say that a path with end-vertices $u$ and $v$ is an \emph{(u,v)-path}. The \emph{length} of a path $P$, denoted by $\ell(P),$ is the number of its edges.
The \emph{distance} $\dist_G(u,v)$ between two vertices $u$ and $v$ of $G$ in the length of a shortest $(u,v)$-path.
We use $v_1\cdots v_k$ to denote the path with the vertices $v_1,\ldots,v_k$ and the edges $v_{i-1}v_i$ for $i\in\{2,\ldots,k\}$.
A \emph{cycle} is a connected graph with all the vertices of degree two. The \emph{length} $\ell(C)$ of a cycle $C$ is the number of edges of $C$.

Let $X$ and $Y$ be multisets of vertices of a graph $G$  (i.e., $X$ and $Y$ can contain several copies of the same vertex). We say that $X$ and $Y$ of the same size are \emph{adjacent} if there is a bijective mapping $\alpha\colon X\rightarrow Y$ such that for $x\in X$, either $x=\alpha(x)$ or $x$ and $\alpha(x)$ are adjacent in $G$. It is useful to observe the following.

\begin{observation}\label{obs:adj}
For multisets $X$ and $Y$ of vertices of $G$, it can be decided in polynomial time whether $X$ and $Y$ are adjacent. 
\end{observation}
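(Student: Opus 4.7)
The plan is to reduce the question to bipartite perfect matching, which is solvable in polynomial time (for instance via the Hopcroft--Karp algorithm).

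First, I would check that $|X|=|Y|$; otherwise no bijection exists and the answer is trivially negative. Assuming $|X|=|Y|=p$, I would build an auxiliary bipartite graph $H$ with vertex set $A\cup B$, where $A$ contains one distinct token for each of the $p$ occurrences of vertices in $X$ and $B$ contains one distinct token for each of the $p$ occurrences of vertices in $Y$ (so multiplicities are represented by multiple independent tokens). For $a\in A$ representing a vertex $x\in V(G)$ and $b\in B$ representing a vertex $y\in V(G)$, I would put the edge $ab$ in $H$ exactly when $x=y$ or $xy\in E(G)$.

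Then the bijections $\alpha\colon X\to Y$ satisfying the condition of the definition are precisely the perfect matchings of $H$: a perfect matching pairs up the occurrences of vertices in $X$ and $Y$ using only pairs $(x,y)$ with $x=y$ or $xy\in E(G)$, which is exactly what the definition of adjacent multisets requires; conversely, any such bijection yields a perfect matching in $H$. Since $H$ has at most $2p\le 2|V(G)|$ vertices and can be built in polynomial time from $G$, $X$ and $Y$, and since bipartite maximum matching can be computed in polynomial time, the whole procedure runs in polynomial time.

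The only subtlety worth noting is the correct handling of multisets: I need the elements of $X$ and $Y$ to be treated as distinct tokens in $H$ even when several of them correspond to the same vertex of $G$, so that $\alpha$ is allowed to pair different copies of the same vertex in different ways. No step here is a serious obstacle; the statement is essentially just an encoding of the definition as a bipartite matching instance.
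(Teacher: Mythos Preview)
Your proof is correct and follows essentially the same approach as the paper: both reduce the question to the existence of a perfect matching in an auxiliary bipartite graph whose two sides index the elements (with multiplicity) of $X$ and $Y$, with edges encoding the ``same or adjacent in $G$'' relation. One minor slip: the bound $2p\le 2|V(G)|$ is not valid for multisets (repetitions can make $p>|V(G)|$), but this does not affect the polynomial-time conclusion since $p$ is part of the input size.
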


\begin{proof}
It is trivial to check whether $X$ and $Y$ have the same size. If this holds, we construct the bipartite graph $H$ with the vertex set $V_1\cup V_2$, where $|V_1|=|V_2|=|X|=|Y|$, and the nodes of $V_1$ correspond to the elements of $X$ and the nodes of $V_2$ correspond to the elements of $Y$. A node of $V_1$ is adjacent to a node of $V_2$ if and only if the corresponding vertices of $G$ are either the same or adjacent. Then $X$ and $Y$ are adjacent if and only if $H$ has a perfect matching. The existence of a perfect matching in a bipartite graph can be verified in polynomial time (see, e.g., \cite{Lovasz09}) and the claim follows. 
\end{proof}

\medskip
\noindent
{\bf Parameterized Complexity.} We obtain a number of results about the parameterized complexity of \probR and \probRT. We refer to the resent book of Cygan et al.~\cite{CyganFKLMPPS15} for the introduction to the area. Here we just remind that an instanse of the parameterized version $\Pi_p$ of a decision problem $\Pi$ is a pair $(I,k)$, where $I$ is an instance of $\Pi$ and $k$ is an integer \emph{parameter} associated with $I$.  It is said that $\Pi_p$ is \emph{fixed-parameter tractable} (\classFPT) if it can be solved in time $f(k)|I|^{\Oh(1)}$ for a computable function $f(k)$ of the parameter $k$. The Parameterized Complexity theory also provides tools that allow to show that a parameterized problem cannot be solved in \classFPT time (up to some reasonable complexity assumptions). 
For this, Downey and Fellows (see~\cite{DowneyF13}) introduced a hierarchy of parameterized complexity classes, namely  
$\classFPT\subseteq \classW{1}\subseteq \classW{2}\subseteq\cdots\subseteq \classXP$, and the basic conjecture is that all inclusions in the hierarchy are proper. The usual way to show that it is unlikely that a parameterized problem admit an \classFPT algorithm is to show that it is \classW{1} or \classW{2}-hard using a \emph{parameterized reduction} from a known hard problem in the corresponding class. The most common tool for establishing fine-grained complexity lower bound for parameterized problems is the \emph{Exponential Time Hypothesis} (\classETH) proposed by Impagliazzo, Paturi, and Zane~\cite{ImpagliazzoP01,ImpagliazzoPZ01}. 
This is the conjecture stating that there is $\varepsilon>0$ such that \textsc{$3$-Satisfiability} cannot be solved in $\Oh^*(2^{\varepsilon n})$ time on formulas with $n$ variables.  

\medskip
\noindent
{\bf Rendezvous Games with Adversaries.} 
Suppose that the game is played on a connected graph $G$, and $s$ and $t$ are initial positions of the agents of Facilitator. Let also $k$ be the number of agents of \divider. 

Notice that a placements of the agents of Facilitator is defined by a multiset of two vertices, as $R$ and $J$ can occupy the same vertex. 
We denote by $\mathcal{F}_G$ the family of all multisets of two vertices. 
Similarly, a placement of  $k$ agents of \divider is defined by a multiset of $k$ vertices, because several agents can occupy the same vertex. Let $\mathcal{D}_G^k$ be the family of all multisets of $k$ vertices. We say that $F\in\mathcal{F}_G$ and $D\in\mathcal{D}_G^k$ are \emph{compatible} if $F\cap D=\emptyset$. Notice that the number of pairs of compatible  $F\in\mathcal{F}_G$ and $D\in \mathcal{D}_G^k$ is $n\binom{n+k-2}{k}+\binom{n}{2}\binom{n+k-3}{k}$. We denote by 
$$\mathcal{P}_G^k=\{(F,D)\mid F\in\mathcal{F}_G,~D\in \mathcal{D}_G^k\text{ s.t. }F\text{ and }D\text{ are compatible} \}$$
the set of \emph{positions} in the game.

Formally, a \emph{strategy} of Facilitator for \probR is a function 
$f\colon \mathcal{P}_G^k\rightarrow \mathcal{F}_G$ that maps $(F,D)\in \mathcal{P}_G^k$ to $F'\in \mathcal{F}_G$ such that $F$ and $F'$ are adjacent and $F'$ is compatible with $D$. In words, given a position $(F,D)$, Facilitator moves $R$ and $J$ from $F$ to $F'$ if this is his turn to move. Similarly, a \emph{strategy} of \divider is a function 
$d\colon \mathcal{P}_G^k\rightarrow \mathcal{D}_G^k$ that maps $(F,D)\in \mathcal{P}_G^k$ to $D'\in \mathcal{F}_G^k$ such that $D$ and $D'$ are adjacent and $D'$ is compatible with $F$, that is, \divider moves his agents from $D$ to $D'$ if this is his turn to move. To accommodate the initial placement, we extend the definition of $d$ for the pair $(\{s,t\},\emptyset)$ and let $d((\{s,t\},\emptyset)=D'$, where $D'\in\mathcal{D}_G^k$ is compatible with $\{s,t\}$.

The definitions of strategies for \probRT are more complicated, because the decisions of the players also depend on the number of the current step. 
A \emph{strategy} of Facilitator for \probR is a family of functions  $f_i\colon \mathcal{P}_G^k\rightarrow \mathcal{F}_G$ for $i\in\{1,\ldots,\tau\}$ such that
$f_i$ maps $(F,D)\in \mathcal{P}_G^k$ to $F'\in \mathcal{F}_G$, where  $F$ and $F'$ are adjacent and $F'$ is compatible with $D$. Facilitator uses $f_i$ for the move in the $i$-th step of the game. A  \emph{strategy} of \divider is a family of functions  
$d_i\colon \mathcal{P}_G^k\rightarrow \mathcal{D}_G^k$ for $i\in\{0,\ldots,\tau-1\}$ such that for $i\in\{1,\ldots,\tau-1\}$,
$d_i$ maps $(F,D)\in \mathcal{P}_G^k$ to $D'\in \mathcal{F}_G^k$, where  $D$ and $D'$ are adjacent and $D'$ is compatible with $F$, and $d_0$ maps $(\{s,t\},\emptyset)$ to $D'\in\mathcal{D}_G^k$ compatible with $\{s,t\}$ (slightly abusing notation we do not define $d_0$ for the elements of $\mathcal{P}_G^k$).   

In the majority of the proofs in our paper, we rather explain the strategies of the players in an informal way, to avoid defining functions for all elements of $\mathcal{P}_G^k$, because the majority of positions never occur in the game. However, the above notation is useful in some cases. 

As it is common for various games on graphs (see, e.g., the book of Bonato and Nowakowski~\cite{BonatoN11} about \CopsR games), our Rendezvous Game with Adversaries can be resolved by backtracking. As the approach is standard, we only briefly sketch the proof of the following theorem.

\begin{theorem}\label{thm:backtrack}
\probR and \probRT can be solved in $n^{\Oh(k)}$ time.
\end{theorem}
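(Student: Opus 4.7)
The plan is to apply classical retrograde analysis on the full game graph, as is standard for pursuit-evasion games such as \CopsR. I first build the state space $\mathcal{S} = \mathcal{P}_G^k \times \{F,D\}$, where the second coordinate records whose turn it is to move. By the counting formula given just before the theorem, $|\mathcal{P}_G^k| = n^{\Oh(k)}$, so $|\mathcal{S}| = n^{\Oh(k)}$. From each state, the legal successors are obtained by moving the agents of the player whose turn it is to an adjacent multiset that remains compatible with the opponent's placement; adjacency of multisets is decidable in polynomial time by Observation~\ref{obs:adj}. Consequently the arena and all of its transitions can be enumerated in $n^{\Oh(k)}$ time.

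Next I compute the set $W$ of Facilitator-winning states by the standard AND-OR fixed-point iteration. Initialize $W$ with every ``meeting'' state, i.e., every state whose Facilitator component is a multiset of the form $\{v,v\}$. Then repeatedly insert into $W$ any Facilitator-to-move state that has at least one successor already in $W$, and any Divider-to-move state all of whose successors are already in $W$. This monotone process stabilizes in at most $|\mathcal{S}|$ rounds and costs $n^{\Oh(k)}$ in total. Facilitator wins \probR if and only if, for every initial Divider placement $D_0$ compatible with $\{s,t\}$, the state $((\{s,t\},D_0),F)$ lies in $W$; there are only $\Oh(n^k)$ such $D_0$ to inspect.

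For \probRT, I would instead track the iterates $W_0 \subseteq W_1 \subseteq \cdots$ produced by the fixed-point computation: $W_0$ is the set of meeting states, and $W_{i+1}$ is obtained from $W_i$ by one application of the AND/OR rule above. Facilitator wins in at most $\tau$ steps if and only if $((\{s,t\},D_0),F) \in W_\tau$ for every compatible $D_0$. Since this sequence is monotone and bounded by $|\mathcal{S}|$, it stabilizes within at most $|\mathcal{S}|$ iterations, so one may replace $\tau$ by $\min(\tau, |\mathcal{S}|)$; the total running time therefore remains $n^{\Oh(k)}$ even when $\tau$ is encoded in binary and could be exponentially large.

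There is no genuine mathematical obstacle here; the construction is essentially the textbook solution of a finite two-player perfect-information reachability game. The only technical care is to bound the state space correctly using the counting formula for $\mathcal{P}_G^k$ and, for \probRT, to exploit the monotonicity of the AND/OR iteration to avoid a linear dependence on the input parameter $\tau$.
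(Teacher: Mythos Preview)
Your proposal is correct and follows essentially the same retrograde-analysis approach as the paper: build the arena of size $n^{\Oh(k)}$, compute the Facilitator-winning region by monotone AND/OR iteration, and for \probRT cap $\tau$ by the stabilization time of the iteration to avoid dependence on a binary-encoded $\tau$. One small bookkeeping caveat: a ``step'' in this game is a full round (Facilitator move followed by Divider response), whereas your description of $W_{i+1}$ advances by a single half-turn, so the right membership test for \probRT is closer to $W_{2\tau}$ than $W_\tau$; the paper sidesteps this by defining its iterates $\mathcal{W}_\ell$ only on Facilitator-to-move states and folding both half-turns into each update, but either way the $n^{\Oh(k)}$ bound is unaffected.
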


\begin{proof}
Let $G$ be a connected graph on which the game is played and let $s,t\in V(G)$. Let also $k$ be a positive integer denoting the number of agents of \divider. 
 
We define the \emph{game} graph $\mathcal{G}$ (also the name \emph{arena} could be found in the literature) as the directed graph, whose nodes correspond to positions and turns to move.  We denote the nodes of $\mathcal{G}$ by $v_{F,D}^{(h)}$ for $(F,D)\in\mathcal{P}_G^k$, and $h\in\{1,2\}$; if $h=1$, then Facilitator makes a move, and if $h=2$, then this is \divider's turn. 
For $h\in\{1,2\}$, we set $\mathcal{V}_h=\{v_{F,D}^{(h)}\mid (F,D)\in\mathcal{P}_G^k \}$.
%$$\mathcal{V}_h=\{v_{F,D}^{(h)}\mid F\in\mathcal{F},D\in\mathcal{D}\text{ are compatible}\}.$$
For every two nodes $v_{F,D}^{(1)}\in \mathcal{V}_1$ and $v_{F',D'}^{(2)}\in\mathcal{V}_2$, we construct arcs as follows. We construct the arc $(v_{F,D}^{(1)},v_{F',D'}^{(2)})$ if $D=D'$, and  $F$ and $F'$ are adjacent. Symmetrically, we construct  $(v_{F',D'}^{(2)},v_{F,D}^{(1)})$ if $F=F'$, and  $D'$ and $D$ are adjacent. We denote by $\mathcal{A}$ the set of arcs of $\mathcal{G}$.   

Observe, that $\mathcal{G}$ can be constructed in $n^{\Oh(k)}$ time. The number of nodes is $2\cdot(n\binom{n+k-2}{k}+\binom{n}{2}\binom{n+k-3}{k})= n^{\Oh(1)}$. 
Given two nodes $v_{F,D}^{(1)}$ and $v_{F',D'}^{(2)}$, the arcs between these nodes can be constructed in polynomial time by Observation~\ref{obs:adj}. Hence, the construction of the arc set can be done in time $k^{\Oh(k)}\cdot n^{\Oh(k)}\cdot n^{\Oh(k)}=n^{\Oh(k)}$.

Let $\ell\geq 0$ be an integer.  We define the set $\mathcal{W}_\ell\subseteq \mathcal{V}_1$ of \emph{winning positions} for Facilitator in at most $\ell$ moves.
A node $v_{F,D}^{(1)}$ is in $\mathcal{W}_\ell$ if Facilitator can win on $G$ in at most $\ell$ moves provided that $R$ and $J$ are placed in $F$ and the agents of \divider are occupying $D$. We explain how to construct $\mathcal{W}_\ell$ for $\ell=0,1,\ldots$ by dynamic programming.  

It is straightforward to verify that $v_{F,D}^{(1)}\in \mathcal{W}_0$ if and only if $F=\{x,x\}$ for $x\in V(G)$.  

For $\ell\geq 1$,
\begin{equation}\label{eq:backtrack} 
\mathcal{W}_\ell=\mathcal{W}_{\ell-1}\cup \mathcal{U},
\end{equation}
where
\begin{equation}\label{eq:backtrack-two}
\mathcal{U}=\{v_{F,D}^{(1)}\mid\text{ there is }(v_{F,D}^{(1)},v_{F',D}^{(2)})\in\mathcal{A}\text{ s.t. for every }(v_{F',D}^{(2)},v_{F',D'}^{(1)})\in\mathcal{A},~v_{F',D'}^{(1)}\in \mathcal{W}_{\ell-1}\}.
\end{equation}
Informally, $v_{F,D}^{(1)}\in \mathcal{U}$ if there is a move of Facilitator such that for every response of \divider, the obtained position is in $\mathcal{W}_{\ell-1}$, that is, Facilitator can win in at most $\ell-1$ steps from this position.

The correctness of computing $\mathcal{W}_\ell$ using (\ref{eq:backtrack}) and (\ref{eq:backtrack-two}) is proved by completely standard arguments and we leave this to the reader. Notice that given $\mathcal{W}_{\ell-1}$, (\ref{eq:backtrack}) and (\ref{eq:backtrack-two}) allow to compute $\mathcal{W}_\ell$ in $n^{\Oh(1)}$ time.

We compute the sets $\mathcal{W}_\ell$ consecutively starting from $\ell=0$ until we obtain $W_{\ell}=W_{\ell-1}$ for some $\ell\geq 1$. Observe that if $W_{\ell}=W_{\ell-1}$, then $W_{\ell'}=W_{\ell}$ for every $\ell'\geq \ell$. Notice also that we stop after at most $|\mathcal{V}_1|$ iterations, because $\mathcal{W}_\ell\subseteq \mathcal{V}_1$. This implies that all the sets $\mathcal{W}_\ell$ can be constructed in  $n^{\Oh(1)}$ time. Let $\mathcal{W}_{\ell^*}$ be the last constructed set.

To solve \probR for an instance $(G,s,t,k)$, it is sufficient to observe that $(G,s,t,k)$ is a yes-instance if and only if $v_{F,D}^{(1)}\in \mathcal{W}_{\ell^*}$ for $F=\{s,t\}$ and every $D\in\mathcal{D}_G^k$ that is compatible with $F$, that is, Facilitator can win starting from $s$ and $t$ for every choice the initial placement of the $k$ agents of \divider.

Solving \probRT is slightly more complicated, because the parameter $\tau$ is expected to be encoded in binary. Let $(G,s,t,k,\tau)$ be an instance of \probRT. If $\tau\geq \ell^*$, we observe that $(G,s,t,k,\tau)$ is a yes-instance of \probRT if and only if $(G,s,t,k)$ is a yes-instance of \probR. If $\tau<\ell^*$, then recall that we already constructed the set $\mathcal{W}_\tau$. Then $(G,s,t,k,\tau)$ is a yes-instance if and only if  $v_{F,D}^{(1)}\in \mathcal{W}_{\tau}$ for $F=\{s,t\}$ and every $D\in\mathcal{D}_G^k$ that is compatible with $F$.

Summarizing the running time of all the steps, we obtain that \probR and \probRT can be solved in $n^{\Oh(k)}$ time.
\end{proof}

We conclude this section by the observation that a strategy of \divider in Rendezvous Games with Adversaries for $\tau$ steps, that is, for \probRT, can be represented as a rooted tree of height $\tau$. Suppose that the game is played on a graph $G$, and $s$ and $t$ are initial positions of the agents of Facilitator. Let also $k$ be the number of agents of \divider. Suppose that \divider has a strategy defined by the family of functions $d_i\colon \mathcal{P}_G^k\rightarrow \mathcal{D}_G^k$ for $i\in\{0,\ldots,\tau-1\}$.
We define the tree $\mathcal{T}_G^k(\tau)$ such that every node $v$ of $\mathcal{T}_G^k(\tau)$ is associated with a position 
$P_v\in \mathcal{P}_G^k$
%with $V(\mathcal{T}_G^k(tau))\subseteq \mathcal{P}_G^k$ 
inductively starting from the root:
\begin{itemize}
\item $P_r=(\{s,t\},d_0(\{s,t\},\emptyset))$ is associated with the root $r$ of $\mathcal{T}_G^k(\tau)$.
\item for every node $v\in V(\mathcal{T}_G^k(\tau))$ with $P_v=(F,D)$ 
%$(D,F)\in V(\mathcal{T}_G^k(tau))$ 
at distance $i\leq\tau-1$ from the root, 
we construct a child $u$ of $v$ for every $(F',D')\in \mathcal{P}_G^k$ such that 
(i) $F'$ is adjacent to $F$ and compatible with $D$, and (ii) $D'=d_i(F',D)$, and associate $u$ with  $P_u=(F',D')$. 
\end{itemize}
In words, the children of every node correspond to all possible moves of Facilitator from the position $(D,F)$ and are the positions obtained by the responses of \divider. 
Observe that every node has at most $|\mathcal{F}|=\binom{n+1}{2}$ children.  Therefore, the total number of nodes of $\mathcal{T}_G^k(\tau)$ is at most $\binom{n+1}{2}^{\tau+1}$.   We use the following observation.

\begin{observation}\label{obs:win}
A strategy $\{d_i\mid 0\leq i\leq \tau-1\}$ is a winning strategy for \divider with $k$ agents in Rendezvous Games with Adversaries for $\tau$ steps on $G$ against Facilitator starting from $s$ and $t$ if and only if  $F$ is a set of two distinct vertices for every $P_v=(F,D)$ for  $v\in V(\mathcal{T}_G^k(\tau))$.
\end{observation}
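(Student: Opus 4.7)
The plan is to show that the tree $\mathcal{T}_G^k(\tau)$ exhaustively enumerates all plays of the game in which \divider follows the fixed strategy $\{d_i\mid 0\leq i\leq \tau-1\}$; then Facilitator can force a meeting within $\tau$ steps if and only if some node of the tree records a position in which $R$ and $J$ coincide, i.e., some $P_v=(F,D)$ has $F=\{x,x\}$.

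First I would establish, by induction on the depth $i\in\{0,1,\ldots,\tau\}$ of a node $v$, a one-to-one correspondence between depth-$i$ nodes of $\mathcal{T}_G^k(\tau)$ and sequences of $i$ legal Facilitator moves $F_0=\{s,t\},F_1,\ldots,F_i$ (each $F_j$ adjacent to $F_{j-1}$ and compatible with the \divider position that has evolved up to that point under $\{d_j\}$). The inductive claim is that at such a node the associated position is $P_v=(F_i,D_i)$, where $D_0=d_0(\{s,t\},\emptyset)$ and $D_j=d_{j-1}(F_j,D_{j-1})$ for $1\leq j\leq i$. This is immediate from the construction of $\mathcal{T}_G^k(\tau)$, since the children of $v$ are generated exactly for all Facilitator moves $F'$ adjacent to $F$ and compatible with $D$, with \divider's response fixed by $d_i$.

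Given this correspondence, both directions of the equivalence are routine. If $F$ is a set of two distinct vertices at every node, then for every strategy of Facilitator the induced root-to-leaf path never passes through a meeting position, so \divider wins in at most $\tau$ steps. Conversely, if some node $v$ at depth $i\leq \tau$ has $P_v=(\{x,x\},D)$, then the Facilitator moves $F_0,\ldots,F_i$ read off the root-to-$v$ path give a legal play against $\{d_i\}$ that ends in a meeting after $i\leq\tau$ steps; this partial sequence is trivially extended to a full strategy $\{f_1,\ldots,f_\tau\}$ (the behaviour after the meeting is irrelevant), contradicting the assumption that $\{d_i\}$ is winning.

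The only delicate point is to match the initial move of \divider with the root of the tree: the function $d_0$ is evaluated on the exceptional input $(\{s,t\},\emptyset)$, and the definition $P_r=(\{s,t\},d_0(\{s,t\},\emptyset))$ takes care of this cleanly. Beyond that, the argument is pure bookkeeping and carries no combinatorial obstacle; the statement is essentially a restatement of the game tree semantics in the language of $\mathcal{T}_G^k(\tau)$.
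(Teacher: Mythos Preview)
Your proposal is correct and is exactly the natural unpacking of why the observation holds; the paper itself states Observation~\ref{obs:win} without proof, treating it as immediate from the definition of $\mathcal{T}_G^k(\tau)$, so your argument simply makes explicit the game-tree semantics that the paper leaves implicit.
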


In particular, this allows to observe the following.

\begin{observation}\label{obs:co-np}
For every fixed constant $\tau$, the problem $\tau$-\probRT is in \classCoNP.
\end{observation}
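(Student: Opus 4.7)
The plan is to exploit Observation~\ref{obs:win} together with the polynomial bound on the size of the strategy tree $\mathcal{T}_G^k(\tau)$ when $\tau$ is constant. Since $\tau$-\probRT is a decision problem asking whether Facilitator wins, a \no-instance is exactly one where \divider has a winning strategy, so I need to exhibit a polynomial-size, polynomially verifiable certificate for \divider's victory.

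First I would take as the candidate certificate the strategy tree $\mathcal{T}_G^k(\tau)$ itself, along with the initial response $D_0 = d_0(\{s,t\},\emptyset)$ attached to the root. From the description preceding Observation~\ref{obs:win}, every internal node has at most $\binom{n+1}{2}$ children (one per admissible move of Facilitator), so the total number of nodes is at most $\binom{n+1}{2}^{\tau+1}$. For constant $\tau$ this is polynomial in $n$, so the certificate has polynomial bit-length (each node just stores a pair $(F,D) \in \mathcal{P}_G^k$, of size $\Oh(k\log n)$).

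Next I would describe the polynomial-time verifier. Given a purported tree, it checks: (i)~the root is labeled $(\{s,t\}, D_0)$ with $D_0$ compatible with $\{s,t\}$; (ii)~for each internal node labeled $(F,D)$ at depth $i<\tau$, its children enumerate \emph{all} pairs $(F', D')$ with $F'$ adjacent to $F$ and compatible with $D$, and each child's $D'$ is adjacent to $D$ and compatible with $F'$ (the adjacency tests run in polynomial time by Observation~\ref{obs:adj}, and enumerating admissible $F'$ reduces to listing the at most $\binom{n+1}{2}$ multisets of two vertices and filtering); and (iii)~for every node's label $(F,D)$, the multiset $F$ consists of two \emph{distinct} vertices. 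By Observation~\ref{obs:win}, condition~(iii) holding throughout the tree is equivalent to the encoded strategy being a winning strategy for \divider. All three checks run in time polynomial in the size of the tree, hence polynomial in $n$.

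The only mild subtlety I foresee is the bookkeeping at the root, where $d_0$ is defined separately from the remaining $d_i$'s, and the fact that a strategy is formally a function on \emph{all} of $\mathcal{P}_G^k$ whereas the tree only records its values on positions reachable along the play. Both are handled by observing that \divider's responses on unreachable positions are irrelevant to whether the strategy wins, so the tree with the root annotation is a complete record of the strategy's effective behavior. Putting these pieces together yields a nondeterministic polynomial-time algorithm for the complement of $\tau$-\probRT, establishing that $\tau$-\probRT lies in \classCoNP.
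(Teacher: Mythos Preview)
Your proposal is correct and follows essentially the same approach as the paper: use the strategy tree $\mathcal{T}_G^k(\tau)$ as a no-certificate, invoke the $n^{\Oh(\tau)}$ bound on its size, and verify it in polynomial time via Observations~\ref{obs:win} and~\ref{obs:adj}. Your write-up is in fact more thorough than the paper's own proof, spelling out the verifier's checks and the handling of the root and of unreachable positions.
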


\begin{proof}
If $(G,s,t,k)$ is a no-instance of $\tau$-\probRT, then \divider has a winning strategy that allows to prevent $R$ and $J$ from meeting in at most $\tau$ steps. Then the tree $\mathcal{T}_G^k(\tau)$ can be used as a certificate. Since the tree has $n^{\Oh(\tau)}$ nodes,
we can check whether a given tree encodes a winning strategy in polynomial time, using Observations~\ref{obs:win} and \ref{obs:adj}. 
\end{proof}

\section{Dynamic separation vs. separators}\label{sec:sep}
 In this section we investigate relations between $d_G(s,t)$ and $\lambda_G(s,t)$. Given a connected graph $G$ and two vertices $s$ and $t$, it is straightforward to see that $d_G(s,t)\leq \lambda_G(s,t)$.  Indeed, if $S\subseteq V(G)\setminus\{s,t\}$ is an $(s,t)$-separator of size $k=\lambda_G(s,t)$, then \divider with $k$ agents can put then in the vertices of $S$ in the beginning of the game. Then he can use the trivial strategy that keeps the agents $D_1,\ldots,D_k$ in their positions.   However, $d_G(s,t)$ and $\lambda_G(s,t)$ can be far apart. Still, $d_G(s,t)=1$ if and only if $\lambda_G(s,t)=1$, and this is the first result of the section.
 
\begin{theorem}\label{thm:sep-one}
Let $G$ be a connected graph and let $s,t\in V(G)$. Then $d_G(s,t)=1$ if and only if $\lambda_G(s,t)=1$.
\end{theorem}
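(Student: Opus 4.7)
The easy direction is immediate: if $\lambda_G(s,t)=1$, let $v$ be the single-vertex $(s,t)$-separator. The divider places his unique agent on $v$ and keeps it there; since $R$ and $J$ lie in different connected components of $G-v$ throughout the game, they can never meet, so $d_G(s,t)\le 1$. Conversely, $d_G(s,t)\ge 1$ because with no divider at all, Facilitator wins trivially by walking $R$ along some $(s,t)$-path to $t$ to meet $J$. Hence $d_G(s,t)=1$.

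For the hard direction I argue the contrapositive: assuming $\lambda_G(s,t)\ge 2$, I exhibit a winning strategy for Facilitator against one divider, establishing $d_G(s,t)\ge 2$. By Menger's theorem there are two internally vertex-disjoint $(s,t)$-paths $P_1,P_2$, whose union is a cycle $C$ through $s$ and $t$ whose two arcs are precisely $P_1$ and $P_2$. Facilitator's plan is to keep both $R$ and $J$ on $V(C)$ at all times and play as follows: at each of his turns he looks at the two arcs $\alpha,\beta$ of $C$ between the current positions of $R$ and $J$, identifies one whose interior does not contain the divider (breaking ties by choosing the shorter), and moves $R$ and $J$ each one step along that arc toward each other. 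The critical observation is that while the divider stays on $C$ it cannot walk from the interior of $\alpha$ to the interior of $\beta$ without passing through $R$ or $J$, which is forbidden; so the divider is confined to its current arc and the chosen arc shrinks by two at every Facilitator turn. When $G$ coincides with $C$ this alone already forces a meeting in at most $\lceil|V(C)|/2\rceil$ turns.

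The main obstacle is the case where the divider steps off $C$ and re-enters on the opposite arc through edges or vertices outside $C$, effectively switching sides. I would handle this by letting Facilitator dynamically re-select the divider-free arc at each turn, and argue via a potential function (say, the length of the currently chosen divider-free arc, or $d_{G-\{D\}}(R,J)$) that the divider's off-cycle excursions are not for free: any shortcut of length $\ell$ from one arc to the other costs the divider $\ell$ moves, during which Facilitator gains $\ell$ steps of progress on the complementary arc, because $R$ and $J$ can always advance one step each along the arc that the divider is currently vacating. Formalising this ``one divider cannot occupy both arcs at once'' accounting and showing that the potential strictly decreases over any constant-length window of turns is, I expect, the main technical step of the proof.
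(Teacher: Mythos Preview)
Your easy direction is correct. For the hard direction you correctly locate the crux—the divider leaving $C$ and re-entering on the opposite arc—but the resolution you sketch does not work, and this is a genuine gap rather than a detail left to be formalised.

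Here is a concrete counterexample to your strategy. Let $C=v_0v_1\cdots v_{99}$ with $s=v_0$, $t=v_{50}$, and add the single chord $v_{25}v_{75}$. Place the divider at $v_{75}$. Your rule moves $R\to v_1$, $J\to v_{49}$ along the divider-free arc; the divider then takes the chord to $v_{25}$, landing strictly between $R$ and $J$. Your rule now selects the other arc (of length $52$) and moves $R\to v_0$, $J\to v_{50}$; the divider jumps back to $v_{75}$. The position repeats exactly, so play loops forever. Neither potential you propose helps: the divider-free arc length oscillates $50,52,50,\ldots$, and $\dist_{G-\{D\}}(R,J)$ does the same. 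The slogan ``a shortcut of length $\ell$ costs the divider $\ell$ moves while Facilitator gains $\ell$'' is simply false when the chord is short relative to the arcs: here the divider pays one step and forces Facilitator to undo two.

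The paper handles this by abandoning the fixed Menger cycle. It calls a $(u,v)$-path in $G-(V(C)\setminus\{u,v\})$ that is shorter than both $(u,v)$-arcs of $C$ a \emph{shortcut}, and proves two claims. Claim~1: whenever the current cycle has a shortcut, Facilitator can, regardless of the divider, relocate $R$ and $J$ onto a strictly shorter cycle—by sending $R$ and $J$ toward the two shortcut endpoints $u,v$ (not toward each other); the single divider can block at most one of them. Claim~2: on a cycle \emph{without} shortcuts, your intuition becomes a theorem—if the divider leaves $C$ at $z$ and re-enters at $z'$ after $h$ moves, the no-shortcut hypothesis forces one $(z,z')$-arc of $C$ to have length at most $h$, and this is exactly what guarantees $z'$ lands outside the interval between $R$ and $J$, who have each advanced $h$ steps in the meantime. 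Iterating Claim~1 until the cycle is shortcut-free and then applying Claim~2 finishes the proof. The idea you are missing is precisely this reduction to shortcut-free cycles; a single-cycle potential argument cannot succeed in general, as the example above shows.
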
 
 
\begin{proof}
As we already observed, $d_G(s,t)\leq \lambda_G(s,t)$. Hence, if $\lambda_G(s,t)=1$, then $d_G(s,t)=1$. This means that it is sufficient to show that if $d_G(s,t)=1$, then $\lambda_G(s,t)=1$. We prove this by contradiction. Assume that $\lambda_G(s,t)\geq 2$. We show that Facilitator  has a winning strategy when starting from $s$ and $t$ on $G$ against  \divider with one agent. 

Let $C$ be a cycle in $G$. For every two distinct vertices $u$ and $v$ of $C$, $C$ has two internally vertex disjoint $(u,v)$-paths $P_1$ and $P_2$ in $C$. We say that $C$ has a \emph{$(u,v)$-shortcut} if there is a $(u,v)$-path $P$ in $G-(V(C)\setminus\{u,v\})$ that is shorter than $P_1$ and $P_2$. That is,  $\ell(P)<\ell(P_1)$ and $\ell(P)<\ell(P_2)$. We say that $C$ has a \emph{shortcut} if there are 
distinct $u,v\in V(C)$ that have a $(u,v)$-shortcut. 

We   claim the following.

\begin{claim}\label{cl:step-one}
If $R$ and $J$   occupy vertices of a cycle $C$ of $G$ that has a shortcut, then Facilitator has a strategy such that in at most $\ell(C)$ steps
%, either $R$ and $J$ meet or 
$R$ and $J$ are moved into vertices of a cycle $C'$ with $\ell(C')<\ell(C)$.
\end{claim}

\begin{proof}[Proof of Claim~\ref{cl:step-one}]
Suppose that $R$ and $J$  occupy  vertices $x$ and $y$ of $C$, respectively. 
Assume that a path $P$ is a $(u,v)$-shortcut for some distinct $u,v\in V(G)$. Denote by $P_1$ and $P_2$, respectively, the internally vertex disjoint $(u,v)$-paths in $C$. Let $C_1$ be the cycle of $G$ composed by $P_1$ and $P$, and let $C_2$ be the cycle composed by $P_2$ and $P$. Because $P$ is a shortcut for $C$, we have that $\ell(C_1)<\ell(C)$ and $\ell(C_2)<\ell(C)$.  If $x,y\in V(P_1)$, then $x,y\in V(C_1)$ and the claim holds trivially, since $R$ and $J$ are already 
on cycle $C_1$ with $\ell(C_1)<\ell(C)$. Symmetrically, if $x,y\in V(P_2)$, then the claim holds. Assume that this is not the case. Then $x$ and $y$ are internal vertices of $P_1$ and $P_2$ belonging to distinct paths. We assume without loss of generality that $x$ is an internal vertex of $P_1$ and $y$ is an internal vertex of $P_2$.   

Facilitator uses the following strategy. In each step, $R$ is moved along $P_1$ toward $u$, unless the next vertex is occupied by $D_1$. In the last case, $R$ stays in the current position. Similarly,  $J$  moves toward $v$ in $P_2$ whenever this is possible and stays in the current position if the way is blocked. Notice that, since the unique agent $D_1$  of \divider occupies a unique vertex in each step, at least one of the agents $R$ or $J$  moves to an adjacent vertex. Therefore, either $R$ reaches $u$ or $J$ reaches $v$ in at most $\ell(C)$ steps. If $R$ is in $u$, then $R$ and $J$ are in the vertices of $C_2$ and $\ell(C_2)<\ell(C)$. Symmetrically, if $J$ reaches $v$, then $R$ and $J$ reach  $C_1$ with $\ell(C_1)<\ell(C)$. 
\end{proof}

Next, we show that Facilitator can win if $R$ and $J$ are in a cycle without shortcuts and $D_1$ is in the same cycle.

\begin{claim}\label{cl:step-two}
If $R$ and $J$ occupy  vertices of a cycle $C$ of $G$ without a shortcut, and the unique agent $D_1$ of \divider is in a vertex of $C$ as well,
 then Facilitator has a winning strategy with at most $\ell(C)/2$ steps.
\end{claim}

\begin{proof}[Proof of Claim~\ref{cl:step-two}]
Suppose that $R$ and $J$  occupy vertices $x$ and $y$ of $C$, respectively, and that  $D_1$ occupies  $z\in V(C)$.  Denote by $P$ the unique $(x,y)$-path in $C-z$. Facilitator uses the following strategy. In every step, $R$ and $J$  move towards each other along $P$ except if they appear to occupy  adjacent vertices. In the last case, $R$ stays and $J$ moves to the vertex occupied by $R$. We show that this strategy is a feasible winning strategy. 

The proof is by induction on the length of $P$. The claim is trivial when $\ell(P)\leq 2$. Assume that $\ell(P)\geq 3$ and the claim holds for all positions $x'$, $y'$ and $z'$ of $R$, $J$ and $D_1$, respectively, if the length of the $(x',y')$-path in $C-z'$ is at most $\ell(P)-1$. 

In the first step, $R$ and $J$   move  to the neighbors  $x'$ and $y'$ of $x$ and $y$, respectively, in $P$. If $D_1$  moves to a vertex $z'\in V(C)$, then we apply the inductive assumption and, since the length of the $(x',y')$-subpath $P'$ is  $\ell(P)-2$ and $z'\notin V(P')$, obtain that the strategy of Facilitator is winning. Assume that by the first move \divider removes $D_1$ from  $C$. If $D_1$ does not return to a vertex of $C$ in $\ell(P)/2$ steps, Facilitator wins. Hence for some $h\leq \ell(P)/2$, at the $h$-th move,  $D_1$ steps back on  a vertex $z'\in V(C)$. 

\begin{figure}[ht]
\centering
\scalebox{0.7}{
\input{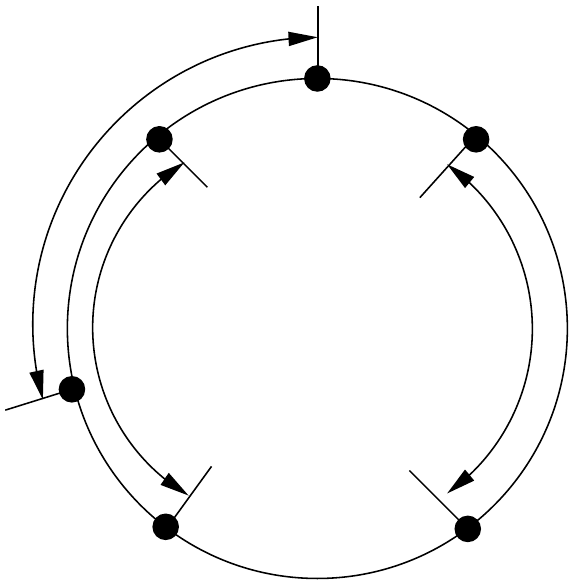_t}}
\caption{The position placement after $h$ steps (up to symmetry).}
\label{fig:strategy}
\end{figure}

By the assumption, cycle $C$ has no shortcuts. In particular, there is no $(z,z')$-shortcut. This implies, that the length of one of the two $(z,z')$-paths in $C$ is at most $h$.  Observe that in $h$ steps, $R$ and $J$  reach vertices $x''$ and $y''$ that are at the distance $h$ in $P$ from $x$ and $y$, respectively.   Therefore (see Figure~\ref{fig:strategy}), the $(x'',y'')$-subpath $P''$ of $P$ does not contain $z'$. Since $\ell(P'')<\ell(P)$, we can apply the inductive assumption. 
This  proves that the Facilitator's strategy  is a feasible winning strategy and the claim holds. 
  
Notice that the total number of steps is $\lceil \ell(P)/2\rceil\leq \ell(C)/2$. This completes the proof.  
\end{proof}

Now we are ready to complete the proof of the theorem. 
If $s=t$ or $s$ and $t$ are adjacent, then Facilitator has a straightforward winning strategy. Assume that $s$ and $t$ are distinct and nonadjacent.
Since $\lambda_G(s,t)\geq 2$, by  Menger's theorem (see, e.g., \cite{Diestel12}), there are two internally vertex disjoint $(s,t)$-paths $P_1$ and $P_2$. The union of   these two paths forms cycle $C$. %Initially, $R$ and $J$  occupy    vertices $s$ and $t$ of $C$.  
%If $C$ has a shortcut, then Facilitator iteratively uses the strategy from Claim~\ref{cl:step-one}, until $R$ and $J$ reach   %vertices $x$ and $y$ of 
%a cycle $C'$ that has no shortcuts. Since  the length of the  cycle is reduced after each iteration of the strategy, this aim is achieved after at most $n$  iterations. If $C$ has no shortcut, then $C'=C$, $x=s$ and $y=t$. 
  If the agent $D_1$ of \divider   occupies a vertex of $C'$, then Facilitator has a winning strategy by Claim~\ref{cl:step-two}.  If $D_1$ is outside $C'$, then Facilitator moves $R$ and $J$ along $C'$ towards each other. Then either $R$ and $J$ meet or $D_1$ steps on  $C'$  at some moment. In this  case, Facilitator switches to the strategy from Claim~\ref{cl:step-two} that guarantees him to win.
\end{proof} 
 
We observed that $d_G(s,t)\leq \lambda_G(s,t)$ and, by Theorem~\ref{thm:sep-one}, $d_G(s,t)=1$ if and only if $\lambda_G(s,t)=1$. However, if $d_G(s,t)\geq 2$, then the difference betweem $\lambda_G(s,t)$ and $d_G(s,t)$ may be arbitrary. To see this, consider the following example.

\medskip
Let $p\geq 2$. 
\begin{itemize}
\item Construct a set $U=\{u_1,\ldots,u_p\}$ of pairwise adjacent vertices.  
\item Add a vertex $s$ and join $s$ with each vertex $u_i\in U$ by a path $sx_iu_i$.
\item Add  a vertex $t$ and join $t$ with each vertex $u_i\in U$ by a path $ty_iu_i$.
\end{itemize}
Denote the obtained graph by $G$ (see the left part of Figure~\ref{fig:examples}).
Observe that $\lambda_G(s,t)=p$. We show that $d_G(s,t)=2$ by demonstrating a winning strategy for \divider with two agents $D_1$ and $D_2$. Initially, $D_1$ and $D_2$ are placed in arbitrary vertices of the clique $U$. Then $D_1$ ``shadows'' $R$ and $D_2$ ``shadows'' $J$ in $U$ in the following sense. If $R$  moves to $x_i$ for some $i\in\{1,\ldots,p\}$, \divider responds by moving $D_1$ to $u_i$. Symmetrically,  if $R$ moves to $y_j$ for some $j\in\{1,\ldots,p\}$, then $D_2$ is moved to $u_j$. It is easy to verify that if \divider follows this strategy, then neither $R$ no $J$ can enter $U$. Therefore, \divider wins. Since $p$ can be arbitrary, we have 
that $\lambda_G(s,t)-d_G(s,t)=p-2$ can be arbitrary large.  
 
\begin{figure}[ht]
\centering
\scalebox{0.75}{
\input{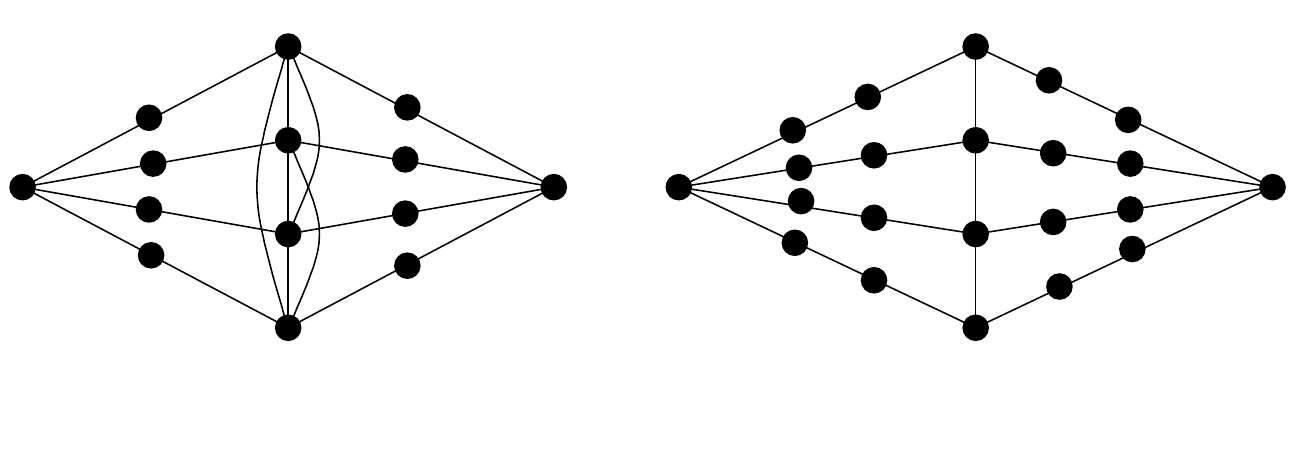_t}}
\caption{The construction of $G$ and $H$ for $p=4$.}
\label{fig:examples}
\end{figure} 

The family of graphs $G$ for $p\geq 2$ in the above example is a family of dense graph, because $G$ contains a clique with $p$ vertices. However, exploiting the same idea as for $G$, we can show that there are sparse graphs with the same property. For this, we considered the following more complicated example.

\medskip
Let $p\geq 2$. 
\begin{itemize}
\item Construct a path $P=u_1\cdots u_p$ on $p$ vertices.  
\item Add a vertex $s$ and join $s$ with each vertex $u_i\in V(P)$ by an $(s,u_i)$-path $P_i$ of length $h=\lfloor p/2\rfloor+1$.
\item Add  a vertex $t$ and join $t$ with each vertex $u_i\in V(P)$ by an $(t,u_i)$-path $P_j'$ of length $h=\lfloor p/2\rfloor+1$.
\end{itemize}
Denote the obtained graph by $H$ (see the right part of Figure~\ref{fig:examples}). Clearly,  $\lambda_H(s,t)=p$. 
We claim that $d_H(s,t)=p$. The idea behind the winning strategy for \divider with two agents $D_1$ and $D_2$ is similar to the one from the first example:
 $D_1$ ``shadows'' $R$ and $D_2$ ``shadows'' $J$ on $P$. Let $w=u_{\lfloor p/2\rfloor}$. Initially, $D_1$ and $D_2$ are placed in $w$. Then $D_1$ is moved as follows. If $R$ moves to/stays in  $s$, then $D_1$ moves to/stays in $w$. If $R$ is moved into an internal vertex $x$ of $P_i$ for some $i\in\{1,\ldots,p\}$, then \divider responds my moving $D_1$ toward $u_i$ or keeping $D_1$ in the current position maintaining the following condition: $D_1$ is in a vertex $u_j$ at minimum distance from $w$  such that the distance between $x$ and $u_i$ in $P_i$ is more than the distance between $u_j$ and $u_i$ in $P$. The construction of the strategy for $D_2$ is symmetric.  It is easy to see that the described strategy for \divider is feasible, and the strategy allows neither $R$ no $J$ to enter a vertex of $P$. Therefore, $d_H(s,t)=2$. 
 
 Notice that the graph $H$ for each $p\geq 2$ is planar and it can be seen that the treewidth of $H$ is at most 3 (we refer to \cite{CyganFKLMPPS15,Diestel12} for the formal treewidth definition), that is,  graphs $H$ are, indeed, sparse. 
 
Our examples indicate that $\lambda_G(s,t)$ may differ from $d_G(s,t)$ if $G$ has sufficiently long induced paths and cycles. We observe that 
 $\lambda_G(s,t)=d_G(s,t)$ if $G$ belongs to graph classes that have no graphs of this type.

A graph $G$ is $\emph{$P_5$-free}$ if $G$ has no induced subgraph isomorphic to the path with 5 vertices. 

\begin{proposition}\label{ref:p-five}
If $G$ is a connected $P_5$-free graph, then for every $s,t\in V(G)$, $d_G(s,t)=\lambda_G(s,t)$.
\end{proposition}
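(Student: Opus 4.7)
The plan is to prove $d_G(s,t)\ge k$, where $k=\lambda_G(s,t)$; the reverse inequality $d_G(s,t)\le \lambda_G(s,t)$ holds in general, since Divider can statically sit on a minimum $(s,t)$-separator. Assuming $s\ne t$ and $st\notin E(G)$ (the remaining cases give $d_G(s,t)=\lambda_G(s,t)=+\infty$), I would construct a Facilitator strategy that beats any configuration of $k-1$ Divider agents in at most two moves, combining Menger's theorem with the $P_5$-freeness of $G$.

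The first step is to extract $k$ internally vertex-disjoint $(s,t)$-paths that are all \emph{short}, that is, of length $2$ or $3$. I would start with any $k$ internally vertex-disjoint $(s,t)$-paths $Q_1,\ldots,Q_k$ furnished by Menger's theorem. If some $Q_i$ has length at least $4$, pick any five consecutive vertices $v_0,v_1,v_2,v_3,v_4$ on $Q_i$: the subgraph they induce cannot be isomorphic to $P_5$, yet the four consecutive edges $v_0v_1, v_1v_2, v_2v_3, v_3v_4$ are present, forcing at least one extra edge $v_j v_{j'}$ with $|j-j'|\ge 2$. Rerouting $Q_i$ through that chord yields a strictly shorter $(s,t)$-path whose vertex set is a proper subset of $V(Q_i)$, and this preserves internal vertex-disjointness across the family. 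Iterating gives paths $P_1,\ldots,P_k$, each of length $2$ or $3$.

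For the endgame I would apply pigeonhole: the internal vertex sets $I_1,\ldots,I_k$ of the $P_i$'s are pairwise disjoint and Divider has only $k-1$ agents, so at least one $P_i$ has $I_i$ entirely free of Divider agents. If $P_i$ has length $2$ with unique internal vertex $c$, Facilitator moves both $R$ and $J$ to $c$ and wins in one step. If $P_i$ has length $3$ with internal vertices $a$ adjacent to $s$ and $b$ adjacent to $t$, Facilitator moves $R\to a$ and $J\to b$, both of which are free; the crucial point is that after this move Divider cannot move onto $a$ or $b$, since those vertices are now occupied by Facilitator's own agents, so on the next turn Facilitator moves $R\to b$ along the edge $ab$ and meets $J$ there, winning in two moves.

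The main obstacle is the shortening step: I need to verify that rerouting $Q_i$ through a chord really produces an $(s,t)$-path whose vertex set is contained in $V(Q_i)$, so that the revised family remains internally vertex-disjoint, and that the iteration terminates. Both facts are immediate once spelled out --- the rerouting replaces a contiguous sub-walk of $Q_i$ by a shorter one on a strict subset of its vertices, and the path length is a strictly decreasing nonnegative integer. Once this short-paths lemma is in hand, the remainder of the argument, namely pigeonhole plus a two-move finishing tactic, is elementary and purely combinatorial, so essentially all of the $P_5$-freeness of $G$ is spent on bounding path lengths.
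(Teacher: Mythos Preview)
Your proof is correct. The core idea---that in a $P_5$-free graph all induced $(s,t)$-paths have length at most $3$, so Facilitator can win in at most two steps along any such path whose interior is unguarded---matches the paper's, but the two executions differ in a useful way. You work constructively: invoke Menger to produce $k=\lambda_G(s,t)$ internally vertex-disjoint paths, iteratively shorten each by chord-rerouting (this is where $P_5$-freeness enters), and then pigeonhole against the $k-1$ Divider agents. The paper instead argues contrapositively and avoids Menger and the shortening iteration entirely: since every \emph{induced} $(s,t)$-path already has length at most $3$, Divider's initial placement must meet the interior of every induced $(s,t)$-path (otherwise Facilitator wins along it in at most two steps), hence that placement is an $(s,t)$-separator and $k\ge\lambda_G(s,t)$. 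The paper's version is shorter and sidesteps the bookkeeping of maintaining internal disjointness during shortening; your version has the merit of making the winning strategy for Facilitator fully explicit and of quantifying that the win happens within two moves.
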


\begin{proof}
Let $G$ be a $P_5$-free graph and let $s,t\in V(G)$. The statement is trivial if $s=t$ or $s$ and $t$ are adjacent. Assume that $s$ and $t$ are distinct nonadjacent vertices. 
Since $d_G(s,t)\leq \lambda_G(s,t)$, it is sufficient to show the opposite inequality. We prove that if \divider with $k$ agents has a winning strategy on $G$ against Facilitator starting from $s$ and $t$, then $\lambda_G(s,t)\geq k$. Let $P$ be an induced $(s,t)$-path. Since $G$ is $P_5$-free, $\ell(P)\leq 3$. Suppose  that $\ell(P)=2$, that is $P=sxt$ for some $x\in V(G)$. Then \divider has to place an agent in $x$ in the beginning of the game. Otherwise, $R$ and $J$ move to $x$ in the first step and Facilitator wins. 
Suppose that $\ell(P)=3$, that is, $P=sxyt$ for some $x,y\in V(G)$. Observe that \divider has to place an agent in $x$ or $y$ in the beginning of the game.
Otherwise, $R$ moved to $x$, $J$ moved to $y$, and Facilitator wins in the next step, as $x$ and $y$ are adjacent. Let $S$ be the set of vertices containing the agents of \divider in the beginning of the game. Clearly, $k\geq |S|$. We have that every induced $(s,t)$-path $P$ contains a vertex of $S$. This means that $S$ is an  $(s,t)$-separator. Therefore, $k\geq |S|\geq\lambda_G(s,t)$ and the claim follows.
\end{proof}

A graph $G$ is \emph{chordal} if $G$ does not contain induced cycles on at least 4 vertices, that is, if $C$ is a cycle in $G$ of length at least 4, then there is a \emph{chord}, i.e., an edge of $G$ with end-vertices  in two nonconsecutive vertices of $C$. We need some properties of chordal graphs (we refer to~\cite{BrandstadtLS99,Golumbic04} for the detailed introduction). 

It follows from the results of Gavril~\cite{Gavril74} that a graph $G$ is chordal if and only of it has a tree decomposition with every bag being a clique. Formally,
$G$ is a chordal graph if and only if there is a pair $\mathcal{T}=(T,\{X_i\}_{i\in V(T)})$, where $T$ is a tree whose every node $i$ is assigned a vertex subset $X_i\subseteq V(G)$, called a \emph{bag}, such that $X_i$ is a clique and the following holds:
\begin{itemize}
\item[(i)] $\bigcup_{i\in V(T)} X_i =V(G)$,
\item[(ii)] for every $uv\in E(G)$, there exists a node $i$ of $T$ such that $u,v\in X_i$, and
\item[(iii)] for every  $u\in V(G)$, the set $T_u = \{i\in V(T) | u\in X_i\}$, i.e., the set of nodes whose corresponding  bags contain $u$, induces
a connected subtree of $T$.
\end{itemize}
We use the following  well-known property of tree decomposition (see, e.g.,  \cite{CyganFKLMPPS15,Diestel12}). Assume that $G$ is connected. Let  $x,y\in V(T)$ be adjacent nodes of $T$ with $S=X_x\cap X_y$, 
and let $T_1$ and $T_2$ be the connected components of $T-xy$. Then for every $u\in \big(\bigcup_{i\in V(T_1)}X_i\big)\setminus S$ and every $v\in\big( \bigcup_{i\in V(T_2)}X_i\big)\setminus S$, $S$ is a $(u,v)$-separator.

\begin{proposition}\label{ref:chord}
If $G$ is a connected chordal graph, then for every $s,t\in V(G)$, $d_G(s,t)=\lambda_G(s,t)$.
\end{proposition}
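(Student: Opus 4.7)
The plan is to establish $d_G(s,t)\ge \lambda_G(s,t)$; combined with the already-observed $d_G(s,t)\le \lambda_G(s,t)$, this yields the equality. Equivalently, I will show that whenever Divider has only $k<\lambda_G(s,t)$ agents, Facilitator wins on $G$ starting from $s$ and $t$. The cases $s=t$ and $st\in E(G)$ are trivial, so I may assume $s$ and $t$ are distinct and non-adjacent; in particular, for any clique tree, the subtrees $T_s=\{i:s\in X_i\}$ and $T_t=\{i:t\in X_i\}$ are disjoint, else $s,t$ would share a clique bag and be adjacent.

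First I would fix a clique tree decomposition $\mathcal{T}=(T,\{X_i\})$ of $G$, which exists by the Gavril characterization recalled above, and pick the unique shortest path $j_0 j_1\cdots j_\ell$ in $T$ from $T_s$ to $T_t$, with $j_0\in T_s$, $j_\ell\in T_t$ and $\ell\ge 1$. Setting $S_a=X_{j_{a-1}}\cap X_{j_a}$ for $a\in\{1,\ldots,\ell\}$, the minimality of the tree path forces $s\notin X_{j_a}$ for $a\ge 1$ and $t\notin X_{j_a}$ for $a\le\ell-1$, so the tree-decomposition separator property quoted in the text implies that each $S_a$ is an $(s,t)$-separator of $G$. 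Consequently $|S_a|\ge \lambda_G(s,t)>k$ for every $a$, which is the only quantitative fact the strategy below will need.

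With this structure in hand, Facilitator's strategy is to keep $J$ stationary at $t$ throughout the game and to slide $R$ one bag at a time along the tree path, maintaining the invariant that after Facilitator's $i$-th move, $R$ occupies some vertex of $X_{j_i}$. For the $(i+1)$-th move, with $R$ at some $u\in X_{j_i}$, Facilitator picks any $v\in S_{i+1}$ not currently occupied by a Divider agent; such a $v$ exists because $|S_{i+1}|>k$. The move $R\mapsto v$ is legal since $u$ and $v$ both lie in the common clique $X_{j_i}$, and the invariant is preserved because $v\in X_{j_{i+1}}$. After at most $\ell$ such rounds, $R$ and $J=t$ inhabit the common clique $X_{j_\ell}$, so one final Facilitator move sliding $J$ onto $R$'s vertex --- which Divider cannot occupy, as $R$ is there --- achieves the rendezvous.

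The place I expect to need the most care is verifying that the proposed move of $R$ is simultaneously (a) adjacent to or equal to $R$'s current position, (b) located in the next bag of the tree path, and (c) not blocked by any Divider agent. Points (a) and (b) come for free from the clique property of $X_{j_i}$ together with $S_{i+1}\subseteq X_{j_i}\cap X_{j_{i+1}}$; point (c) is exactly where chordality is used, via the separator inequality $|S_{i+1}|\ge \lambda_G(s,t)>k$. A minor bookkeeping point worth double-checking is that the final move of $J$ is well-defined even when $R$'s landing vertex lies outside $\{t\}$: this holds because $v\in S_\ell$ but $t\notin S_\ell$ (as $t\notin X_{j_{\ell-1}}$), so $v\ne t$ and $v$ is genuinely a clique-neighbour of $t$ in $X_{j_\ell}$.
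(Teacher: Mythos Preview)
Your proof is correct and follows essentially the same approach as the paper's: both fix a clique tree, take the shortest tree-path between a bag containing $s$ and one containing $t$, observe that the adjacent-bag intersections along this path are $(s,t)$-separators of size exceeding $k$, and slide $R$ through these intersections one clique at a time while $J$ waits at $t$. The only cosmetic difference is the final step---the paper moves $R$ directly onto $t$ (legal since $J$, not Divider, occupies $t$), whereas you move $J$ onto $R$'s final landing vertex; both variants work, and your extra bookkeeping that $v\ne t$ is in fact unnecessary since $R$ landing on $t$ would already be a rendezvous.
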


\begin{proof}
Let $G$ be a chordal graph and let $s,t\in V(G)$. %If $s=t$, then the claim is trivial. 
As before, let us notice
that the proposition is trivial if $s=t$ or $s$ and $t$ are adjacent, and we assume that $s$ and $t$ are distinct nonadjacent vertices. 
Recall also that  it is sufficient to show that $\lambda_G(k)\leq d_G(s,t)$. We prove that Facilitator has a winning strategy against \divider with $k$ agents if $k< \lambda_G(s,t)$. For this, we show that $R$ can reach $t$ occupied by $J$.

Since $G$ is a chordal graph, there is a tree decomposition $\mathcal{T}=(T,\{X_i\}_{t\in V(T)})$ of $G$ such that every bag $X_i$ is a clique. 
Let $i,j\in V(T)$ be nodes of $T$ such that $s\in X_i$, $t\in X_j$ and the $(i,j)$-path $P$ in $T$ has minimum length. Since $s$ and $t$ are nonadjacent, $i\neq j$.   
Let $P=i_1\cdots i_r$, where $i=i_1$ and $j=s_r$, and let $S_h=X_{i_{h-1}}\cap X_{i_h}$ for $h\in\{2,\ldots,r\}$. 
By the choice of $P$, $s\in X_i\setminus X_{i_h}$ for $h\in\{2,\ldots,r\}$ and
$t\notin X_j\setminus X_{i_h}$ for $h\in\{1,\ldots,r-1\}$. By the properties of tree decompositions, we obtain that $S_2,\ldots,S_r$ are $(s,t)$-separators. Since $\lambda_G(s,t)>k$, we have that $|S_h|>k$ for every $h\in\{2,\ldots,r\}$.

We describe the strategy of Facilitator, where $R$ is moved from $s$ to $t$ via vertices of $S_2,\ldots,S_r$. Since $|S_2|>k$, there is a vertex $v\in S_2$ that is not occupied by the agents of \divider. By the first move, Facilitator moves $R$ from $s$ to $v$. Assume now that $R$ is in a vertex $v\in S_h$ for some $h\in\{2,\ldots,r\}$. If $h=r$, then $R$ is moved to $t$. Otherwise, if $h<r$, then since $|S_{h+1}|>k$, there is $v'\in S_{h+1}$ that is not occupied by the agents of \divider. Then Facilitator either keeps $R$
 in $v$ if $v'=v$ or moves $R$ from $v$ to $v'$ otherwise.  Note that $v$ and $v'$ that are adjacent in the last case, because $S_{h},S_{h+1}\subseteq X_{i_{h+1}}$. Then we proceed from $v'$. It follows that $R$ reaches $t$ in $r$ steps. This completes the proof.
\end{proof}

The {\em chordality} of a graph is biggest smallest size of a induced cycle in it.
Clearly, chordal graphs are the graphs of chordality three. It is natural to ask whether 
for graphs of bigger chordality   the difference betweem $\lambda_G(s,t)$ and $d_G(s,t)$ may be arbitrary. In the example we gave after the Proof of Claim~\ref{cl:step-two}, we have seen that, for the graph $G$ (depicted in the left part of Figure~\ref{fig:examples}), it holds  
that $\lambda_G(s,t)-d_G(s,t)=p-2$ and is easy to see that any such $G$ has chordality five. Notice that this graph $G$ can be further enhanced so to obtain chordality four: just add a clique between the vertices in $\{x_{1},\ldots,x_{p}\}$
and a   a clique between the vertices in $\{y_{1},\ldots,y_{p}\}$.
This indicates a sharp transition of $d_{G}$ away from $\lambda_{G}$
when graphs are not chordal any more.
 
 Since $\lambda_G(s,t)$ can be computed in polynomial time by the standard maximum flow algorithms (see, e.g., the recent textbook~\cite{Williamson19}), we obtain the following corollary.
 
 \begin{corollary}\label{cor:p-five-cord}
 \probR can be solved in polynomial time on the classes of $P_5$-free and chordal graphs.
 \end{corollary}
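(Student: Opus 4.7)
The plan is to reduce \probR on these two graph classes to computing a minimum vertex cut, using the two preceding propositions as black boxes. By the definition of the game, an instance $(G,s,t,k)$ of \probR is a yes-instance precisely when Facilitator wins, i.e., when $d_G(s,t)>k$. Propositions~\ref{ref:p-five} and~\ref{ref:chord} assert that on $P_5$-free graphs and on chordal graphs the identity $d_G(s,t)=\lambda_G(s,t)$ holds. Hence, on these classes, the question ``$d_G(s,t)>k$?'' is the same as ``$\lambda_G(s,t)>k$?''.

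First I would handle the trivial boundary cases $s=t$ and $st\in E(G)$: both $d_G(s,t)$ and $\lambda_G(s,t)$ equal $+\infty$ by our conventions, so Facilitator trivially wins for every $k$, and the algorithm outputs yes. Otherwise, $\lambda_G(s,t)$ equals the minimum number of internal vertices separating $s$ from $t$, and is computable in polynomial time by the standard reduction to a maximum $(s,t)$-flow on the vertex-split digraph with unit internal vertex capacities, using any textbook flow algorithm (as cited in the excerpt via \cite{Williamson19}). Comparing the computed value with $k$ yields the answer.

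Finally, one can either regard the corollary as a promise result, or, if a self-contained algorithm is desired, prepend a polynomial-time recognition step for $P_5$-freeness (brute-force enumeration of $5$-tuples) or chordality (e.g., via a perfect elimination ordering). The main ``obstacle'' is not really an obstacle: all the game-theoretic work has already been done in Propositions~\ref{ref:p-five} and~\ref{ref:chord}, and the only remaining ingredient is the classical polynomial-time algorithm for vertex connectivity.
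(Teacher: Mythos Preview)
Your proposal is correct and follows essentially the same approach as the paper: the paper simply observes that, by Propositions~\ref{ref:p-five} and~\ref{ref:chord}, $d_G(s,t)=\lambda_G(s,t)$ on these classes, and that $\lambda_G(s,t)$ is computable in polynomial time via standard maximum-flow algorithms. Your added remarks on boundary cases and class recognition are fine but not needed for the paper's statement.
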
 
 
 \section{Hardness of Rendezvous Game with Adversaries}\label{sec:hardness}
 In this section, we discuss algorithmic lower bounds for \probR and \probRT.
 
 We proved in Theorem~\ref{thm:backtrack} that \probR and \probRT can be solved in $n^{\Oh(k)}$. We show that it is unlikely that the dependance on $k$ can be improved. For this, we show that both problems are \classCoW{2}-hard (i.e, it is \classW{2}-had to decide whether the input is a no-instance; in fact, we show that it is \classW{2}-hard to decide whether $d_G(s,t)\leq k$) and, therefore, cannot be solved in time $f(k)\cdot n^{\Oh(1)}$ for any  computable function $f(k)$, unless $\classFPT=\classW{2}$; the result for \probRT holds even also for  $\tau$-\probRT when  $\tau\geq 2$.  Our proof also implies that neither 
 \probR nor $\tau$-\probRT, for $\tau\geq 2$, cannot be solved in time $f(k)\cdot  n^{o(k)}$ unless $\classETH$ fails.
%unless Exponential Time Hypothesis (ETH) of Impagliazzo, Paturi and Zane~\cite{ImpagliazzoPZ01} is false. 
 %We refer to the book of Cygan et al.~\cite{CyganFKLMPPS15} for the detailed introduction to Parameterized Complexity and, in particular, to lower bounds. 
 
Observe that \probRT can be solved in polynomial time if $\tau=1$, because of the following straightforward observation.

\begin{observation}\label{obs:one-step}
Facilitator can with in Rendezvous Game with Adversaries game in one step on $G$ starting from $s$ and $t$ against \divider with $k$ agents if and only if one of the following holds: (i) $s=t$, (ii) $s$ and $t$ are adjacent, or (iii) $|N_G(s)\cap N_G(t)|>k$. 
\end{observation}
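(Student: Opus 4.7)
The plan is to verify both directions of the equivalence directly from the game rules. For the ``if'' direction, I would treat the three cases separately. In case (i), $R$ and $J$ share the vertex $s=t$ from the start, so Facilitator has already won without needing to move. In case (ii), since $st\in E(G)$ and no \divider agent is ever placed on $s$ (initial placements lie in $V(G)\setminus\{s,t\}$), Facilitator can move $J$ from $t$ to $s$ while keeping $R$ fixed, producing a rendezvous in one move. In case (iii), since $|N_G(s)\cap N_G(t)|>k$, for any placement of \divider's $k$ agents at least one common neighbor $v$ remains unoccupied; Facilitator's first move sends both $R$ and $J$ to $v$, which is legal because $v$ is adjacent to both $s$ and $t$ and is free of adversary agents.

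For the ``only if'' direction, I would argue by contrapositive: assume $s\neq t$, $s$ and $t$ are nonadjacent, and $|N_G(s)\cap N_G(t)|\leq k$, and exhibit a strategy for \divider that blocks a rendezvous during Facilitator's first move. \divider places one agent on each vertex of $N_G(s)\cap N_G(t)$ (legal because no common neighbor equals $s$ or $t$) and distributes any surplus agents arbitrarily on $V(G)\setminus\{s,t\}$. For Facilitator to win in one move, after his move $R$ and $J$ must share some vertex $v\in N_G[s]\cap N_G[t]$. The options $v=s$ and $v=t$ are ruled out by the non-equality and non-adjacency of $s$ and $t$, so $v$ must be a common neighbor of $s$ and $t$. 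But every such vertex is occupied by a \divider agent, making the move of either $R$ or $J$ onto $v$ illegal. Hence Facilitator has no winning first move.

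There is essentially no combinatorial obstacle here; the only care is to invoke the initial-placement constraint (\divider's agents must lie in $V(G)\setminus\{s,t\}$) so that case (ii) works, and to note that in the converse direction \divider may legally cover every common neighbor of $s$ and $t$. The final writeup can therefore be kept to a few lines.
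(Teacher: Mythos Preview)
Your proof is correct. The paper does not actually supply a proof of this observation; it simply calls it ``straightforward'' and states it without justification, so your write-up fills in exactly the details the authors omitted, and there is nothing further to compare.
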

 
However, if $\tau\geq 2$, $\tau$-\probRT   becomes hard.

\begin{theorem}\label{thm:W-hardness}   
 \probR and $\tau$-\probRT for every constant $\tau\geq 2$ are \classCoW{2}-hard when parameterized by $k$.
 Moreover, these problems cannot be solved in time $f(k)\cdot  n^{o(k)}$ unless $\classETH$ fails.

 % Moreover, the problems cannot be solved in $n^{o(k)}$ time unless $\classFPT=\classW{1}$.
\end{theorem}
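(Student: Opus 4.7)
The plan is to reduce $k$-\textsc{Dominating Set}---a classical \classW{2}-hard problem parameterized by the solution size---to the question of whether \divider wins with $k$ agents. Since this question is the complement of \probR (respectively of $\tau$-\probRT for any fixed $\tau$), a parameterized reduction preserving the parameter $k$ immediately yields both the \classCoW{2}-hardness statement and the \classETH-based $f(k)\cdot n^{o(k)}$ lower bound, via the well-known lower bound for $k$-\textsc{Dominating Set} due to Chen, Huang, Kanj, and Xia.

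Given an instance $(H,k)$ with $V(H)=\{v_{1},\ldots,v_{n}\}$, I will construct a graph $G$ as follows. Take two vertices $s,t$; for each $v_{i}\in V(H)$ introduce a \emph{central} vertex $c_{i}$, a family of $k+1$ \emph{Romeo-probes} $a_{i}^{1},\ldots,a_{i}^{k+1}$, and a family of $k+1$ \emph{Juliet-probes} $b_{i}^{1},\ldots,b_{i}^{k+1}$; add the edges $s\sim a_{i}^{r}$, $t\sim b_{i}^{r}$, $a_{i}^{r}\sim c_{i}$ and $b_{i}^{r}\sim c_{i}$ for every $r$, together with $c_{i}\sim c_{j}$ whenever $v_{i}v_{j}\in E(H)$. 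Set the game budget $k'=k$. The reduction is polynomial in $n$ and keeps the parameter exactly $k$.

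The correctness analysis rests on a two-move picture. Since each side offers $k+1$ probes and \divider has only $k$ agents, Facilitator can always advance $R$ to some unblocked $a_{i}^{r}$ and $J$ to some unblocked $b_{j}^{r'}$. A direct inspection shows that the unique common neighbour of $a_{i}^{r}$ and $b_{j}^{r'}$ is $c_{i}$ when $i=j$, and there is no common neighbour when $i\neq j$; hence Facilitator can force a meeting in two moves if and only if \divider fails to cover some central $c_{i}$ with an agent able to reach it in one step. A quick bookkeeping argument shows that an agent placed at $c_{w}$ can intercept precisely the centrals $c_{v}$ with $v\in N_{H}[w]$, while an agent placed at any other vertex can intercept at most one $c_{v}$; because $\{v\}\subseteq N_{H}[v]$, an optimal placement uses only $c_{w}$-positions, and the associated set $D\subseteq V(H)$ covers every $v\in V(H)$ in this game-theoretic sense exactly when $D$ is a dominating set of $H$. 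Thus, if such a $D$ of size $k$ exists, \divider places agents at $\{c_{w}:w\in D\}$ and, upon observing Facilitator probe index $i$ on both sides, moves the agent at $c_{w^{*}}$ with $w^{*}\in D\cap N_{H}[v_{i}]$ (nonempty by domination) to $c_{i}$ along the edge $c_{w^{*}}c_{i}$ of $G$, blocking the meeting. Conversely, if no dominating set of size $k$ exists, Facilitator picks an uncovered $i^{*}$, sends $R$ to an unblocked $a_{i^{*}}^{r}$ and $J$ to an unblocked $b_{i^{*}}^{r'}$, and meets at $c_{i^{*}}$ on the second move.

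The same graph $G$ will establish hardness for $\tau$-\probRT with every fixed $\tau\ge 2$: the Facilitator strategy wins in exactly two moves, hence within $\tau$, and \divider's shadow strategy extends to longer games by letting the displaced agent return to $c_{w^{*}}$ along $c_{w^{*}}c_{i}$ as soon as Facilitator retreats to $s$ or $t$. The main obstacle in the proof is precisely verifying this ``shadow persistence'' for $\tau\ge 3$: dispatching the agent from $c_{w^{*}}$ to $c_{i}$ momentarily drops coverage of $N_{H}[w^{*}]\setminus N_{H}[v_{i}]$, so one must check that Facilitator cannot exploit the gap within the remaining moves. This will be handled by noting that any re-probe requires Facilitator to traverse $a_{i}^{r}\to s\to a_{i'}^{r'}$ (and similarly for $J$), so \divider has at least one extra move to rewind the shadow before the new probe materialises.
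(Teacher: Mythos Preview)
Your reduction is sound for $\tau=2$, but it breaks down for every $\tau\ge 3$ and for the unbounded \probR; the failure is exactly at the ``shadow persistence'' step you flag as the main obstacle, and the proposed resolution is incorrect. The claim that ``any re-probe requires Facilitator to traverse $a_i^r\to s\to a_{i'}^{r'}$'' presupposes that Facilitator always probes the \emph{same} index on both sides. If instead $R$ and $J$ probe \emph{distinct} indices in step~1 and then enter the central layer $\{c_1,\ldots,c_n\}$, they are free to walk along the embedded copy of $H$, and a dominating set of $H$ is far too weak to let \divider separate them there.

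Take $H=K_3$ and $k=1$. Then $\{v_1\}$ is a dominating set, yet in your $G$ one has $\lambda_G(s,t)=3$, so $d_G(s,t)\ge 2$ by Theorem~\ref{thm:sep-one}, and \divider with one agent loses the unbounded game outright. For $\tau=3$ explicitly: against an agent initially at $c_1$, Facilitator plays $R\to a_2^{r}$, $J\to b_3^{r'}$. If \divider stays at $c_1$, then $R\to c_2$, $J\to c_3$ in step~2 and, being adjacent, they meet in step~3. If \divider moves to $c_2$ (the case $c_3$ is symmetric), then $J\to c_3$ and $R\to s$ in step~2; now $R$ and $J$ share the common neighbours $a_3^{1},a_3^{2}$, neither reachable by \divider from $c_2$, and they meet in step~3. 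Thus your intended no-instance of $3$-\probRT is in fact a yes-instance.

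The paper's reduction (from \textsc{Set Cover}, with parameter $k+1$) sidesteps this entirely: when the \textsc{Set Cover} instance is positive, \divider does not try to maintain a recurring shadow but instead \emph{traps} $R$ immediately after its first move, pinning it in a degree-two vertex both of whose neighbours are then permanently occupied. This gives \divider a win in the unbounded game, hence simultaneously for every fixed $\tau\ge 2$. To rescue a Dominating-Set approach you would need an analogous one-shot trapping gadget; the repeatable-shadow idea cannot work because it leaves the embedded copy of $H$ accessible to both of Facilitator's agents.
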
 
 
\begin{proof}
We show theorem by reducing the \textsc{Set Cover} problem. Given a universe $U$, a family $\mathcal{S}$ of subsets of $U$, and a positive integer $k$, the task of \textsc{Set Cover} is to decide whether there is a subfamily $\mathcal{S}'\subseteq \mathcal{S}$ of size at most $k$ that covers $U$, that is, every element of $U$ is in at least one set of $\mathcal{S}'$. This problem is well-known to be \classW{2}-complete when parameterized by $k$ (see, e.g.,~\cite{CyganFKLMPPS15}).

\begin{figure}[ht]
\centering
\scalebox{0.7}{
\input{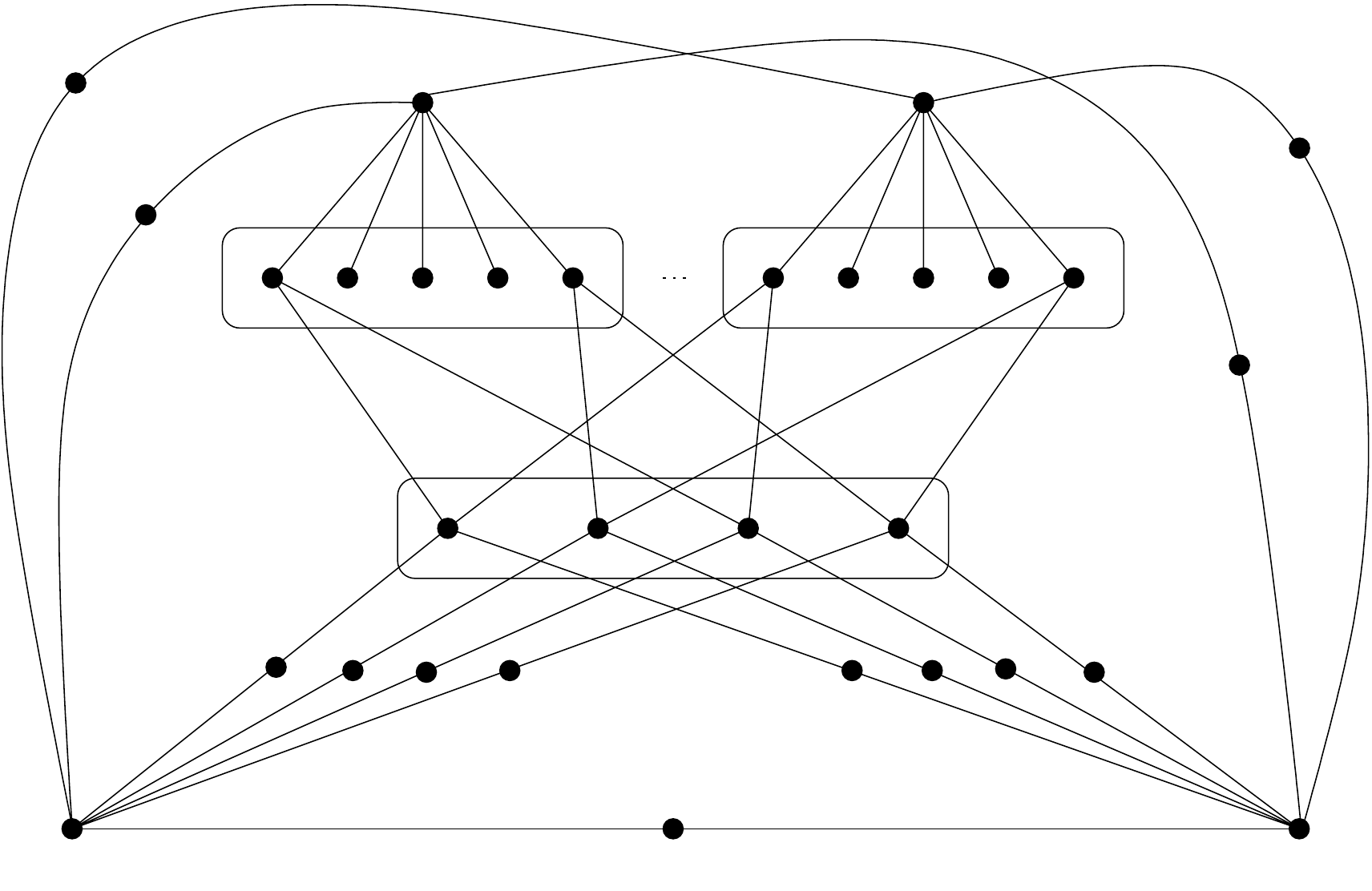_t}}
\caption{The construction of $G$.}
\label{fig:w-hardness}
\end{figure} 

Let $(U,\mathcal{S},k)$ be an instance of \textsc{Set Cover}. Let $U=\{u_1,\ldots,u_n\}$ and $\mathcal{S}=\{S_1,\ldots,S_m\}$.   We construct the graph $G$ as follows (see Figure~\ref{fig:w-hardness}).
\begin{itemize}
\item Construct a set of $n$ vertices $U=\{u_1,\ldots,u_n\}$ corresponding to the universe.
\item For every $i\in\{1,\ldots,k\}$, construct a set of $m$ vertices $S^{(i)}=\{s_1^{(i)},\ldots,s_m^{(i)}\}$; each $S^{(i)}$ corresponds to a copy of $\mathcal{S}$.
\item For every $i\in\{1,\ldots,k\}$, $h\in\{1,\ldots,m\}$ and $h\in\{1,\ldots,n\}$, make $s_j^{(i)}$ and $u_h$ adjacent if the element of the universe $u_h$ is in 
$S_j\in\mathcal{S}$. 
\item For every $i\in\{1,\ldots,k\}$, construct a vertex $w_i$ and make it adjacent to $s_1^{(i)},\ldots,s_m^{(i)}$.
\item Construct two vertices $s$ and $t$.
\item For every $h\in \{1,\ldots,n\}$, join $s$ and $u_h$ by a path $sx_hu_h$ and joint $u_h$ and $t$ by a path $u_hx_h't$.
\item For every $i\in\{1,\ldots,k\}$, join $s$ and $w_i$ by a path $sy_iw_i$ and join $w_i$ and $t$ by a path $w_iy_i't$.
\item Construct a vertex $z$ and make it adjacent to $s$ and $t$.
\end{itemize}
 
We show that if $(U,\mathcal{S},k)$ is a yes-instance of \textsc{Set Cover}, then \divider with $k+1$ agents can win  in Rendezvous Game with Adversaries.
Let $\mathcal{S}'=\{S_{i_1},\ldots,S_{i_k}\}$ be a set cover; we assume  without loss of generality that $\mathcal{S}'$ has size exactly $k$. 
We describe a winning strategy for \divider with the agents $D_1,\ldots,D_{k+1}$. Initially, \divider puts $D_j$ in the vertex $s_{i_j}^{(j)}$ for each $j\in\{1,\ldots,k\}$, and $D_{k+1}$ is placed in $z$. Then the following strategy is used. The agents $D_1,\ldots,D_{k+1}$ are keeping their position until either $R$ or $J$ are moved from $s$ or $t$, respectively. Assume by symmetry that $R$ is moved by Facilitator  from $s$ ($J$ can either move or stay in $t$).
If $R$ is moved from $s$ to $y_j$ for some $j\in\{1,\ldots,k\}$, then \divider moves $D_j$ from $s_{i_j}^{(j)}$ to $w_j$ and $D_{k+1}$ is moved from $z$ to $s$.  
Notice that $R$ is in the vertex $y_j$ of degree two and both neighbors of $y_j$ are occupied by the agents of \divider. Hence, $R$ cannot move and $J$ cannot reach $y_j$. This implies that \divider wins by keeping the agents in their current positions.  
Assume that   $R$ is moved from $s$ to $x_h$ for some $h\in\{1,\ldots,n\}$. Since $\mathcal{S}'$ is a set cover, there is $j\in\{1,\ldots,k\}$ such that the element of the universe $u_h\in S_{i_j}$. Then $D_j$ is in the vertex $s_{i_j}^{(j)}$ that is adjacent to the vertex $u_h$. 
\divider responds by moving   $D_j$ from $s_{i_j}^{(j)}$ to $u_h$ and $D_{k+1}$ is moved from $z$ to $s$.
Now we have that $R$ is blocked in $x_h$ by the agents in $s$ and $u_h$. This means that \divider wins.

Next, we claim that if  $(U,\mathcal{S},k)$ is a no-instance of \textsc{Set Cover}, then Facilitator wins in at most two steps against \divider with $k+1$ agents. Assume that \divider completed the initial placement of the agents. If $z$ is not occupied, then Facilitator moves $R$ and $J$ to $z$ and wins in one step. Assume that $z$ is occupied by $D_{k+1}$. If there is $i\in\{1,\ldots,k\}$ such that there is no agent of \divider in a vertex of $N_G[w_i]$, then Facilitator moves $R$ to $y_i$ and $J$ to $y_i'$ by the first move. Since \divider has no agents in $N_G[w_i]$, for any his move, $w_i$ remains unoccupied by his agents. Therefore, Facilitator can move $R$ and $J$ to $w_i$ and win in two steps. Suppose from now that for every $i\in\{1,\ldots,k\}$, $D_i$ is in $N_G[w_i]$. Because \divider has $k+1$ agents, this means that for every $h\in\{1,\ldots,h\}$,
$x_h$, $u_h$ and $x_h'$ are not occupied by the agents of \divider, and for every $i\in\{1,\ldots,k\}$, at most one agent is in $S^{(i)}$.  Let $X$ be the set of vertices of $\bigcup_{i=1}^kS^{(i)}$ occupied by the agents of \divider in the beginning of the game. Since $|X|\leq k$ and  $(U,\mathcal{S},k)$ is a no-instance of \textsc{Set Cover}, there is $h\in \{1,'\ldots,n\}$ such that the vertices $N_H[u_h]$ are not occupied by the agents if \divider. Therefore, Facilitator can move $R$ from $s$ to $x_h$ and then to $u_h$ and, symmetrically, move $J$ from $t$ to $x_h'$ and then to $u_h$. We obtain that $R$ and $J$ meet in $u_h$ in two steps, that is, Facilitator wins in two steps. 

These arguments imply that $(U,\mathcal{S},k)$ is a yes-instance of \textsc{Set Cover} if and only if $(G,s,t,k+1)$ is a no-instance of \probR. This mens that \probR is \classCoW{2}-hard. For $\tau$-\probRT for $\tau\geq 2$, notice that if $(U,\mathcal{S},k)$ is a no-instance of \textsc{Set Cover}, then Facilitator can win in at most two steps against \divider with $k+1$ agents, and if $(U,\mathcal{S},k)$ is a yes-instance, then \divider has a strategy that prevents Facilitator from winning in any number of steps. It follows that $\tau$-\probRT is \classCoW{2}-hard for any fixed $\tau\geq 2$.

For the second part of the claim of Theorem~\ref{thm:W-hardness}, we use the results of Chen et al.~\cite{ChenHKX06}, see also  \cite[Corollary~14.23]{CyganFKLMPPS15}.   In particular, they proved that 
 \textsc{Set Cover} cannot be solved in time $f(k)\cdot (n+m)^{o(k)}$ unless $\classETH$ fails. To show \classCoW{2}-hardness of \probR and $\tau$-\probRT, for $\tau\geq 2$, we constructed a polynomial reduction and the obtained parameter for \probR and $
\tau$-\probRT is $k+1$, i.e., is linear in the input parameter for \textsc{Set Cover}. Thus, any algorithm for \probR or $\tau$-\probRT, for $\tau\geq 2$, with running time $f(k)\cdot n^{o(k)}$ would imply the algorithm for \textsc{Set Cover} with running time $f(k)\cdot  (n+m)^{o(k)}$.
%For the second part of the claim of Theorem~\ref{thm:W-hardness}, we use the results of Chen et al.~\cite{ChenCFHJKX05} about algorithmic lower bounds for some \classW{2}-hard problems. In particular, they proved that 
% \textsc{Set Cover} cannot be solved in time $(n+m)^{o(k)}$ unless $\classFPT=\classW{1}$. To show \classCoW{2}-hardness of \probR and \probRT, we constructed a polynomial reduction and the obtained parameter for \probR and \probRT is $k+1$, i.e., is linear in the input parameter for \textsc{Set Cover}. Thus, any algorithm for \probR or \probRT with running time $n^{o(k)}$ would imply the algorithm for \textsc{Set Cover} with running time $(n+m)^{o(k)}$ that, it its turn, would imply that $\classFPT=\classW{1}$.
\end{proof} 
  We proved Theorem~\ref{thm:W-hardness} by giving a polynomial reduction from \textsc{Set Cover}. Since \textsc{Set Cover} is \classNP-complete (see~\cite{GareyJ79}), we obtain the following corollary using Observation~\ref{obs:co-np}.
 
\begin{corollary}\label{cor:np-hardness}
 $\tau$-\probRT is \classCoNP-complete for every fixed constant $\tau\geq 2$.
\end{corollary}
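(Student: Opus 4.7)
The proof is essentially a direct combination of two facts already established in the paper, so the plan is short. First, I would invoke Observation~\ref{obs:co-np}, which already places $\tau$-\probRT in \classCoNP for every fixed constant $\tau$, using the tree certificate $\mathcal{T}_G^k(\tau)$ of polynomial size (for fixed $\tau$) that encodes a winning strategy of \divider.

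For the hardness direction, I would reuse the reduction constructed in the proof of Theorem~\ref{thm:W-hardness} verbatim. There the reduction from an instance $(U,\mathcal{S},k)$ of \textsc{Set Cover} produces in polynomial time an instance $(G,s,t,k+1)$ such that $(U,\mathcal{S},k)$ is a yes-instance of \textsc{Set Cover} if and only if $(G,s,t,k+1)$ is a no-instance of $\tau$-\probRT, for any $\tau\geq 2$ (indeed, the proof shows that, when \textsc{Set Cover} is a no-instance, Facilitator wins in at most two steps, so the equivalence is insensitive to the specific choice of $\tau\geq 2$). This is a polynomial-time many-one reduction from \textsc{Set Cover} to the complement of $\tau$-\probRT. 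Since \textsc{Set Cover} is \classNP-complete~\cite{GareyJ79}, it follows that the complement of $\tau$-\probRT is \classNP-hard, and therefore $\tau$-\probRT is \classCoNP-hard.

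Combining membership in \classCoNP with \classCoNP-hardness yields \classCoNP-completeness of $\tau$-\probRT for every fixed $\tau\geq 2$, which is exactly the statement of the corollary. There is no real obstacle here: both ingredients have already been proved, and the only thing to verify is that the reduction of Theorem~\ref{thm:W-hardness}, though originally used to give a parameterized hardness result, is in fact a polynomial-time many-one reduction whose output correctness does not depend on the particular value of $\tau\geq 2$, which is immediate from the case analysis already carried out there.
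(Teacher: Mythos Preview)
Your proposal is correct and follows essentially the same approach as the paper: the paper simply notes that the reduction in Theorem~\ref{thm:W-hardness} is a polynomial-time reduction from the \classNP-complete \textsc{Set Cover} problem, and combines this with Observation~\ref{obs:co-np} to conclude \classCoNP-completeness.
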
 
 
 Using the reduction from the proof of Theorem~\ref{thm:W-hardness}, we can conclude that \probR and $\tau$-\probRT, for $\tau\geq 2$, are \classCoNP-hard. However, the general problems are harder.
 
 \begin{theorem}\label{thm:pspace}
\probR and \probRT are \classPSPACE-hard. 
 \end{theorem}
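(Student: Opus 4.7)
The plan is to reduce from the \classPSPACE-complete \textsc{Quantified Boolean Formula} problem (QBF). Let $\Phi = Q_1 x_1 Q_2 x_2 \cdots Q_n x_n \psi(x_1,\ldots,x_n)$ be an input QBF with $\psi$ in 3-CNF and alternating quantifiers. From $\Phi$ I will construct, in polynomial time, an instance $(G, s, t, k)$ of \probR with $k$ polynomial in $|\Phi|$, such that Facilitator wins if and only if $\Phi$ is true.

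The idea is to force the players to simulate the quantifier game on $\Phi$. The graph $G$ contains a selector chain of $n$ variable gadgets $X_1, \ldots, X_n$ traversed by $R$, where each $X_i$ offers two equally-long alternative corridors standing for the two truth values of $x_i$. For an existential $Q_i = \exists$, the gadget is symmetric and Facilitator freely chooses the corridor that $R$ enters, thereby assigning $x_i$. For a universal $Q_i = \forall$, the gadget is equipped so that \divider has enough agents to block one of the two corridors of $X_i$, forcing $R$ into the other; the choice of which corridor to block corresponds to the universal assignment of $x_i$. The current assignment is frozen by committing some of \divider's agents to witness vertices attached to the chosen corridors. Meanwhile $J$ is sent along a parallel chain that converges with $R$ at a \emph{clause-verification gadget} attached to $t$; this gadget contains, for each clause of $\psi$, a short path whose traversability depends on the literals of the clause being satisfied under the frozen assignment. $R$ and $J$ meet if and only if every clause is satisfied under the recorded assignment.

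The main technical challenge is synchronization. The two corridors of each variable gadget must have the same length, and the whole chain must be calibrated so that the decision about $x_i$ is made strictly between the decisions about $x_{i-1}$ and $x_{i+1}$, which is the hard part of the proof because neither player should be able to defer a commitment or read ahead. A second delicate point is preventing \divider from redeploying agents in unintended ways: I will attach cheap-to-block trap vertices along the track that force \divider to keep specific agents in specific places at each time step, so that his only genuine freedom lies in choosing universal assignments; symmetrically, $R$ and $J$ are tightly constrained by the topology and by \divider's blockers so that Facilitator's only meaningful choices are existential assignments. A careful budgeting of these blockers, witness markers, and clause-gadget gates will show that $k$ polynomial in $|\Phi|$ suffices.

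Finally, the same reduction gives \classPSPACE-hardness of \probRT: because play proceeds along a linear track, in an optimal game both sides commit to a fixed route whose total length is $\tau^\ast=\mathrm{poly}(|\Phi|)$, and any extra steps are wasted for either player. Setting $\tau=\tau^\ast$ in the resulting instance of \probRT yields the desired hardness. The key sanity check here will be to argue that no winning \divider strategy needs more than $\tau^\ast$ rounds of play on the constructed graph, which follows from the linear structure of the selector chain and the one-shot nature of the clause-verification gadget.
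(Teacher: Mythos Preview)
Your proposal is a plausible high-level plan, but it is not yet a proof: none of the gadgets are actually constructed, and the two issues you yourself flag as ``the hard part'' are precisely where the work lies. First, you never explain the \emph{mechanism} that links the recorded assignment to the clause-verification gadget. You say ``$R$ and $J$ meet if and only if every clause is satisfied under the recorded assignment,'' but an assignment is a history of corridor choices made many steps earlier; you must show concretely how \divider's ability to block (or not block) a particular clause gate at the end depends on which corridors were traversed at the beginning. This requires routing specific agents along calibrated paths so that they arrive at clause vertices exactly when needed, and only if the corresponding literal was set true---and you give no such routing. Second, your synchronization mechanism (``cheap-to-block trap vertices'') is asserted, not designed. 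Without a time bound there is no pressure on Facilitator to advance, so you must force progress purely by topology and agent budget; you do not show how. Until both pieces are written out, there is no way to verify that the budget $k$ really suffices, that \divider cannot redeploy agents across gadgets, or that the reduction is correct.

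For comparison, the paper also reduces from QBF but makes two choices that ease the construction. It reverses the roles relative to yours: Facilitator plays the \emph{universal} side and \divider the existential side, and the reduction shows $\varphi$ is true iff \divider wins. It then proves \probRT first, because the step bound $\tau$ gives synchronization almost for free: if $R$ and $J$ ever stall or step off a shortest path, they simply run out of time. The recording of assignments is realized by sending each of \divider's agents $D_i$ down one of two long paths $P_i,\overline{P}_i$ toward literal vertices $x_i'',\overline{x}_i''$ adjacent to the clause vertices; the path lengths are tuned so that $D_i$ reaches its literal vertex exactly when $R$ and $J$ arrive at the clause layer. Only after this is done does the paper treat \probR, by bolting on an extra layer of auxiliary agents whose sole job is to punish any deviation from the \probRT play. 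Your order (do \probR first, derive \probRT by setting $\tau=\tau^\ast$) is logically fine, but it front-loads the harder construction; the paper's order lets the time bound do the synchronization work first and then patches the unbounded case.
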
 
 
 \begin{proof}
 We show that \probRT is \classPSPACE-hard and then explain how to modify our reduction for \probR. We prove that it is \classPSPACE-hard to decide whether \divider can win in at most $\tau$ steps in Rendezvous Game with Adversaries.
  
The reduction is from the \textsc{Quantified Boolean Formula in Conjunctive Normal Form} (QBF) problem with alternating quantifiers that is well-known to be \classPSPACE-complete (see, e.g.,~\cite{GareyJ79}). The task of QBF is, given $2n$ Boolean variables $x_1,\ldots,x_{2n}$ and $m$ clauses $C_1,\ldots,C_m$, where every $C_i$ is a disjunction of literals over the variables, to decide whether the formula
 $$\varphi= \forall x_1\exists x_2\ldots \forall x_{2n-1}\exists x_n[C_1\wedge\ldots\wedge C_m]$$
 evaluates \true.   
 
 \begin{figure}[ht]
\centering
\scalebox{0.7}{
\input{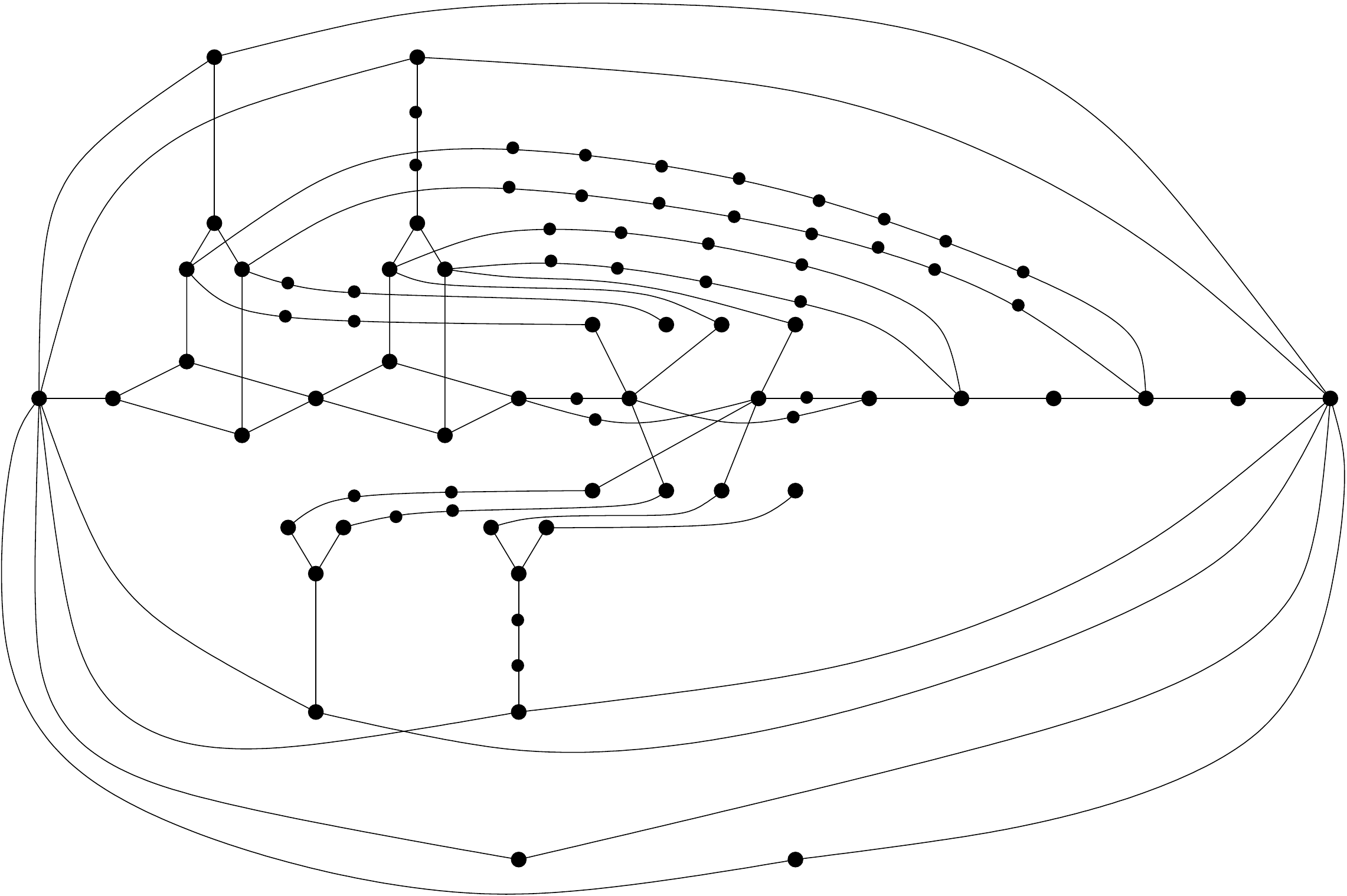_t}}
\caption{The construction of $G$ for $\varphi=\forall x_1\exists x_2\forall x_3\exists x_4[(x_1\vee\overline{x}_2\vee x_3)\wedge(x_2\vee\overline{x}_3\vee x_4)]$.}
\label{fig:ps-hardness}
\end{figure}

 Given a formula $\varphi= \forall x_1\exists x_2\ldots \forall x_{2n-1}\exists x_n[C_1\wedge\ldots\wedge C_m]$, we construct the graph $G$ as follows (see Figure~\ref{fig:ps-hardness}). 
\begin{itemize}
\item Construct $m$ vertices $c_1,\ldots,c_m$ corresponding to the clauses of $\varphi$.
\item Construct vertices $s$, $u_0,\ldots,u_n$, and $x_{2i-1},\overline{x}_{2i-1}$ for $i\in\{1,\ldots,n\}$, and 
for each $i\in\{1,\ldots,n\}$, make $x_{2i-1},\overline{x}_{2i-1}$ adjacent to $u_{i-1}$ and $u_i$.
\item Make $s$ and $u_0$ adjacent, and join $u_n$ with $c_1,\ldots,c_m$ by paths $u_nw_jc_j$ for $j\in\{1,\ldots,m\}$.
\item Construct $2n+1$ vertices $v_0,\ldots,v_{2n}$ and construct the path $tv_0\cdots v_{2n}$. Then join $v_{2n}$ with $c_1,\ldots,c_m$ by paths $v_{2n}w_j'c_j$ for $j\in\{1,\ldots,m\}$.
\item Construct two vertices $z$ and $z'$, and make them adjacent to $s$ and $t$. 
\item For every $i\in\{1,\ldots,2n\}$, construct vertices $x_i,\overline{x}_i,x_i'',\overline{x}_i''$ and $y_i,y_i'$, and then make $y_i$ adjacent to $x_i,\overline{x}_i$ and make $y_i'$ adjacent to $s$ and $t$. 
\item For every $i\in\{1,\ldots,n\}$, 
\begin{itemize}
\item construct $(x_{2i-1},x_{2i-1}'')$ and  $(\overline{x}_{2i-1},\overline{x}_{2i-1}'')$-paths $P_{2i-1}$ and $\overline{P}_{2i-1}$, respectively, of length $2(n-i)+1$,
\item construct $(x_{2i-1},v_{2i-1})$ and  $(\overline{x}_{2i-1},v_{2i-1})$-paths $Q_{2i-1}$ and $\overline{Q}_{2i-1}$, respectively, of length $4(n-i)+5$,
\item construct a $(y_{2i-1},y_{2i-1}')$-path $R_{2i-1}$ of length $2i-1$.
\end{itemize}
\item For every $i\in\{1,\ldots,n\}$, 
\begin{itemize}
\item construct $(x_{2i},x_{2i}'')$ and  $(\overline{x}_{2i},\overline{x}_{2i}'')$-paths $P_{2i}$ and $\overline{P}_{2i}$, respectively, of length $2(n-i)+1$,
\item construct a $(y_{2i},y_{2i}')$-path $R_{2i}$ of length $2i-1$.
\end{itemize}
\end{itemize} 
 This completes the construction of $G$. We define $\tau=2n+3$ and $k=2n+2$.
 
 We claim that $\varphi$ evaluates \true if and only if \divider with $k$ agents has a winning strategy that prevents $R$ and $J$ from meeting in at most $\tau$ steps.
 
 Assume that $\varphi=\true$. We describe a winning strategy for \divider.
 
 Assume that after the $i$-th move of Facilitator $R$ and $J$ are occupying some vertices $a$ and $b$ and the agents of \divider are in the vertices of a set $X$. If $\dist_{G-X}(a,b)>2\tau-2i$, then \divider wins by keeping the agents in  their current positions, because $R$ and $J$ are unable to meet in the remaining $\tau-i$ moves. In this case, we say that \divider has a \emph{trivial winning strategy}.
 
\divider has $k=2n+2$ agents. We place $D_i$ in $y_i'$ for $i\in\{1,\ldots,2n\}$. The remaining two agents $D_{2n+1}$ and $D_{2n+2}$ are placed in $z$ and $z'$, respectively. For $i\geq 1$, we use $X_i$ to denote the set of vertices occupied by the agents of \divider after the $i$-th step of the game; $X_0=\{y_1',\ldots,y_{2n}'\}\cup\{z,z'\}$.  
 
 Observe that $\dist_{G-X_0}(s,t)=2\tau$. This means that if either $R$ or $J$ is  not moved, then \divider wins by the trivial winning strategy.  We assume that this is not the case and $R$ is moved to $u_0$ and $J$ is moved to $v_0$. \divider responds by moving $D_{2n+1}$ and $D_{2n}$ from $z$ and $z'$ to $s$ and $t$, respectively; these agents remain in $s$ and $t$ until the end of the game and the only role of them is to prevent $R$ and $J$ from entering these vertices. The agents $D_1,\ldots,D_{2n}$ are moved to the neighbors of $y_1',\ldots,y_{2n}'$ in the paths $R_1,\ldots,R_{2n}$, respectively. 
 
The general idea of the reduction is that by the further $2n$ steps, the players define the values of the Boolean variables $x_1,\ldots,x_{2n}$, and the values of the variables $x_1,x_3,\ldots,x_{2n-1}$ are chosen by Facilitator, and \divider chooses the values of $x_2,x_4,\ldots,x_{2n}$. 
\divider chooses the value of his variables to achieve $\psi=C_1\wedge\ldots\wedge C_m=\true$.
To describe this process, we show 
inductively for $i=0,\ldots,n$ that after the $2i+1$-th step of the game, either \divider wins by the trivial strategy or the following configuration is maintained. 

\begin{itemize}
\item The values of the variables $x_j$ for $j\leq 2i$ are chosen and the values of the variables $x_j$ for $j>2i$ are unassigned. Moreover, the values of $x_1,\ldots,x_{2i}$ are chosen in such a way that $\varphi$ evaluates $\true$ if the values of   $x_1,\ldots,x_{2i}$ are constrained by the choice.
\item $R$ is in $u_{i}$ and $J$ is in $v_{2i}$.
\item For $j\in\{i+1,\ldots,n\}$, $D_{2j-1}$ and $D_{2j}$ are on the paths $R_{2j-1}$ and $R_{2j}$, respectively, at distance $\ell(R_{2j-1})-2i-1=\ell(R_{2j})-2i-1=2(j-i)-2$ from $y_{2j-1}$ and $y_{2j}$, respectively; in particular,  $D_{2i+1}$ and $D_{2i+2}$ are in $y_{2i+1}$ and $y_{2i+2}$, respectively, if $i<n$. 
\item For $j\in\{1,\ldots,i\}$, 
\begin{itemize}
\item if the variable $x_{2j-1}=\true$, then $D_{2j-1}$ is on the path $P_{2j-1}$ at distance $\ell(P_{2j-1})+\ell(R_{2j-1})-2i=2(n-i)$ from $x_{2j-1}''$,
\item if the variable $x_{2j-1}=\false$, then $D_{2j-1}$ is on the path $\overline{P}_{2j-1}$ at distance $\ell(\overline{P}_{2j-1})+\ell(R_{2j-1})-2i=2(n-i)$ from $\overline{x}_{2j-1}''$, 
\item if the variable $x_{2j}=\true$, then $D_{2j}$ is on the path $P_{2j}$ at distance $\ell(P_{2j})+\ell(R_{2j})-2i=2(n-i)$ from $x_{2j}''$,
\item if the variable $x_{2j}=\false$, then $D_{2j}$ is on the path $\overline{P}_{2j}$ at distance $\ell(\overline{P}_{2j})+\ell(R_{2j})-2i=2(n-i)$ from $\overline{x}_{2j}''$. 
\end{itemize}
\end{itemize}
 
It is straightforward to verify that the claim holds for $i=0$. Assume inductively that the claim holds for $0\leq i<2n$. We show that either \divider wins by the trivial strategy applied from the steps $2i+2$ or $2i+3$, or  the configuration is maintained for $i'=i+1$. 

 Observe that $\dist_{G-X_{2i+1}}(u_i,v_{2i})=2(\tau-2i-1)$. Therefore, if either of the agents of Facilitator remain in their old position, then \divider wins by the trivial strategy. Therefore, both $R$ and $J$ has to move. Moreover, they have to move along a shortest $(u_i,v_{2i})$-path in $G-X_{2i+1}$. Hence, Facilitator moves $J$ from $v_{2i}$ to $v_{2v+1}$ and $R$ is moved either to $x_{2i+1}'$ or to $\overline{x}_{2i+1}'$. If $R$ is moved to $x_{2i+1}'$, then we assign the variable $x_{2i+1}=\true$, and $x_{2i+1}=\false$ otherwise. \divider responds by moving $D_{2i+1}$ from $y_{2i+1}$ to $x_{2i+1}$ if $R$ is in $x_{2i+1}'$, and $D_{2i+1}$ is moved to $\overline{x}_{2i+1}$ if $R$ is in $\overline{x}_{2i+1}'$. For $D_{2i+2}$, 
 \divider chooses one of the vertices $x_{2i+2}$ and $\overline{x}_{2i+2}$ and moves the agent there. By this move, \divider selects the value of the Boolean variable $x_{2i+2}$, and  $x_{2i+2}=\true$ if $D_{2i+2}$ in $x_{2i+2}$, and $x_{2i+2}=\false$ otherwise.  Note that \divider knows the values of $x_1,\ldots,x_{2i+1}$ and selects the move for $D_{2i+1}$ to ensure that the final value of $\psi=\true$.  The agents $D_h$ for $h\in\{1,\ldots,2n\}$ such that $h\neq 2i+1,~2i+2$ are moved to adjacent vertices in the corresponding paths.
For $j\in\{i+2,\ldots,n\}$, $D_{2j-1}$ and $D_{2j}$ are moved along $R_{2j-1}$ and $R_{2j}$ toward $y_{2j-1}$ and $y_{2j}$, respectively. 
  For $j\in\{1,\ldots,i\}$,   $D_{2j-1}$ and $D_{2j}$ are moved along $P_{2j-1}$ ($\overline{P}_{2j-1}$) and $P_{2j}$ ($\overline{P}_{2j}$) toward   $x_{2j-1}''$ ($\overline{x}_{2j-1}''$) and  $x_{2j}''$ ($\overline{x}_{2j}''$), respectively. 
  
 Assume without loss of generality that $R$ occupies $x_{2i+1}'$, because the case when $R$ is in $\overline{x}_{2i+1}$ is symmetric.   We have that $\dist_{G-X_{2i+2}}( x_{2i+1}',v_{2i+1})=2(\tau-2i-2)$. Hence, if $R$ of $J$ are not moved toward each other by the next step, \divider wins by the trivial strategy. Assume that this is not the case. Recall that $D_{2i+1}$ is in $x_{2+1}$.  We conclude that $R$  is moved to $u_{i+1}$ and $J$ is moved to $v_{2i+2}$.  \divider responds as follows.
For $j\in\{i+2,\ldots,n\}$, $D_{2j-1}$ and $D_{2j}$ are moved along $R_{2j-1}$ and $R_{2j}$ toward $y_{2j-1}$ and $y_{2j}$, respectively. 
  For $j\in\{1,\ldots,i+1\}$,   $D_{2j-1}$ and $D_{2j}$ are moved along $P_{2j-1}$ ($\overline{P}_{2j-1}$) and $P_{2j}$ ($\overline{P}_{2j}$) toward   $x_{2j-1}''$ ($\overline{x}_{2j-1}''$) and  $x_{2j}''$ ($\overline{x}_{2j}''$), respectively. We obtain that for $i'=i+1$, the players are in the required configuration. 
 
By the above claim, we have that after $2n+1$ steps of the game either \divider wins by the trivial strategy applied from some step or the following configuration is achieved:
\begin{itemize}
\item The values of the Boolean variables $x_1,\ldots,x_{2n}$ are chosen and $\psi=\true$ for them.
\item $R$ is in $u_n$ and $J$ is in $v_{2n}$.
\item For $i\in\{1,\ldots,2n\}$, $D_i$ is in $x_i''$ if $x_i=\true$ and $D_i$ is in $\overline{x}_i''$ otherwise.
\end{itemize} 
  
Since $\dist_{G-X_{2n+1}}(u_n,v_{2n})=4$, the only possibility for Facilitator to win in two steps is to move $R$ and $J$ toward each other along the path $u_nw_hc_hw_h'v_{2n}$ for some $h\in \{1,\ldots,m\}$. Otherwise, \divider wins by the trivial strategy. Assume that $R$ is moved to $w_h$ for some $h\in\{1,\ldots,m\}$ 
in the next step. Recall that $\psi=\true$. Therefore, the clause $C_j$ contains a literal $x_i$ or $\overline{x}_i$ for some  $i\in\{1,\ldots,2n\}$ with the value $\true$. Assume that $C_h$ contains $x_i$ as the other case is symmetric. We have that the vertex $x_i''$ is occupied by $D_i$ and $x_i''$ is adjacent to $c_h$ in $G$. \divider responds to the moving $R$ to $w_h$ by moving $D_i$ to $c_h$. This prevents $R$ and $J$ from meeting in the next step. Therefore, \divider wins.
This concludes the proof of the claim that if $\varphi$ evaluates $\true$, then  \divider with $k$ agents has a winning strategy in the game with $\tau$ steps.

 Our next aim is to show that if $\varphi$ evaluates \false, then Facilitator can win in $\tau$ steps.
 
 It is convenient to define a special strategy for Facilitator that can be applied after some step.  Assume that after the $i$-th move of Facilitator $R$ and $J$ are occupying some vertices $a$ and $b$ and $G$ has an $(a,b)$-path $L$ whose length is at most $2(\tau-i)$ and the internal vertices of $L$ have degree two in $G$. If 
 there are no agent of \divider occupying a vertex of $L$, then Facilitator wins in at most $\tau-i$ remaining steps by moving $R$ and $J$ along $L$ toward each other (except if $R$ and $J$ are in adjacent vertices; then $R$ moves to the vertex occupied by $J$). If Facilitator can win this way, we say that Facilitator has a \emph{trivial winning strategy}. 
 
As above, for $i\in\{1,\ldots,2n\}$, we use $X_i$ to denote the set of vertices occupied by the agents of \divider after $i$-ith step of the game and $X_0$ is the set of vertices
occupied in the beginning of the game.

Initially, $R$ is in $s$ and $J$ is in $t$. If there is a vertex $a\in N_G(s)\cap N_G(t)$ such that $a\notin X_0$, then Facilitator wins in one step by moving $R$ and $J$ to $a$.
Hence, we assume that $N_G(s)\cap N_G(t)\subseteq X_0$. Since $|N_G(s)\cap N_G(t)|$, $X_0=N_G(s)\cap N_G(t)=\{y_1',\ldots,y_{2n}'\}\cup\{z,z'\}$ and each vertex of $X_0$ is occupied by exactly one agent of \divider.   Let $D_i$ be in $y_i'$ for $i\in\{1,\ldots,2n\}$ and let the remaining two agents $D_{2n+1}$ and $D_{2n+2}$ be in $z$ and $z'$, respectively.

The idea behind the strategy of Facilitator is that $R$ and $J$ are moved towards each other along the paths containing the vertices $u_0,\ldots,u_n$ and $v_0,\ldots,v_{2n}$ with the aim to meet in some vertex $c_h$. The trajectory of $R$ goes trough vertices $x_i'$ and $\overline{x}_i'$ and the choice between these vertices defines the value of the variable $x_i$. On the way, Facilitator forces \divider to behave  in a certain way as, otherwise, Facilitator can win by the trivial strategy using paths $Q_i$ or $\overline{Q}_i$.

 It is convenient to sort-out the agents of \divider whose movements are irrelevant for the strategy of Facilitator. We say that an agent $D_j$ is \emph{out of game} if $D_j$ cannot block any shortest path between the vertices occupied by $R$ and $J$ in $G-X_0$. Formally,  assume that after the $i$-th step of the game $R$ is in a vertex $a$ and $J$ is in $b$, and let $d$ be the vertex occupied by $D_j$.  We say that $D_j$ is \emph{out of game} after the $i$-step of the game if for every shortest $(a,b)$-path $L$ in $G-X_0$ and every $e\in V(L)$, it holds that
 \begin{itemize}
 \item  $\dist_L(a,e)\leq\dist_G(d,e)$ if $\dist_L(a,e)\leq \dist_L(b,e)$,
 \item  $\dist_L(b,e)\leq \dist_G(d,e)$, otherwise.
  \end{itemize}
Note that if $D_j$ is out of game after the $i$-th step, then $D_j$ is out of game for all subsequent steps, because $R$ and $J$ are moving toward each other  along some shortest path between the vertices occupied by them. In particular, $D_{2n+1}$ and $D_{2n}$ are out of game from the beginning. 
 
By the first step, Facilitator moves $R$ to $u_0$ and $J$ is moved from $t$ to $v_0$. Further, $R$ and $J$ move towards each other. The moves of the players define the values of the Boolean variables $x_1,\ldots,x_{2n}$. By his moves, Facilitator consecutively chooses the values of $x_1,x_3,\ldots,x_{2n-1}$ and \divider selects the values of $x_2,x_4,\ldots,x_{2n}$. Facilitator aim to achieve $\psi=C_1\wedge\ldots\wedge C_m=\false$.
To describe the strategy,  we show 
inductively for $i=0,\ldots,n$ that after the $2i+1$-th step of the game, either Facilitator wins by the trivial strategy or the following configuration is maintained. 

\begin{itemize}
\item The values of the variables $x_j$ for $j\leq 2i$ are chosen and the values of the variables $x_j$ for $j>2i$ are unassigned. Moreover, the values of $x_1,\ldots,x_{2i}$ are chosen in such a way that $\varphi$ evaluates $\false$ if the values of   $x_1,\ldots,x_{2i}$ are constrained by the choice.
\item $R$ is in $u_{i}$ and $J$ is in $v_{2i}$.
\item For $j\in\{i+1,\ldots,n\}$, 
\begin{itemize}
\item either $D_{2j-1}$ is out of game or $D_{2j-1}$ in on the path $R_{2j-1}$  at distance $\ell(R_{2j-1})-2i-1=2(j-i)-2$ from $y_{2j-1}$, 
\item either $D_{2j}$ is out of game or  $D_{2j}$ is on the path $R_{2j}$ at distance $\ell(R_{2j})-2i-1=2(j-i)-2$ from $y_{2j}$. 
\end{itemize}
\item For $j\in\{1,\ldots,i\}$, 
\begin{itemize}
\item if the variable $x_{2j-1}=\true$, then either $D_{2j-1}$ is out of game or $D_{2j-1}$ is on the path $P_{2j-1}$ at distance $\ell(P_{2j-1})+\ell(R_{2j-1})-2i=2(n-i)$ from $x_{2j-1}''$,
\item if the variable $x_{2j-1}=\false$, then either $D_{2j-1}$ is out of game or $D_{2j-1}$ is on the path $\overline{P}_{2j-1}$ at distance $\ell(\overline{P}_{2j-1})+\ell(R_{2j-1})-2i=2(n-i)$ from $\overline{x}_{2j-1}''$, 
\item if the variable $x_{2j}=\true$, then either $D_{2j}$ is out of game or $D_{2j}$ is on the path $P_{2j}$ at distance $\ell(P_{2j})+\ell(R_{2j})-2i=2(n-i)$ from $x_{2j}''$,
\item if the variable $x_{2j}=\false$, then either $D_{2j}$ is out of game or $D_{2j}$ is on the path $\overline{P}_{2j}$ at distance $\ell(\overline{P}_{2j})+\ell(R_{2j})-2i=2(n-i)$ from $\overline{x}_{2j}''$. 
\end{itemize}
\end{itemize}

The construction of $G$ immediately implies that the claim holds for $i=0$. Assume inductively that the claim holds for $0\leq i<2n$. We show that either Facilitator wins 
by the trivial strategy applied from the steps $2i+2$ or $2i+3$, or  the configuration is maintained for $i'=i+1$. 

By the $2i+2$-th move, Facilitator moves $J$ to $v_{2i+1}$ and $R$ is moved either to $x_{2i+1}'$ or to $\overline{x}_{2i+1}'$. Note that Facilitator cannot prevent these moves, because of our assumption about  the configuration of the positions of the players and the observation that $D_{2n+1}$ and $D_{2n}$ are out of game.
If $R$ is moved to $x_{2i+1}'$, then we assign the variable $x_{2i+1}=\true$, and $x_{2i+1}=\false$ otherwise. Assume that $R$ is moved to $x_{2i+1}'$ (the other case is symmetric). If no agent of  \divider is moved to $x_{2i+1}$, then by the next moves $R$ is moved to $x_{2i+1}$ and $J$ is moved along the path $Q_{2i+1}$ toward $R$. 
Because the vertices of $Q_{2i+1}$ are not occupied by the agents of \divider, Facilitator wins by the trivial strategy. Since only $D_{2i+1}$ can move into $x_{2i+1}$, we assume that $D_{2i+1}$ is moved to this vertex. Symmetrically, we assume that if $R$ is moved to $\overline{x}_{2i+1}'$, then $D_{2i+1}$
 is moved to $\overline{x}_{2i+1}$. 

Observe that if $D_{2i+2}$ is out of game, then no agent of \divider can be moved to either $x_{2i+2}$ or $\overline{x}_{2i+2}$. Otherwise, $D_{2i+2}$ is in $y_{2i+2}$. If the agent is not moved to neither  $x_{2i+2}$ nor $\overline{x}_{2i+2}$, $D_{2i+2}$ is out of game. In all these cases, the value of the Boolean variable $x_{2i+2}$ is defined arbitrarily. Otherwise, if $D_{2i+2}$ is moved to  $x_{2i+2}$, then we set $x_{2i+2}=\true$, and if $D_{2i+2}$ is moved to  $\overline{x}_{2i+2}$, then we set $x_{2i+2}=\false$.
Consider the agents $D_h$ for $h\in\{1,\ldots,2n\}$ such that $h\neq 2i+1,~2i+2$ are not out of game. 
If such an agent $D_{2j-1}$ ( $D_{2j}$, respectively) for $j\in\{i+2,\ldots,n\}$ is not moved along $R_{2j-1}$  toward $y_{2j-1}$ (along $R_{2j}$ toward $y_{2j}$, trespectively), $D_{2j-1}$ ($D_{2j}$, respectively) is out of game.
Similarly, if such an agent    $D_{2j-1}$ ($D_{2j}$, respectively) for  $j\in\{1,\ldots,i\}$ is not moved along his current path $P_{2j-1}$ or $\overline{P}_{2j-1}$ ($P_{2j}$ or $\overline{P}_{2j}$, respectively) toward   $x_{2j-1}''$ or $\overline{x}_{2j-1}''$ ($x_{2j}''$ or $\overline{x}_{2j}''$), respectively, this agent is placed out of game. 

Now we consider the step $2i+3$. By symmetry, we assume without loss of generality that $R$ is in  $x_{2i+1}'$ (the case when $R$ is in $\overline{x}_{2i+1}$ is symmetric).   
Then Facilitator moves $R$ to $u_{i+1}$ and $J$ to $v_{2i+2}$. For each agent $D_h$ that is not out of game, observe that $D_h$ is placed on some path: 
for $j\in\{i+2,\ldots,n\}$, $D_{2j-1}$ and $D_{2j}$ are in  $R_{2j-1}$ and $R_{2j}$, respectively, and for $j\in\{1,\ldots,i+1\}$,   $D_{2j-1}$ and $D_{2j}$ are in  $P_{2j-1}$ ($\overline{P}_{2j-1}$) and $P_{2j}$ ($\overline{P}_{2j}$), respectively. If they do not move along these paths toward $y_{2(i+1)},\ldots,y_{2n}$ and the vertices $x_{h}''$ or $\overline{x}_h''$ for $h\leq 2i$, then they are out of game. 
We obtain that for $i'=i+1$, the players are in the required configuration. 

By the above claim, we have that after $2n+1$ steps of the game either Facilitator wins by the trivial strategy applied from some step or the following configuration is achieved:
\begin{itemize}
\item The values of the Boolean variables $x_1,\ldots,x_{2n}$ are chosen and $\psi=\false$ for them.
\item $R$ is in $u_n$ and $J$ is in $v_{2n}$.
\item For each $j\in\{1,\ldots,m\}$, the vertices $w_j$, $c_j$ and $w_j'$ are not occupied by the agents of \divider.
\item For $i\in\{1,\ldots,2n\}$, if $x_i''$ is occupied by an agent of \divider, then the vertex is occupied by $D_i$ and the value of the variable $x_{i}=\true$, and 
if $\overline{x}_i''$ is occupied by an agent of \divider, then the vertex is occupied by $D_i$ and the value of the variable $x_{i}=false$.
\end{itemize} 

Since $\psi=\false$, there is $j\in\{1,\ldots,m\}$ such that $C_j=\false$. Then for $c_j$, we have that the vertices of $N_G[c_j]$ are not occupied by the agents of \divider.
This mean that Facilitator wins by the next two moves: $R$ is moved to $w_j$ and then to $c_j$, and $R$ is moved to $w_j'$ and then to $c_j$. It follows that Facilitator wins on $G$ in at most $\tau$ steps.

This concludes the proof of \classPSPACE-hardness for \probRT. 

\medskip
The proof for \probR is similar but more complicated. Observe that in the winning strategy for \divider for the case $\varphi=\true$, it is crucial that $R$ and $J$ are forced to move toward each other along a shortest path between their positions, because of the limit of the number of steps. In \probR, we have no such a limitation and Facilitator can use other strategies. However, we can modify the construction of the graph to make Facilitator behave exactly in the same way as in the above proof or loose immediately. 

\begin{figure}[ht]
\centering
\scalebox{0.7}{
\input{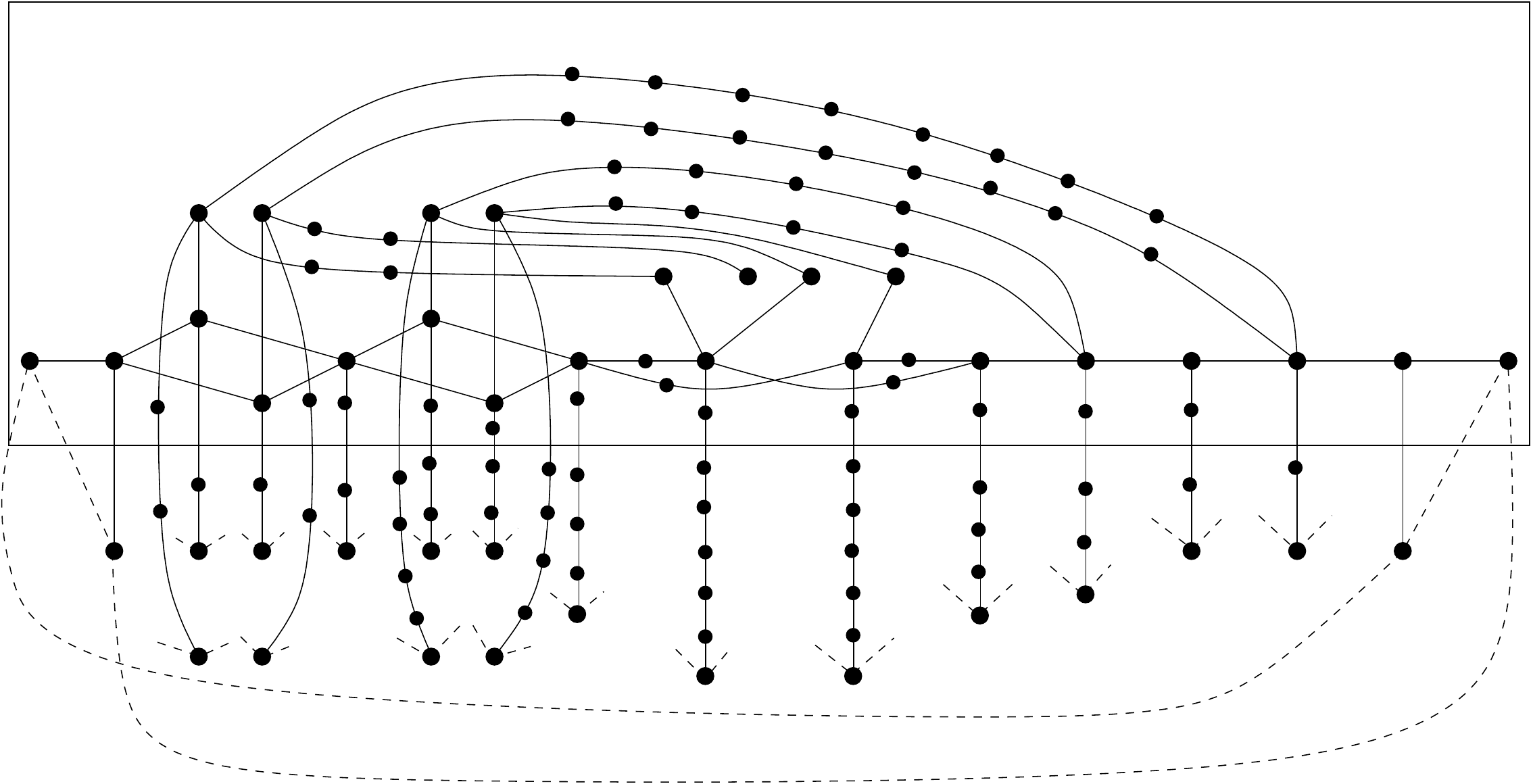_t}}
\caption{The construction of $G'$ for $G$ shown in Figure~\ref{fig:ps-hardness}.}
\label{fig:ps-hardness-two}
\end{figure}  

We construct the graph $G'$ starting from $G$ as follows (see Figure~\ref{fig:ps-hardness-two}).
\begin{itemize}
\item Construct a copy of $G$.
\item For $i\in\{0,\ldots,n\}$, construct a vertex $u_i$, make it adjacent to $s$ and $t$, and join it with $u_i$ by a path $L_i$ of length $2i+1$.
\item For $i\in\{1,\ldots,n\}$, 
\begin{itemize}
\item construct vertices $a_i,\overline{a}_i,a_i',\overline{a}_i'$ and make them adjacent to $s$ and $t$, 
\item join $a_i$ with $x_{2i-1}$ by a path $S_i$ and join $\overline{a}_i$ with $\overline{x}_{2i-1}$ by a path $\overline{S}_i$ of length $2i+1$, 
\item join $a_i'$ with $x_{2i-1}'$ by a path $S_i'$ and join $\overline{a}_i'$ with $\overline{x}_{2i-1}'$ by a path $\overline{S}_i'$ of length $2i$. 
\end{itemize}
\item For $i\in\{0,\ldots,2n\}$, construct a vertex $v_i'$, make it adjacent to $s$ and $t$, and join it with $v_i$ by a path $L_i'$ of length $i+1$.
\item For $j\in\{1,\ldots,m\}$, construct a vertex $c_j'$, make it adjacent to $s$ and $t$, and join it with $c_j$ by a path $F_j$ of length $2n+3$. 
\end{itemize}
Let $Y=\{u_0,\ldots,u_n\}\cup\big(\bigcup_{i=1}^n\{a_i,\overline{a}_i,a_i',\overline{a}_i'\}\big)\cup\{v_0',\ldots,v_{2n}'\}\cup\{c_1',\ldots,c_m'\}$.
Then we define $k'=k+|Y|=9n+m+4$.

 We claim that $\varphi$ evaluates \true if and only if \divider with $k'$ agents has a winning strategy in Rendezvous Game with Adversaries.
 
 Assume that $\varphi=\true$. We describe a winning strategy for \divider. The $k=2n+2$ agents $D_1,\ldots,D_k$ are initially placed exactly as in the proof for \probRT. The remaining $|Y|$ agents are placed in the vertices of the set $Y$; we call these agents \emph{auxiliary}. The agents  $D_1,\ldots,D_k$ are using essentially the same strategy   as in the proof for $\probRT$ (we call this strategy \emph{old}). We use the same notation $X_i$ to denote the set of vertices occupied by these agents after the $i$-th step of the game.
 The auxiliary agents force Facilitator to move $R$ and $J$ in the same way as in the previous proof.
For  $i\geq 1$, we denote by $X_i'$ the set of vertices occupied by the agents of \divider after the $i$-th step of the game; $X_0'=X_0\cup Y$.
 
 We can assume that Facilitator moves ether $R$ or $J$ to an adjacent vertex by the first move. Suppose that $R$ is moved to $u_0$ and $J$ keeps the old position in $t$. Then \divider moves $D_{2n+1}$ from $z$ to $s$ and the agent from $v_0'$ is moved to $v_0$. Observe that $R$ and $J$ are now in distinct connected components of $G-X_1$ and \divider wins by the trivial strategy, that is, by keeping all the agents in their current position.  Similarly, if $J$ is moved to $v_0$ and $R$ remains in $s$, then 
 \divider moves $D_{2n+1}$ to $t$ and the agent from $u_0'$ is moved to $u_0$. Again, $X_1$ separates $R$ and $J$, that is, \divider wins. Assume that both $R$ and $J$ are moved in the first step of the game. Then \divider responds by moving $D_1,\ldots,D_{2n+2}$ using the old strategy. The auxiliary agents are moved to adjacent vertices along the paths $L_i$ for $i\in\{0,\ldots,n\}$, $S_i,\overline{S}_i,S_i',\overline{S}_i'$ for $i\in\{1,\ldots,n\}$, $L_i'$ for $u\in\{0,\ldots,2n\}$ and $F_j$ for $j\in\{1,\ldots,m\}$. By the subsequent moves, these agents are moved further along these paths until they reach the end-vertices. If an auxiliary agent is unable to enter a vertex, because it is occupied by an agent of Facilitator, \divider waits until the vertex get vacated and then moves the agent there.
 
 Assume inductively for $i=0,\ldots,n$ that after the $2i+1$-th step of the game, $R$ is in $u_i$, $J$ is in $v_{2i}$ and the agents $D_1,\ldots,D_{k}$ are occupying the positions according to the old strategy. 
 Notice if $i=0$, then $s$ and $t$ are occupied by $D_{2n+1}$ and $D_{2n+2}$. If $i\geq 1$, then the vertices $x_{2i-1}',\overline{x}_{2i-1}'$ and the vertex $v_{2i-1}$ are occupied by  auxiliary agents of \divider. Moreover, the vertices that are adjacent to $u_i$ in $L_i$ and to $v_{2i}$ in $L_{2i}'$ are also occupied by  auxiliary agents. 
 This means that neither $R$ or $J$ can move ``backward'' or use $L_{i}$ or $L_{2i}'$.
 
 Suppose that $i<n$. Notice that the vertices of $S_{i+1}'$ and  $\overline{S}_{i+1}'$ that are adjacent to $x_{2i+1}'$ and $\overline{x}_{2i+1}'$  are occupied by auxiliary agents. If Facilitator does not move $R$ to an adjacent vertex, i.e., either to  $x_{2i+1}'$ or $\overline{x}_{2i+1}'$, then \divider moves the agents to  $x_{2i+1}'$ and $\overline{x}_{2i+1}'$ and wins by the trivial strategy.  Similarly, the vertex adjacent to $v_{2i+1}$ in $L_{2i+1}'$ is occupied by an auxiliary agent. Hence, if $J$ is not moved, this agents enters $v_{2i+1}$ and  $J$ gets separated from $R$. 
 
 Assume that $R$ and $J$ are moved to adjacent vertices. \divider responds using the old strategy. Assume that $R$ is moved to $x_{2i+1}'$ as the other case is symmetric. Recall that according to the old strategy $D_{2i+1}$ is moved to $x_{2i+1}$. Notice also that $u_{i}$ and $v_{2i}$ are occupied by auxiliary agents. Moreover, the neighbors of 
 $x_{2i+1}$, $\overline{x}_{2i+1}$, $x_{2i+1}'$, $\overline{x}_{2i+1}'$, $u_{i+1}$,   $v_{2i+1}$ and $v_{2i+2}$
 in $S_{i+1}$, $\overline{S}_{i+1}$, $S_{i+1}'$, $\overline{S}_{i+1}'$, $L_{i+1}$ and $L_{2i+1}$, respectively, are  occupied by auxiliary agents. If $R$ is not moved, then an agent enters  $u_{i+2}$ and $R$ gets separated from $J$. If $J$ is not moved, then agents enter  $x_{2i+1}$, $\overline{x}_{2i+1}$ and  $v_{2i+2}$. Again, $R$ and $J$ are in distinct components of $G-X_{2i+2}$. Suppose that $J$ is moved to one of the neighbors of $v_{2i+1}$ in $Q_{2i+1}$ or $\overline{Q}_{2i+1}$. \divider responds by moving agents to $v_{2i+1}$,  $x_{2i+1}$ and $\overline{x}_{2i+1}$ and wins. We conclude that both $R$ and $J$ should be moved ``forward'' to $u_{i+1}$ and $v_{2i}$. Then \divider responds using the old strategy.   
 
 Using these arguments, we obtain after $2n+1$ steps of the game either \divider already separated $R$ and $J$ and won or 
  the following configuration is achieved:
\begin{itemize}
\item $R$ is in $u_n$ and $J$ is in $v_{2n}$ and the vertices of the sets $N_{G'}(u_n)\setminus\{w_1,\ldots, w_m\}$ and  $N_{G'}(v_{2n})\setminus\{w_1',\ldots, w_m'\}$ are occupied by auxiliary agents.
\item For every $j\in\{1,\ldots,m\}$, the vertex at distance two from $c_i$ in $F_i$ is occupied by  auxiliary agents. 
\item For $i\in\{1,\ldots,n\}$, $D_i$ is either  in $x_{i}''$ or in $\overline{x}_{i}''$, and for the corresponding choice of the values of  the Boolean variables $x_1,\ldots,x_{2n}$,
$\psi=C_1\wedge\ldots \wedge C_m=\true$.
\end{itemize}  
If Facilitator  moves neither  $R$ nor $J$ to adjacent vertices, \divider moves auxiliary agents in two steps to  $c_1,\ldots,c_m$ and separates $R$ and $J$. Assume that $R$ is moved from $u_n$ to $w_j$. Then the auxiliary agent that is in the vertex adjacent to $u_n$ in the path $L_n$ is moved to $u_n$ and one of the agents $D_1,\ldots,D_{2n}$ is moved to $c_j$. Recall that such an agent exists, because $C_j=\true$. Then $R$ is separated from $J$. Similarly, if $J$ is moved to $w_j'$, then an auxiliary agent is moved to $v_{2n}$ and one of the agents $D_1,\ldots,D_{2n}$ is moved to $c_j$. Then \divider wins.

This completes the proof that if $\varphi$ evaluates $\true$, then  \divider with $k'$ agents has a winning strategy.

To show that  if $\varphi$ evaluates \false, then Facilitator has a winning strategy, we use the same arguments as in the analogous proof for \probRT.
 If there is a vertex of  $ N_G(s)\cap N_G(t)$ that is not occupied by the agents of \divider, Facilitator wins in one step by moving $R$ and $J$ to this vertex.
 Assume that all the vertices of $ N_G(s)\cap N_G(t)$ are occupied by the agents of \divider in the beginning of the game. In particular, we have that every vertex of $Y$ is occupied by one agent. Now Facilitator uses the same strategy as in the proof for \probRT. To see that this is a winning strategy, it is sufficient to observe that the agents of \divider that are placed in the vertices of $Y$ are out of game and we can ignore them in the analysis of the strategy of Facilitator. 
 
 We obtain that $\varphi$ evaluates \true if and only if \divider with $k'$ agents has a winning strategy in Rendezvous Game with Adversaries. Therefore, \probR is \classPSPACE-hard.
 \end{proof}
 
 \section{\probRT for graph of bounded neighborhood diversity}\label{sec:diversity}
 In this section, we show that \probRT is \classFPT when parameterized by $\tau$ and the neighborhood  diversity of the input graph.
 
 The notion of neighborhood diversity was introduced by Lampis in~\cite{Lampis12}. It is convenient for us to define this notion in terms of modules. Let $G$ be a graph. A set of vertices $U\subseteq V(G)$ is a \emph{module} if for every $v\in V(G)\setminus U$, either $N_G(v)\cap U=\emptyset$ or $U\subseteq N_G(v)$. It is said that is a module $U$ is a \emph{clique module} if $U$ is a clique, and $U$ is an \emph{independent module} if $U$ is an independent set. We say that a partition $\{U_1,\ldots,U_\ell\}$  
 of $V(G)$ into clique and independent modules is a \emph{neighborhood decomposition}.
 The \emph{neighborhood diversity} of a graph $G$ is the minimum $\ell$ such that $G$ has a neighborhood decomposition with $\ell$ modules;
 we use $\nd(G)$ to denote the neighborhood diversity of $G$. 
 The value of $\nd(G)$ and the corresponding partition of $V(G)$ into clique and independent modules can be computed in polynomial (linear) time~\cite{Lampis12}. 
 Given a neighborhood decomposition $\mathcal{U}=\{U_1,\ldots,U_\ell\}$, we define the \emph{quotient} graph $\mathcal{G}$ as the graph with the vertex set $\{1,\ldots,\ell\}$ such that $i$ is adjacent to $j$ for distinct $i,j\in\{1,\ldots,\ell\}$ if and only if $u\in U_i$ is adjacent to $v\in U_j$ in $G$. For a vertex $v\in V(G)$, $\id(v)=i$ if $v\in U_i$.
 For a multiset of vertices $X=\{x_1,\ldots,x_r\}$, $\id(X)$ denotes the multiset of indices $\{\id(x_1),\ldots,\id(x_r)\}$.
 
Let  $\mathcal{U}=\{U_1,\ldots,U_\ell\}$ be a neighborhood decomposition of $G$. Notice that every bijective mapping $\varphi\colon V(G)\rightarrow V(G)$ such that 
$\varphi(U_i)=U_i$ for $i\in \{1,\ldots,\ell\}$ is an automorphism of $G$. We say that $\varphi$ is an automorphism that \emph{agrees with $\mathcal{U}$}.

For an automorphism $\varphi$, we extend it on multisets  of vertices in the natural way. Namely, if $X=\{x_1,\ldots,x_r\}$ is a multiset of vertices of $G$, 
$\varphi(X)=\{\varphi(x_1),\ldots,\varphi(x_r)\}$. Similarly, for a pair $(X,Y)$ of multisets without common elements, $\varphi(X,Y)=(\varphi(X),\varphi(Y))$.

% Let $U$ be a clique or independent module of a graph $G$. For (not necessarily distinct) $x,y\in U$, we define $\varphi_{x,y}\colon V(G)\rightarrow V(G)$ by setting
% $$\varphi_{x,y}(v)=
% \begin{cases}
% y&\mbox{if }v=x,\\
% x&\mbox{if }v=y,\\
% v&\mbox{if }v\neq x,y
% \end{cases}
% $$
% for $v\in V(G)$,
% that is, $\varphi_{x,y}$ switches $x$ and $y$. Since $U$ is a clique or independent module, we can make the following observation.
% 
% \begin{observation}\label{obs:aut}
% If $U$ is a clique or independent module of $G$ and $x,y\in U$, then $\varphi_{x,y}$ is an automorphism of $G$.
%  \end{observation}
% 
% Let $\mathcal{U}=\{U_1,\ldots,U_\ell\}$ be a neighborhood decomposition of $G$. 
% 
% For $i,j\in \{1,\ldots,\ell\}$, $x,y\in U_i$ and $x',y'\in U_j$, $\psi_{x,y}^{x',y'}=\varphi_{x,y}\circ\varphi_{x',y'}$. By Observation~\ref{obs:aut}, $\psi_{x,y}^{x',y'}$ is an automorphism of $G$. Note that if $i\neq j$,  $\psi_{x,y}^{x',y'}(v)=\psi_{x',y'}^{x,y}(v)$ for $v\in V(G)$. 
%
%% Let $F$ and $F'$ be two-element subsets of $V(G)$. We say that $F$ and $F'$ are \emph{equivalent} (with respect to $\mathcal{U}$) if $|F\cap U_i|=|F\cap U_i|$ for all $i\in\{1,\ldots,\ell\}$.
%% 
% Let $X=\{x_1,\ldots,x_r\}$ be a multiset of vertices of $G$ for an automorphism $\alpha$ of $G$, we denote $\alpha(X)=\{\alpha(x_1),\ldots,\alpha(x_r)\}$, that is, we extend automorphisms of $G$ on multisets in natural way.
 
 Let $G$ be a connected graph and let $\mathcal{U}=\{U_1,\ldots,U_\ell\}$ be a neighborhood decomposition of $G$ such that $\ell=\nd(G)$. Suppose that $s,t\in V(G)$ such that $s$ and $t$ are distinct modules of $\mathcal{U}$.  We consider our Rendezvous Game with Adversaries on $G$ in $\tau$ steps. 
 
 Consider a strategy of \divider with $k$ agents, that is, a family of functions $d_i\colon \mathcal{P}_G^k\rightarrow \mathcal{D}_G^k$ for $i\in\{0,\ldots,\tau-1\}$, where 
  $d_i\colon \mathcal{P}_G^k\rightarrow \mathcal{D}_G^k$ for $i\in\{0,\ldots,\tau-1\}$ such that for $i\in\{1,\ldots,\tau-1\}$. Recall that
$d_i$ maps $(F,D)\in \mathcal{P}_G^k$ to $D'\in \mathcal{F}_G^k$, where  $D$ and $D'$ are adjacent and $D'$ is compatible with $F$, and $d_0$ maps $(\{s,t\},\emptyset)$ to $D'\in\mathcal{D}_G^k$ compatible with $\{s,t\}$.

 Recall that a  strategy of \divider can be  represented as a rooted tree $\mathcal{T}_G^k(\tau)$ 
 of height $\tau$. Each node $v\in V(\mathcal{T}_G^k(\tau))$ is associated with a position $P_v\in \mathcal{P}_G^k$, and 
 \begin{itemize}
\item $P_r=(\{s,t\},d_0(\{s,t\},\emptyset))$ is associated with the root $r$ of $\mathcal{T}_G^k(\tau)$,
\item for every node $v\in V(\mathcal{T}_G^k(\tau))$ with $P_v=(F,D)$ at 
at distance $i\leq\tau-1$ from the root, there is a child $u$ of $v$ for every $(F',D')\in \mathcal{P}_G^k$ such that 
(i) $F'$ is adjacent to $F$ and compatible with $D$, and (ii) $D'=d_i(F',D)$, and $u$ is associated  with  $P_u=(F',D')$. 
\end{itemize} 
From now, we consider such a representation.

By Observation~\ref{obs:win}, 
$\mathcal{T}_G^k(\tau)$ is a winning strategy for \divider if and only if  $F$ is a set of two distinct vertices for every $P_v=(F,D)$ for  $v\in V(\mathcal{T}_G^k(\tau))$.
By our assumption that $s$ and $t$ are in distinct modules, we can refine the claim.

\begin{observation}\label{obs:win-ref}
The tree $\mathcal{T}_G^k(\tau)$ is a winning strategy for \divider if and only if  for every $v\in V(\mathcal{T}_G^k(\tau))$ with  $P_v=(F,D)$ for  $v\in V(\mathcal{T}_G^k(\tau))$, $F$ contains at most one vertex of every module $U_i$ for $i\in\{1,\ldots,\ell\}$.
\end{observation}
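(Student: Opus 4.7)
The backward implication is immediate: if $F$ contains at most one vertex of every module at every node of $\mathcal{T}_G^k(\tau)$, then the two entries of $F$ lie in different modules and are in particular distinct, so Observation~\ref{obs:win} yields that the strategy is winning.

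For the forward implication, the plan is to argue by contradiction. Assume the strategy is winning but some node has an associated $F$ containing two vertices of a common module. Let $v$ be such a node of minimum depth in $\mathcal{T}_G^k(\tau)$, with $P_v=(F_v,D_v)$, $F_v=\{x,y\}$, $x\neq y$, and $x,y\in U_i$. Since $s$ and $t$ lie in distinct modules, $v$ is not the root, so $v$ has a parent $p$ with $P_p=(F_p,D_p)$ and $F_p=\{a,b\}$; the minimality of $v$ forces $a$ and $b$ to lie in distinct modules. The adjacency of $F_p$ and $F_v$ provides a bijection which, up to relabeling, sends $a\mapsto x$ and $b\mapsto y$, so $a\in N_G[x]$ and $b\in N_G[y]$; compatibility of $F_v$ with $D_p$ also yields $\{x,y\}\cap D_p=\emptyset$.

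The core of the proof is to exhibit an alternative Facilitator move from $F_p$ of the form $\{w,w\}$ for some $w\in\{x,y\}$: since $x,y\notin D_p$, such a move is compatible with $D_p$ and creates a sibling of $v$ in $\mathcal{T}_G^k(\tau)$ whose associated $F$ is not a set of two distinct vertices, contradicting Observation~\ref{obs:win}. The adjacency of $\{w,w\}$ to $F_p$ boils down to checking $w\in N_G[a]\cap N_G[b]$. If $b\in N_G[x]$, then $w=x$ works immediately. Otherwise $b\neq x$ and $b\not\sim x$, and a short case analysis on whether $b=y$, $b\in U_i\setminus\{y\}$, or $b\notin U_i$, each handled via the module property (any vertex outside $U_i$ adjacent to some vertex of $U_i$ is adjacent to every vertex of $U_i$), shows that $a\sim y$ and $b\in N_G[y]$, so $w=y$ works. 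The subcase $b\notin U_i$ is actually vacuous, since then the module property forces $b\sim x$, contradicting our standing assumption; the subcase $b\in U_i\setminus\{y\}$ forces $U_i$ to be a clique module (an independent module cannot contain two adjacent vertices), but the module property then gives $a\sim y$ as required.

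The main obstacle is this final case analysis. The module property tames it completely by ensuring that vertices outside $U_i$ behave uniformly with respect to all of $U_i$, which is exactly the structural feature that lets the stronger invariant (at most one vertex per module) propagate along the strategy tree from the root $F_r=\{s,t\}$.
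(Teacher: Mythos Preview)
Your proof is correct and follows the same idea as the paper's one-sentence argument: if Facilitator can move $R$ and $J$ from $\{a,b\}$ into two (distinct, unblocked) vertices of a common module $U_i$, then Facilitator could instead have moved them to a single such vertex, contradicting Observation~\ref{obs:win}. The paper phrases this informally by distinguishing whether both agents enter $U_i$ from outside or one was already there; your contradiction via a minimum-depth violating node and the explicit construction of the sibling $\{w,w\}$ is just a rigorous unpacking of the same observation.

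One minor remark on presentation: your subcase $b\in U_i\setminus\{y\}$ is in fact vacuous. Once you conclude that $U_i$ is a clique module (from $b\sim y$), the same clique property forces $b\sim x$ (since $b\neq x$ and both lie in $U_i$), contradicting your standing Case~2 assumption $b\nsim x$. So only the subcase $b=y$ survives in Case~2, and the analysis could be shortened accordingly; but this does not affect correctness.
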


\begin{proof}
To see the observation, it is sufficient to note that if both agents of \divider are moved to the same module or if one of the agents is in a module $U_i$ and the other is moved to $U_i$, then Facilitator can move the agents into the same vertex.
\end{proof}

  Let $P=(F,D)$ and $P'=(F',D')$ be positions in Rendezvous Game with Adversaries on $G$. We say that $P$ and $P'$ are isomorphic, if there an automorphism $\varphi$ of $G$ such that $P'=\varphi(P)$. We also say that $P$ and $P'$ are \emph{isomorphic with respect to $\varphi$} for such an automorphism $\varphi$.  
 We use the following straightforward observation about positions in the game.

 \begin{observation}\label{obs:isom-pos}
 Let $P$ and $P'$ be isomorphic positions in Rendezvous Game with Adversaries on $G$. Then \divider can win in at most $r$ steps if the game starts from $P$ if and only if \divider can win in $r$ steps if the game starts from $P'$.  
  \end{observation}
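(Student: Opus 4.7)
The plan is to transport a winning strategy across the isomorphism. Let $\varphi$ be an automorphism of $G$ with $P'=\varphi(P)$. First I would record three structural invariants that follow immediately from $\varphi$ being a bijective adjacency-preserving map: (i) for multisets $X,Y$ of the same size, $X$ and $Y$ are adjacent (in the sense of the definition preceding Observation~\ref{obs:adj}) if and only if $\varphi(X)$ and $\varphi(Y)$ are adjacent; (ii) $F\cap D=\emptyset$ if and only if $\varphi(F)\cap\varphi(D)=\emptyset$, so compatible pairs map to compatible pairs; and (iii) $F=\{x,x\}$ for some $x\in V(G)$ if and only if $\varphi(F)=\{\varphi(x),\varphi(x)\}$, so Facilitator's agents meet in $P$ exactly when they meet in $\varphi(P)$.

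Next, I would use the tree representation of a strategy described before Observation~\ref{obs:win}. Suppose \divider has a winning strategy in at most $r$ steps starting from $P$, encoded by a tree $\mathcal{T}$ of height $r$ rooted at a node labelled $P$. I would construct the corresponding tree $\mathcal{T}'$ rooted at $P'$ by labelling each node of $\mathcal{T}'$ with $\varphi(Q)$, where $Q$ is the label of the corresponding node of $\mathcal{T}$; equivalently, \divider's strategy from $P'$ is defined by $d_i'(F',D')=\varphi\bigl(d_i(\varphi^{-1}(F'),\varphi^{-1}(D'))\bigr)$. By invariant (i), the parent-child adjacencies required in the definition of the tree are preserved, and by (ii) every position occurring in $\mathcal{T}'$ is a legitimate element of $\mathcal{P}_G^k$. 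Finally, by invariant (iii) together with Observation~\ref{obs:win}, $\mathcal{T}'$ is a winning strategy for \divider starting from $P'$ in at most $r$ steps, since for every node the Facilitator component of its label is a set of two distinct vertices in $\mathcal{T}'$ exactly when it is in $\mathcal{T}$.

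The converse direction is symmetric: apply the same construction with $\varphi^{-1}$, which is again an automorphism of $G$, to a winning strategy from $P'$ and obtain one from $P$. The main obstacle is purely bookkeeping, namely verifying invariants (i)--(iii) and checking that the extension of $\varphi$ to multisets commutes with the moves of both players; no combinatorial content beyond the definition of an automorphism is needed.
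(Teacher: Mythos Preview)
Your proposal is correct and is precisely the routine verification the paper has in mind: the paper does not give any proof of this observation, merely labeling it ``straightforward'' and moving on. Your argument via transporting the strategy tree along $\varphi$ (and $\varphi^{-1}$ for the converse), together with the three invariants you list, is exactly the natural way to fill in the details.
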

 
We say that  $\mathcal{T}_G^k(\tau)$ is a \emph{uniform} strategy if for every node $v$ with $P_v=(F,D)$ and each of its two children $u_1$ and $u_2$  with $P_{u_1}=(F_1,D_1)$ and $P_{u_2}=(F_2,D_2)$, the following holds: if $(F_1,D)$ and $(F_2,D)$ are isomorphic with  respect to some automorphism $\varphi$ of $G$ that agrees with $\mathcal{U}$, then $P_{u_1}$ and $P_{u_2}$ are isomorphic with resect to some automorphism $\psi$ of $G$ that agrees with $\mathcal{U}$.
Informally, if possible moves of the agents of Facilitator to $F_1$ and $F_2$ are the same with respect to moving them to the same modules, then the response of \divider is also the same (up to an automorphism that agrees with $\mathcal{U}$). Observation~\ref{obs:isom-pos} immediately implies the following.

 \begin{observation}\label{obs:uiniform} 
If \divider has a winning strategy  in Rendezvous Game with Adversaries on $G$, then \divider has a uniform winning strategy.  
\end{observation}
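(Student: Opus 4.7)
The plan is to transform an arbitrary winning strategy into a uniform one by processing the tree $\mathcal{T}_G^k(\tau)$ top-down and forcing Divider's response on each $\mathcal{U}$-isomorphism class of Facilitator moves to be chosen in an equivariant way. Concretely, I would prove by induction on $r \in \{0,1,\ldots,\tau\}$ the following statement: if Divider can win from a position $P=(F,D)$ in at most $r$ remaining steps, then Divider has a \emph{uniform} strategy of height $r$ winning from $P$. The base case $r=0$ is vacuous (the strategy tree consists of a single root). The final observation then follows by applying the inductive statement to the initial position $P_r=(\{s,t\},d_0(\{s,t\},\emptyset))$ produced by the original winning strategy.

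For the inductive step, suppose Divider can win from $P=(F,D)$ in at most $r\geq 1$ steps. Enumerate all possible Facilitator moves, that is, all $F'\in\mathcal{F}_G$ adjacent to $F$ and compatible with $D$, obtaining the children positions $(F_1,D),\ldots,(F_p,D)$. Partition $\{1,\ldots,p\}$ into classes $\mathcal{C}_1,\ldots,\mathcal{C}_q$ according to the equivalence: $j\sim j'$ if there is an automorphism $\varphi$ of $G$ that agrees with $\mathcal{U}$ with $\varphi(F_j)=F_{j'}$ and $\varphi(D)=D$. For each class $\mathcal{C}_a$ pick a canonical representative $j^*_a\in \mathcal{C}_a$. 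Because the original strategy wins from $P$ in $r$ steps, for $j^*_a$ there is an answer $D_{j^*_a}$ (adjacent to $D$ and compatible with $F_{j^*_a}$) from which Divider can still win in $r-1$ steps; by induction, we have a uniform winning subtree $\mathcal{S}_{j^*_a}$ of height $r-1$ rooted at $(F_{j^*_a},D_{j^*_a})$. For any other $j\in\mathcal{C}_a$, fix a witnessing automorphism $\varphi_{j}$ with $\varphi_{j}(F_{j^*_a})=F_{j}$ and $\varphi_{j}(D)=D$, and define Divider's answer by $D_{j}:=\varphi_{j}(D_{j^*_a})$ and the subtree below $(F_j,D_j)$ to be $\varphi_j(\mathcal{S}_{j^*_a})$, i.e., the image of $\mathcal{S}_{j^*_a}$ under the natural action of $\varphi_j$ on positions.

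I claim the resulting tree is a uniform winning strategy of height $r$ from $P$. It is a valid Divider strategy because $\varphi_j$ is a graph automorphism, so $\varphi_j(D_{j^*_a})$ is adjacent to $\varphi_j(D)=D$ and compatible with $\varphi_j(F_{j^*_a})=F_j$, and similarly the image subtree consists of legal positions. It is winning: by Observation~\ref{obs:isom-pos} (applied with the automorphism $\varphi_j$, which agrees with $\mathcal{U}$ and hence preserves the condition of Observation~\ref{obs:win-ref} of $F$ containing at most one vertex per module), winning from $(F_{j^*_a},D_{j^*_a})$ transfers to winning from $(F_j,D_j)=\varphi_j((F_{j^*_a},D_{j^*_a}))$, and the image subtree realises this win move by move. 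Uniformity at the root is immediate from the construction: if two children $(F_{j},D)$ and $(F_{j'},D)$ lie in the same class, then $P_{u_j}=\varphi_j(F_{j^*_a},D_{j^*_a})$ and $P_{u_{j'}}=\varphi_{j'}(F_{j^*_a},D_{j^*_a})$ are related by $\varphi_{j'}\circ\varphi_{j}^{-1}$, which agrees with $\mathcal{U}$; if they lie in different classes, the uniformity condition is vacuous. Uniformity inside each subtree below the root follows from the uniformity of $\mathcal{S}_{j^*_a}$, transported verbatim by $\varphi_j$, because the conjugate $\varphi_j\circ\psi\circ\varphi_j^{-1}$ of a $\mathcal{U}$-agreeing automorphism $\psi$ is again a $\mathcal{U}$-agreeing automorphism.

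The only subtlety is to ensure that the canonical choice of representative and of $\varphi_j$ is consistent within each class and that the transported subtrees remain well-defined uniform strategies. Both points reduce to the same observation used throughout: $\mathcal{U}$-agreeing automorphisms form a group acting on positions and on the space of strategy trees, and both the property of being a winning strategy (Observations~\ref{obs:win}, \ref{obs:win-ref}, \ref{obs:isom-pos}) and the property of being uniform are invariant under this action. Hence the induction goes through and Divider has a uniform winning strategy whenever a winning strategy exists.
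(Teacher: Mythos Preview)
Your proof is correct and follows the same approach as the paper, which simply states that the observation follows immediately from Observation~\ref{obs:isom-pos}. You have supplied the explicit inductive construction that the paper leaves to the reader: choosing a representative in each isomorphism class of Facilitator moves and transporting Divider's response (and, recursively, the uniform subtree obtained by induction) via the witnessing $\mathcal{U}$-agreeing automorphisms, with Observation~\ref{obs:isom-pos} guaranteeing that the transported subtrees remain winning.
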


From now on, we assume that  $\mathcal{T}_G^k(\tau)$ is uniform.

Let $u_1$ and $u_2$ be distinct children of a node $v$ of $\mathcal{T}_G^k(\tau)$. We say that $u_1$ and $u_2$ are \emph{equivalent} if $P_{u_1}$ and $P_{u_2}$ are isomorphic with respect to some   automorphism $\varphi$ of $G$ that agrees with $\mathcal{U}$. We also say that two subtrees $T_1$ and $T_2$ rooted in $u_1$ and $u_2$ are \emph{equivalent} if $u_1$ and $u_2$ are equivalent.  It is straightforward  to see that the introduced relation is indeed an equivalence relation.  
Observe that, because the strategy is uniform, for %$P_v=(F,D)$, 
$P_{u_1}=(F_1,D_1)$ and $P_{u_2}=(F_2,D_2)$, $u_1$ and $u_2$ are equivalent  if and only if $|F_1\cap U_i|=|F_2\cap U_i|$ for all $i\in\{1,\ldots,\ell\}$.

Since  $\mathcal{T}_G^k(\tau)$ is uniform, to represent the strategy, it is sufficient to keep one representative from each class of equivalent children. Given   
$\mathcal{T}_G^k(\tau)$, we construct the \emph{reduced strategy} $\hat{\mathcal{T}}_G^k(\tau)$ obtained by the following operation applied top-down staring from the root:
for a node $v$ and a class of equivalent subtrees rooted in the children of $v$, delete all the elements of the class except one.  Observe that given a reduced strategy $\hat{\mathcal{T}}_G^k(\tau)$, we can reconstruct   $\mathcal{T}_G^k(\tau)$.  Notice also that by Observations~\ref{obs:win-ref} and \ref{obs:isom-pos}, the strategy is a winning strategy for \divider if and only if  for every $v\in V(\hat{\mathcal{T}}_G^k(\tau))$ with  $P_v=(F,D)$ for  $v\in V(\mathcal{T}_G^k(\tau))$, $F$ contains at most one vertex of every module $U_i$ for $i\in\{1,\ldots,\ell\}$. 

Now we construct the tree that represents all possible moves of Facilitator in $\tau$ steps between the modules without the agents of \divider. We define the rooted tree  $\mathcal{T}_G^*(\tau)$ 
 of height $\tau$ with each node $v$ associated with a pair $X_v=\{p,q\}$ of not necessarily distinct elements of $\{1,\ldots,\ell\}$
 such that
 \begin{itemize}
\item $X_r=\{s,t\}$ is associated with the root $r$ of $\mathcal{T}_G^*(\tau)$,
\item for every node $v\in V(\mathcal{T}_G^*(\tau))$ with $X_v=\{p,q\}$ at distance at most $\tau-1$ from the root, there is a child $u$ of $v$ with $X_v=\{p',q'\}$ for every $\{p',q'\}$ adjacent to $\{p,q\}$ in the quotient graph $\mathcal{G}$.
\end{itemize} 
Observe that $\mathcal{T}_G^*(\tau)$ has at most $\binom{\ell+1}{2}^{\tau+1}$ nodes.

The tree $\hat{\mathcal{T}}_G^k(\tau)$ can be seen as a subtree of $\mathcal{T}_G^*(\tau)$. Formally, we define an injective mapping $\alpha\colon V(\hat{\mathcal{T}}_G^k(\tau))\rightarrow V(\hat{\mathcal{T}}_G^k(\tau))$ inductively top-down:
\begin{itemize}
\item for the root $r$ of $\hat{\mathcal{T}}_G^k(\tau)$, $\alpha(r)$  is the root of $\mathcal{T}_G^*(\tau)$,
\item if $\alpha(v)=u$ for $v\in V(\hat{\mathcal{T}}_G^k(\tau))$, then every child $v'$ of $v$ in $\hat{\mathcal{T}}_G^k(\tau)$ with $P_{v'}=(F,D)$ is mapped to the child $u'$ of $u$ in $\mathcal{T}_G^*(\tau)$
with $X_{u'}=\id(F)$.
\end{itemize} 
In particularly,  for every $v\in V(\hat{\mathcal{T}}_G^k(\tau))$ with  $P_{v}=(F,D)$, $X_{\alpha(v)}=\id(F)$. We say that the subtree of $\mathcal{T}_G^*(\tau)$ induced by 
$\alpha(V(\hat{\mathcal{T}}_G^k(\tau)))$ is a \emph{projection} of  $\hat{\mathcal{T}}_G^k(\tau)$ to $\mathcal{T}_G^*(\tau)$.
We use the following property of projections that immediately follows from the definition. 

\begin{observation}\label{obs:proj}
Let $u$ be a non-leaf node of $\mathcal{T}_G^*(\tau)$ with $X_{u}=\{p,q\}$. 
Let also  $u=\alpha(v)$ for some $v\in V(\hat{\mathcal{T}}_G^k(\tau))$ with $P_v=(F,D)$ and let 
$I_v=\{i\mid i\in\{1,\ldots,\ell\}\text{ and }U_i\subseteq D\}$.
Then a child $u'$ of $u$ in $\mathcal{T}_G^*(\tau)$ with $X_{u'}=\{p,q\}$ is a child of $u$ in the projection  of $\hat{\mathcal{T}}_G^k(\tau)$
if and only if $\{p',q'\}\in\{\{i,j\}\mid i\in N_{\mathcal{G}}[p]\setminus I_v\text{ and }j\in N_{\mathcal{G}}[q]\setminus I_v\}$.
%Let $u'$ be a child of $u$ in $\mathcal{T}_G^*(\tau)$ with $X_{u'}=\{p,q\}$. 
%Then $u'$ is not a node of the projection of  $\hat{\mathcal{T}}_G^k(\tau)$ if and only if
%either $u$ is not a node of the projection 
%or if $u=\alpha(v)$ for some $v\in V(\hat{\mathcal{T}}_G^k(\tau))$ with $P_v=(F,D)$ and
% $U_p\subseteq D$ for every $p\in X_{u'}$ such that $p\notin X_{w}$ for every child $w$ of $u$ in the projection of $\hat{\mathcal{T}}_G^k(\tau)$.
  %\begin{itemize}
%\item $U_p\subseteq D$ if $p\notin X_{w}$ for every child $w$ of $u$ in the projection of $\hat{\mathcal{T}}_G^k(\tau)$,
%\item $U_q\subseteq D$ if $q\notin X_{w}$ for every child $w$ of $u$ in the projection of $\hat{\mathcal{T}}_G^k(\tau)$.
%\end{itemize}
\end{observation}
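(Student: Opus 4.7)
My plan is to prove both directions of the biconditional by directly unpacking the definitions of $\hat{\mathcal{T}}_G^k(\tau)$, the projection, and the notions of adjacency of multisets, while exploiting the structural property that distinct modules of $\mathcal{U}$ are either fully adjacent or fully non-adjacent in $G$.

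First I would handle the ($\Rightarrow$) direction. Suppose $u'$ with $X_{u'}=\{p',q'\}$ is a child of $u$ in the projection, so there exists $v'\in V(\hat{\mathcal{T}}_G^k(\tau))$ with $\alpha(v')=u'$ and $P_{v'}=(F',D')$ for some $F'$ adjacent to $F$ and compatible with $D$, with $\id(F')=\{p',q'\}$. Compatibility $F'\cap D=\emptyset$ together with $p',q'\in\id(F')$ implies $U_{p'}\not\subseteq D$ and $U_{q'}\not\subseteq D$, hence $p',q'\notin I_v$. For the neighborhood condition, the adjacency of $F$ and $F'$ provides a bijection $\gamma\colon F\to F'$; for each pair $(x,\gamma(x))$ either $x=\gamma(x)$ (so $\id(\gamma(x))=\id(x)\in N_{\mathcal{G}}[\id(x)]$), or $x\gamma(x)\in E(G)$ with $\id(\gamma(x))\neq \id(x)$ (in which case the module property forces every vertex of $U_{\id(x)}$ to be adjacent to every vertex of $U_{\id(\gamma(x))}$, so these two indices are adjacent in $\mathcal{G}$). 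This gives exactly $p'\in N_{\mathcal{G}}[p]$ and $q'\in N_{\mathcal{G}}[q]$ (up to swapping the roles).

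For the ($\Leftarrow$) direction, assume $p'\in N_{\mathcal{G}}[p]\setminus I_v$ and $q'\in N_{\mathcal{G}}[q]\setminus I_v$. I construct a valid move $F'$ as follows. Pick $x'\in U_{p'}\setminus D$: such a vertex exists because $p'\notin I_v$, and if moreover $p'=p$ I specifically take $x'=x$, which lies outside $D$ by compatibility of $P_v$. Pick $y'\in U_{q'}\setminus D$ analogously, again with $y'=y$ whenever $q'=q$. Setting $F'=\{x',y'\}$, the bijection $x\mapsto x'$, $y\mapsto y'$ certifies adjacency of $F$ and $F'$: whenever the $p$-index is preserved we chose equality, and whenever $p'$ is strictly adjacent to $p$ in $\mathcal{G}$ the module property guarantees the required edge $xx'$ in $G$ (symmetrically for $y,y'$). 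By construction $F'\cap D=\emptyset$, so $F'$ is compatible with $D$ and $(F,F')$ is a legal Facilitator move. Hence $v$ has a child $v''$ in the full tree $\mathcal{T}_G^k(\tau)$ with $P_{v''}=(F',d_{i+1}(F',D))$ (where $i$ is the depth of $v$), and its $\id$-signature $\{p',q'\}$ matches $X_{u'}$; reduction to $\hat{\mathcal{T}}_G^k(\tau)$ keeps exactly one representative per equivalence class, so some child $v'$ of $v$ in $\hat{\mathcal{T}}_G^k(\tau)$ satisfies $\alpha(v')=u'$.

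The only mildly delicate point—the main obstacle, if any—is treating the corner cases where $p'=q'$ or where a module $U_{p'}$ is independent and reduced to a single usable vertex; in those situations one must verify that the chosen $x'$ and $y'$ can simultaneously be picked and still satisfy the adjacency bijection, which the argument above handles uniformly by preferring $x'=x$ (resp.\ $y'=y$) whenever the index is preserved and using the complete bipartite adjacency between distinct adjacent modules otherwise.
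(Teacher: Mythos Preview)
Your argument is correct and is precisely the definitional unpacking the paper has in mind; the paper itself gives no proof, stating only that the observation ``immediately follows from the definition.'' One small remark on the forward direction: your dichotomy ``either $x=\gamma(x)$ \ldots\ or $x\gamma(x)\in E(G)$ with $\id(\gamma(x))\neq\id(x)$'' omits the case where $x$ and $\gamma(x)$ are distinct adjacent vertices lying in the same clique module, but in that case $\id(\gamma(x))=\id(x)\in N_{\mathcal{G}}[\id(x)]$ trivially, so the conclusion is unaffected.
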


Note, in particular, that each leaf of the projection  of $\hat{\mathcal{T}}_G^k(\tau)$ is a leaf of $\mathcal{T}_G^*(\tau)$.

%\medskip
In our algorithm for \probRT, we check whether \divider with $k$ agents has a winning strategy on $G$. For this, we consider  $\mathcal{T}_G^*(\tau)$ and guess the projection $\mathcal{T}$ of a hypothetical reduced winning strategy tree by trying all subtrees of $\mathcal{T}_G^*(\tau)$ using brute force. For each $\mathcal{T}$, we verify whether \divider indeed has a strategy corresponding to $\mathcal{T}$ by checking whether \divider can respond to the moves of Facilitator in such a way that \divider is able to ensure that $\mathcal{T}$ has the required structure, according to Observation~\ref{obs:proj}.    

% \medskip
% Our algorithm for \probRT 
Checking whether \divider has a strategy corresponding to $\mathcal{T}$ 
is based on the results of 
 %Lokshtanov~\cite{Lokshtanov15} about parameterized complexity of the Integer Quadratic Programming problem.
 Lenstra~\cite{Lenstra83} (see also~\cite{Kannan87,FrankT87} for further improvements) about parameterized complexity of Integer Linear Programming. 
 The task of the \probILP problem is, given a $q \times p$ matrix $A$ over $\mathbb{Z}$ and a vector $b\in \mathbb{Z}^q$, decide whether there is a vector $x\in\mathbb{Z}^p$ such that $Ax\leq b$; we write $Ax\leq b$ to denote that for every $i\in\{1,\ldots,q\}$, the $i$-th element of the vector $Ax$ is at most the $i$-th element of $b$.
 Lenstra~\cite{Lenstra83} proved that \probILP is \classFPT when parameterized by $p$ and later this result was improved by Kannan~\cite{Kannan87}. Further, Frank and Tardos~\cite{FrankT87} proved that \probILP can be solved in polynomial space.  These results can be summarized in the following statement.
 
 \begin{proposition}[\cite{Lenstra83,Kannan87,FrankT87}]\label{prop:ipl} 
 \probILP can be solved in $\Oh(p^{2.5p+o(p)}\cdot L)$ time and polynomial in $L$ space, where $L$ is the number of bits in the input.
 \end{proposition}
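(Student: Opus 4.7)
The plan is essentially to invoke the cited literature rather than reprove the result, but let me sketch the classical route so the bounds in the statement can be traced to concrete arguments. I would start from Lenstra's original observation that \probILP is the problem of deciding whether the polyhedron $P=\{x\in\mathbb{R}^p : Ax\leq b\}$ contains an integer point, and then develop a recursive dimension-reduction scheme. First, compute an approximating ellipsoid for $P$ together with an LLL-reduced basis of the integer lattice $\mathbb{Z}^p$ in the geometry defined by that ellipsoid. If $P$ is ``round enough'' in this basis, an integer point can be read off directly. Otherwise, the reduced basis exposes a direction $v$ in which $P$ has small lattice width $w$, so that $P\cap\mathbb{Z}^p$ is covered by $\Oh(w)$ parallel hyperplanes; on each of these hyperplanes one obtains an ILP instance in dimension $p-1$ and recurses.

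The next step in the plan is to account for the running time. Kannan's refinement of the flatness theorem bounds the width $w$ at each recursion level by a polynomial in $p$, so that the total number of leaves in the recursion is at most $p^{\Oh(p)}$ and the per-node work is polynomial in $p$ and $L$; tightening the constants gives the $\Oh(p^{2.5p+o(p)}\cdot L)$ bound quoted in the proposition. Finally, to upgrade this to the polynomial-space guarantee, I would apply the Frank--Tardos technique, which replaces the large numerical coefficients generated by repeated lattice reductions with equivalent ones of polynomial bit-length, so that the entire recursion can be explored in a depth-first manner using only polynomial working memory.

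The main obstacle in carrying out such a proof from scratch is the effective, algorithmic form of the flatness theorem: one must bound, with sharp enough constants, the lattice width of a thin convex body while simultaneously making the lattice basis reduction and the enumeration of covering hyperplanes efficient enough that the recursion depth and per-level cost combine to $p^{\Oh(p)}$ rather than to an exponential in $p^2$. Since all three ingredients---Lenstra's dimension reduction, Kannan's flatness bound, and the Frank--Tardos coefficient reduction---are fully worked out in \cite{Lenstra83,Kannan87,FrankT87}, the proof of the proposition as stated reduces to citing these works, which is what I would do in the paper.
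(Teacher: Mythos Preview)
Your proposal is correct and matches the paper's treatment: Proposition~\ref{prop:ipl} is stated without proof and attributed directly to \cite{Lenstra83,Kannan87,FrankT87}, exactly as you conclude one should do. Your sketch of the Lenstra--Kannan dimension-reduction argument and the Frank--Tardos coefficient reduction is a faithful outline of those works, but the paper does not include any of it---it simply cites the proposition as a black box and moves on to Theorem~\ref{thm:nd}.
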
 
 
 \begin{theorem}\label{thm:nd}
 \probRT can be solved in $2^{\ell^{\Oh(\tau)}}\cdot n^{\Oh(1)}$ time on graphs of neighborhood diversity $\ell$. 
 \end{theorem}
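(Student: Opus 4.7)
The plan is to enumerate candidate projections of reduced winning strategies onto $\mathcal{T}_G^*(\tau)$ and test each via an integer linear program whose size depends only on $\ell$ and $\tau$. By Observations~\ref{obs:uiniform}, \ref{obs:win-ref}, and \ref{obs:proj}, \divider has a winning strategy with $k$ agents in at most $\tau$ steps if and only if, for some subtree $\mathcal{T}$ of $\mathcal{T}_G^*(\tau)$ whose labels $X_v$ are all two-element subsets of $\{1,\ldots,\ell\}$, there exists a uniform reduced winning strategy whose projection is exactly $\mathcal{T}$. Since $|V(\mathcal{T}_G^*(\tau))| \leq \binom{\ell+1}{2}^{\tau+1} = \ell^{\Oh(\tau)}$, at most $2^{\ell^{\Oh(\tau)}}$ such candidate subtrees have to be tried.

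For a fixed candidate $\mathcal{T}$, the plan is to decide the existence of a matching strategy via an ILP. The variables are, for every node $v \in V(\mathcal{T})$ and module index $i \in \{1,\ldots,\ell\}$, the count $c_v(i) \in \{0,\ldots,|U_i|\}$ of \divider agents occupying $U_i$ at the position $P_v$, together with parent-to-child flow variables $f_{(v,u)}(i,j) \geq 0$ counting agents moving from $U_i$ to $U_j$ along each edge $(v,u)$ of $\mathcal{T}$. The constraints are all linear: the budget $\sum_i c_v(i) = k$ for every $v$; the Facilitator-compatibility $c_v(p) \leq |U_p|-1$ and $c_v(q) \leq |U_q|-1$ when $X_v = \{p,q\}$; the requirement forced by Observation~\ref{obs:proj} that for every internal $v$ and every $i \in N_{\mathcal{G}}[p] \cup N_{\mathcal{G}}[q]$, we have $c_v(i) = |U_i|$ exactly when $i$ does not appear in the label of any child of $v$ in $\mathcal{T}$, and $c_v(i) \leq |U_i|-1$ otherwise; flow conservation $\sum_j f_{(v,u)}(i,j) = c_v(i)$ and $\sum_i f_{(v,u)}(i,j) = c_u(j)$ at each edge, with support restricted to $i=j$ or $ij \in E(\mathcal{G})$; and, to respect the fact that agents inside an independent module $U_i$ can only stay or leave, the additional bound $c_u(i) - f_{(v,u)}(i,i) \leq |U_i| - c_v(i)$. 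The total number of variables is $\Oh(\ell^2 \cdot |V(\mathcal{T})|) = \ell^{\Oh(\tau)}$.

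Applying Proposition~\ref{prop:ipl} with $p = \ell^{\Oh(\tau)}$ and input bit-length $L = n^{\Oh(1)}$ yields a running time of $p^{2.5p+o(p)} \cdot L = 2^{\ell^{\Oh(\tau)}} \cdot n^{\Oh(1)}$ per ILP, and multiplying by the $2^{\ell^{\Oh(\tau)}}$ candidate subtrees gives the claimed overall bound. The main obstacle I anticipate is establishing that the count-and-flow abstraction is tight: the forward direction (a genuine uniform strategy induces feasible counts) is immediate, but for the converse I must turn any feasible ILP solution back into concrete agent placements and movements. I would argue this by leveraging the module structure: all vertices inside $U_i$ are indistinguishable with respect to the rest of the graph, so \divider is free to pick which specific $c_v(i)$ vertices of $U_i$ are used at time $v$, and the flow values $f_{(v,u)}(i,j)$ can be realized as individual agent trajectories by a routine bipartite matching inside each module at each transition. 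Avoiding Facilitator's two vertices is always possible thanks to the slack guaranteed by the Facilitator-compatibility constraints, and the independent-module extra bound ensures that incoming agents can be routed to currently vacated slots.
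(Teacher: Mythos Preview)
Your high-level plan---enumerate candidate projections $\mathcal{T}$ of a reduced uniform strategy inside $\mathcal{T}_G^*(\tau)$ and test each by an ILP whose number of variables is $\ell^{\Oh(\tau)}$---is exactly the paper's approach, and your running-time calculation is correct. The gap is in the ILP model itself.

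By declaring $c_v(i)\in\{0,\ldots,|U_i|\}$ together with the exact budget $\sum_i c_v(i)=k$, you are implicitly requiring that at every step the $k$ agents of \divider occupy pairwise distinct vertices. But the game allows several agents on the same vertex, and nothing you cite (Observations~\ref{obs:uiniform}, \ref{obs:win-ref}, \ref{obs:proj}) rules out that a winning strategy needs this. The failure is already visible in a trivial regime: if $k>n-2$ then \divider wins outright by covering $V(G)\setminus\{s,t\}$, yet your ILP is infeasible for every $\mathcal{T}$ because the caps force $\sum_i c_v(i)\le n-2<k$, and you would wrongly declare that Facilitator wins. More subtly, even for moderate $k$ it is not clear that an optimal strategy never funnels many agents through a small module in one step. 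The paper handles precisely this by splitting each count into \emph{blockers} $x_i^v\le n_i$ (agents on distinct vertices, the only ones that matter for Observation~\ref{obs:proj}) and \emph{dwellers} $y_i^v\ge 0$ with no upper cap, and by introducing separate transition variables for all four blocker/dweller combinations plus an extra variable for dweller-to-blocker promotion within a clique module. That decomposition is what makes both directions of the correctness argument go through.

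A smaller issue: your independent-module constraint $c_u(i)-f_{(v,u)}(i,i)\le |U_i|-c_v(i)$ is too strong. The vertices vacated by the $c_v(i)-f_{(v,u)}(i,i)$ leavers are available to newcomers in the same step (moves are simultaneous), so the right bound is $|U_i|-f_{(v,u)}(i,i)$, which reduces to $c_u(i)\le |U_i|$ and is already present. As written, your constraint can reject feasible transitions and is not obviously recoverable by choosing a different flow with the same marginals, since pushing $f_{(v,u)}(i,i)$ up to $\min(c_v(i),c_u(i))$ may be incompatible with conservation elsewhere in the quotient graph.
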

 
\begin{proof}
Let $(G,s,t,k,\tau)$ be an instance of \probRT. If $s=t$ or $s$ and $t$ are adjacent, then the problem is trivial. Assume that $s$ and $t$ are distinct nonadjacent vertices of $G$. 
We compute a neighborhood decomposition $\mathcal{U}=\{U_1,\ldots,U_\ell\}$ of $G$ with $\ell=\nd(G)$. Recall that this can be done in polynomial time~\cite{Lampis12}. 
Denote by $n_i=|U_i|$ for $i\in\{1,\ldots,\ell\}$.

Suppose that $s$ and $t$ are in the same module $U_i$. Since $s$ and $t$ are distinct and not adjacent,   $U_i$ is an independent module. We have that $N_G(s)=N_G(t)$ and, therefore, $\lambda_G(s,t)=|N_G(s)\cap N_G(t)|$. Notice that Facilitator wins in one step if $k<|N_G(s)\cap N_G(t)|$ by moving $R$ and $J$ into a vertex of $N_G(s)\cap N_G(t)$ that is not occupied by an agent of \divider. We conclude that $d_G(s,t)=|N_G(s)\cap N_G(t)|$ and, therefore,  $(G,s,t,k,\tau)$ is a yes-instance if and only if $k<|N_G(s)\cap N_G(t)|$. From now, we assume that $s$ and $t$ are in distinct modules of $\mathcal{U}$. % Let $s\in U_a$ and $t\in U_b$.

We construct the tree $\mathcal{T}_G^*(\tau)$ by brute force with the corresponding pairs $X_v=\{p,q\}$ for $v\in V(\mathcal{T}_G^*(\tau))$. Since  $\mathcal{T}_G^*(\tau)$ has at most $\binom{\ell+1}{2}^{\tau+1}$ nodes, the construction can be done in $\ell^{\Oh(\tau)}$ time. Denote by $r$ the root of $\mathcal{T}_G^*(\tau)$.

We consider all subtrees $\mathcal{T}$ of $\mathcal{T}_G^*(\tau)$ containing $r$ and rooted in this vertex, whose leaves are leaves of $\mathcal{T}_G^*(\tau)$.
Observe that the total number of such trees is at most $2^{|\mathcal{T}_G^*(\tau)|}=2^{\ell^{\Oh(\tau)}}$. For each $\mathcal{T}$, we check whether \divider has a winning strategy such that the projection of the corresponding reduced strategy is $\mathcal{T}$. If we find such a tree $\mathcal{T}$, we conclude that \divider wins in the game. Otherwise, we conclude that Facilitator wins.

Assume that $\mathcal{T}$ is given. % For $v\in V(\mathcal{T})$, let $X_v=\{p_v,q_v\}$.
If for some $v\in V(\mathcal{T})$, $X_v=\{p,p\}$ for some $p\in\{1,\ldots,\ell\}$, we discard the choice of $\mathcal{T}$, because $\mathcal{T}$ cannot be the projection of a winning strategy of \divider by Observation~\ref{obs:win-ref}. Suppose that for every $v\in V(\mathcal{T})$, $X_{v}=\{p,q\}$ with $p\neq q$.

The running time of our algorithm is going to be dominated by checking all the trees $\mathcal{T}$ and solving \probILP. Therefore, to simplify the arguments, for each node $v$ of $\mathcal{T}$ we guess the set  $I_v\subseteq \{1,\ldots,\ell\}$ such that the agents of \divider occupy all the vertices 
of the modules $U_i$ with $i\in I_v$  in the position of the game corresponding to $v$. As standard, we do it by brute force checking of all possible assignments of sets to the nodes of $\mathcal{T}$. Since the number of the assignments is at most $(2^\ell)^{|V(\mathcal{T})|}$, this can be done in  $2^{\ell^{\Oh(\tau)}}$ time. 

For each selection of $I_v$ for $v\in V(\mathcal{T})$, we check feasibility using Observation~\ref{obs:proj}. Namely, for each non-leaf vertex $v\in V(\mathcal{T})$, we consider its set $X_v=\{p,q\}$ and check 
whether the children $u$ of $v$ in  $\mathcal{T}_G^*(\tau)$ with $X_{u}\in\{\{i,j\}\mid i\in N_{\mathcal{G}}[p]\setminus I_v\text{ and }j\in N_{\mathcal{G}}[q]\setminus I_v\}$ are exactly the children of $v$ in $\mathcal{T}$. We discard the assignment if this is not the case, and we discard the current choice of $\mathcal{T}$ if we fail to find a feasible assignment of sets $I_v$. 

From now on, we assume that the assignment of sets $I_v$ for $v\in V(\mathcal{T})$ is given.

Our general idea is to express the question about existence of a winning strategy of \divider in terms of  \probILP. We start with introducing two families of variables 
$x_1^v,\ldots,x_\ell^v$ and $y_1^v,\ldots,y_\ell^v$ for each node $v$ of $\mathcal{T}$.  The intuition behind these variables is following. For every $i\in\{1,\ldots,\ell\}$, $x_i^v$ is the number of vertices of $U_i$ occupied by agents of \divider in the position of the game corresponding to the node $v$. It is more convenient for us to consider $x_i^v$ as the number 
of the agents of \divider that occupy distinct vertices of $U_i$; we call these agents \emph{blockers}.    
\divider may also have other agents in $U_i$ and $y_i^v$ is the number of  these agents and we call these agents \emph{dwellers}. It is also convenient to assume that blockers are active in the current step of the game, and dwellers are inactive and do not prevent $R$ or $J$ from entering the vertices occupied by them. By this convenience, we can allow, say, the situation $x_i^v=0$ and $y_i^v>0$, as we do not care where the dwellers are placed in the corresponding module. 

We impose the following constraints on these variables for every $v\in V(\mathcal{T})$:
%\begin{equation}\label{eq:const-one}
%\sum_{i=1}^\ell(x_i^v+x_i^v)=k,~ x_i^v\geq 0 \text{ and }y_i^v\geq 0\text{ for every }i\in\{1,\ldots,\ell\}, 
%\end{equation}
\begin{gather}
\sum_{i=1}^\ell(x_i^v+x_i^v)=k,~ x_i^v\geq 0 \text{ and }y_i^v\geq 0\text{ for every }i\in\{1,\ldots,\ell\}, \label{eq:const-one}\\
x_i^v\leq n_i\text{ for every }i\in\{1,\ldots,\ell\}\setminus X_v,\label{eq:const-two}\\
x_i^v\leq n_i-1\text{ for every }i\in X_v,\label{eq:const-three}\\
y_i^v=0\text{ for }i\in\{1,\ldots,\ell\}\text{ if }n_i=1\text{ and }i\in X_v.  \label{eq:const-four}
\end{gather}
%and
%\begin{equation}\label{eq:const-four}
%y_i^v=0\text{ for }i\in\{1,\ldots,\ell\}\text{ if }n_i=1\text{ and }i\in X_v.  
%\end{equation}
The necessity of constraints (\ref{eq:const-one}) and (\ref{eq:const-two}) is straightforward.  To see the reason behind (\ref{eq:const-three}) and (\ref{eq:const-four}), notice that if a vertex of $U_i$ is occupied by an agent of Facilitator, then at most $n_i$ blockers can be in $U_i$ and, moreover, if $n_i=1$, then no agent of \divider can be in $U_i$.

Next, we state the constraints coming from the choice of sets $I_v$. 
For every $v\in V(\mathcal{T})$,
%For this, denote by $C_{\mathcal{T}}(v)$ and $C_{\mathcal{T}_G^*(\tau)}(v)$ the sets of children of a node $v$ in $\mathcal{T}$ and $\mathcal{T}_G^*(\tau)$, respectively.  For every $v\in V(\mathcal{T})$ distinct from leaves and every $u\in C_{\mathcal{T}_G^*(\tau)}(v)\setminus C_{\mathcal{T}}(v)$,
\begin{equation}\label{eq:const-five}
x_i^v=n_i\text{ for every }i\in  I_v. %X_u\text{ s.t. }i\notin X_w\text{ for every }w\in C_{\mathcal{T}}(v). 
\end{equation}

The variables $x_i^v$ and $y_i^v$ are used to express the positions of the players. However, we also have to express transitions between these positions, that is, the players should be able to make moves from the position corresponding to a node of $\mathcal{T}$ to the positions corresponding to its children. For this, we need additional variables. 
For every $v\in V(\mathcal{T})$ and every child $u$ of $v$ in $\mathcal{T}$, and every ordered pair $(i,j)$ of adjacent vertices of the quotient graph $\mathcal{G}$, we introduce four variables $a_{i,j}^{v,u},b_{i,j}^{v,u},c_{i,j}^{v,u},d_{i,j}^{v,u}$. The meaning of the variables is following. For the move of Facilitator from the position corresponding to $v$ to the position corresponding to $u$, \divider responds by moving $a_{i,j}^{v,u}$ blockers from $X_i$ to make them blockers in $X_j$, $b_{i,j}^{v,u}$ blockers from $X_i$ become dwellers in $X_{j}$,  $c_{i,j}^{v,u}$ dwellers from $X_i$ become blockers in $X_{j}$, and $d_{i,j}^{v,u}$ dwellers from $X_i$ become dwellers in $X_{j}$.
Notice that if $U_i$ is a clique module, then some dwellers can move to adjacent vertices to become blockers (it has no sense for \divider to make a blocker a dweller). For this, we introduce a variable $z_i^{v,u}$ for $i\in\{1,\ldots,\ell\}$.
The constraints for these variables are following.

For every non-leaf $v\in V(\mathcal{T})$ and every child $u$ of $v$ in $\mathcal{T}$, %and every ordered pairs $(i,j)$ of adjacent vertices of the quotient graph $\mathcal{G}$,
\begin{gather}
a_{i,j}^{v,u}\geq 0,~b_{i,j}^{v,u}\geq 0,~c_{i,j}^{v,u}\geq 0,~d_{i,j}^{v,u}\geq 0 \text{ for each ordered pair }(i,j)\text{ of adjacent vertices of }\mathcal{G},\label{eq:const-six}\\
z_i^{v,u}\geq 0\text{ and }z_i^{v,u}=0\text{ if }U_i\text{ is an independent module}\text{ for every }i\in\{1,\ldots,\ell\},\label{eq:const-seven}\\
\sum_{j\in N_{\mathcal{G}}(i)}(a_{i,j}^{v,u}+b_{i,j}^{v,u})\leq x_i^{v}\text{  and  }\sum_{j\in N_{\mathcal{G}}(i)}(c_{i,j}^{v,u}+d_{i,j}^{v,u})+z_i^{v,u}\leq y_i^v\text{ for every }i\in\{1,\ldots,\ell\},\label{eq:const-eight}\\
x_i^u=x_i^v-\sum_{j\in N_{\mathcal{G}}(i)}(a_{i,j}^{v,u}+b_{i,j}^{v,u})+\sum_{j\in N_{\mathcal{G}}(i)}(a_{j,i}^{v,u}+c_{j,i}^{v,u})+z_i^{v,u}\text{ for every }i\in\{1,\ldots,\ell\},\label{eq:const-nine}\\
y_i^u=y_i^v-\sum_{j\in N_{\mathcal{G}}(i)}(c_{i,j}^{v,u}+d_{i,j}^{v,u})-z_i^{v,u}+\sum_{j\in N_{\mathcal{G}}(i)}(b_{j,i}^{v,u}+d_{j,i}^{v,u})\text{ for every }i\in\{1,\ldots,\ell\}.\label{eq:const-ten}
\end{gather}
Constraints (\ref{eq:const-six}) and (\ref{eq:const-seven}) are straightforward. Constraint (\ref{eq:const-eight}) encodes that the number of blockers that leave a module $U_i$ is upper bounded by the number of blockers in $U_i$ and, symmetrically, the number of dwellers that leave a module $U_i$ or become blockers in the block is at most the number of dwellers in $U_i$. Finally, (\ref{eq:const-nine}) and (\ref{eq:const-ten}) express that the movements of agents of \divider from the position associated 
with $v$ lead to the position corresponding to $u$.

We have $2|V(\mathcal{T})|\ell$ variables $x_i^v,y_j^v$, at most $8|E(\mathcal{T})|\binom{\ell}{2}$ variables   $a_{i,j}^{v,u},b_{i,j}^{v,u},c_{i,j}^{v,u},d_{i,j}^{v,u}$, and  $|E(\mathcal{T})|\ell$ variables $z_i^{u,v}$, that is, $\ell^{\Oh(\tau)}$ variables.  We defined $5|V(\mathcal{T})|\ell$ constraints (\ref{eq:const-one})--(\ref{eq:const-four}), at most $|V(\mathcal{T})|\ell$ constraints (\ref{eq:const-five}), and $|E(\mathcal{T})|(8\binom{\ell}{2}+5\ell)$ constraints (\ref{eq:const-six})--(\ref{eq:const-ten}). Hence, in total, we have $\ell^{\Oh(\tau)}$ constraints.  Denote the obtained system of integer linear inequalities by ($*$). 
Observe that the coefficients in ($*$) are upper bounded by $n$. Therefore, the bit-size of ($*$) is $\ell^{\Oh(\tau)}\cdot \log n$. We solve ($*$) in $2^{\ell^{\Oh(\tau)}}\cdot \log n$ time by Proposition~\ref{prop:ipl}. 

We claim that ($*$) is feasible, that is, has an integer solution if and only if \divider has a winning strategy such that the projection of the reduced strategy on $\mathcal{T}_G^*(\tau)$ is $\mathcal{T}$.

Suppose that \divider with $k$ agents has a uniform winning strategy $\mathcal{T}_G^k(\tau)$ such that the projection of the reduced strategy  $\hat{\mathcal{T}}$ on $\mathcal{T}_G^*(\tau)$ is $\mathcal{T}$. For every two distinct modules $U_i$ and $U_j$, ether every vertex of $U_i$ is adjacent to every vertex of $U_j$ or the vertices of the modules are nonadjacent. This allows us to make some additional assumptions about $\mathcal{T}_G^k(\tau)$. Namely, we can assume that on each step the agents of \divider are divided into blockers and dwellers, and then we can assume that a blocker (dweller, respectively) agent can become a dweller (blocker, respectively) only if the agent is moved to an adjacent vertex. Also we can assume that a blocker is never moved to an adjacent vertex of the same clique module to become a dweller.  
%We consider the  reduced strategy $\hat{\mathcal{T}}_G^k(\tau)$ obtained from $\mathcal{T}_G^k(\tau)$ and its projection $\hat{\mathcal{T}}$ on $\mathcal{T}_G^*(\tau)$. 
Then we define the values of all the variables according to the description given in the construction of ($*$) following the  
reduced strategy $\hat{\mathcal{T}}_G^k(\tau)$. Then the construction of the constraints of ($*$) immediately imply that these values of the variables provide a solution of ($*$).

For the opposite direction, given a solution of ($*$), we construct the strategy  $\mathcal{T}_G^k(\tau)$. Initially, we place the agents of \divider on $G$ according to the values of $x_1^r,\ldots,x_\ell^r$ and $y_1^r,\ldots,y_\ell^r$ for the root $r$. For each $i\in\{1,\ldots,\ell\}$, we place $x_i^r$ agents (blockers) into distinct vertices of $U_i$ unoccupied by the agents of Facilitator. Then we put $y_i^r$ dwellers into $U_i$; as we pointed above, it is convenient to assume that these agents are inactive and we can place them arbitrarily. Assume inductively that we constructed a node $v$ of the future  $\mathcal{T}_G^k(\tau)$ with $P_v=(F,D)$ that corresponds to the node $v'$ of $\mathcal{T}$, that is,
$\id(F)=X_{v'}$ and for each $i\in\{1,\ldots,\ell\}$, \divider has exactly $x_i^{v'}$ blockers in $U_i$ that occupy distinct vertices, and also $y_i^{v'}$ dwellers are in $U_i$.
Assume that $F'\in\mathcal{F}$ is compatible with $D$ and adjacent to $F$. Then because of constraints (\ref{eq:const-five}), there is a child $u'$ of $v'$ in $\mathcal{T}$ with 
$\id(F')=X_{u'}$. Then \divider responds to moving the agents of Facilitator from $F$ to $F'$ by moving his agents according to the values  
$a_{i,j}^{v',u'},b_{i,j}^{v',u'},c_{i,j}^{v',u'},d_{i,j}^{v',u'}$ for the ordered pairs $(i,j)$ of adjacent vertices of $\mathcal{G}$ and according to the values of $z_i^{v',u'}$ for $i\in\{1,\ldots,\ell\}$. For the obtained node $u$ of   $\mathcal{T}_G^k(\tau)$ with $P_u=(F',D')$, we have that the position corresponds to the configuration defined by the variables   $x_1^{u'},\ldots,x_\ell^{u'}$ and $y_1^{u'},\ldots,y_\ell^{u'}$. These inductive arguments imply that the constructed strategy is a uniform winning strategy for \divider and the projection of the reduced strategy is $\mathcal{T}$.

This completes the construction of the algorithm. To evaluate the running time, observe that we consider $2^{\ell^{\Oh(\tau)}}$ trees $\mathcal{T}$, and for each $\mathcal{T}$, we consider $2^{\ell^{\Oh(\tau)}}$ assignments of sets $I_v$ for the nodes. Then for each tree $\mathcal{T}$ given together with the assignments of sets $I_v$ for $v\in V(\mathcal{T})$, we construct a solve the system ($*$) in time $2^{\ell^{\Oh(\tau)}}\cdot \log n$. Taking into account the preliminary steps where we consider special cases of $s$ and $t$ and construct the neighborhood decomposition $\mathcal{U}$, the total running time is  $2^{\ell^{\Oh(\tau)}}\cdot n^{\Oh(1)}$.
\end{proof}
 
\section{Conclusion}\label{sec:concl}
We initiated the study of the Rendezvous Game with Adversaries on graphs.  We proved that in several cases, the \dinumber number $d_G(s,t)$, the minimum number of agents needed for \divider to win against Facilitator, could be equal to the minimum size $\lambda_G(s,t)$ of an $(s,t)$-separator in $G$.  In particular,   this equality holds on for $P_5$-free and chordal graphs. In general,   the difference $\lambda_G(s,t)-d_G(s,t)$ could be arbitrary large.  Are there other natural graph classes with this property? Is it is possible to characterize hereditary graph classes for which the equality holds?

Further, we investigated the computational complexity of \probR and \probRT. Both problems can be solved it $n^{\Oh(k)}$ time. However, they are \classCoW{2}-hard when parameterized by $k$ and cannot be solved in  $n^{o(k)}$ time unless $\classFPT=\classW{1}$. In fact,  $\tau$-\probRT is \classCoW{2}-hard for every $\tau\geq 2$. We also proved that \probR and \probRT are \classPSPACE-hard. We conjecture that these twoproblems are   {\sf EXPTIME}-complete.  

Finally, we initiated  the study of the complexity of  \probR and \probRT under structural parameterization of the input graphs. We proved that \probRT is \classFPT when parameterized by the neighborhood diversity of the input graph and $\tau$.  Can this result be generalized for the parameterization by \emph{modular width} (see, e.g.,~\cite{GajarskyLO13} for the definition and the discussion of this parameterization) and $\tau$?
Is \probRT \classFPT when parameterized by the neighborhood diversity only?  The same question is open for \probR. We believe that this problem is interesting even   
for the more restrictive parameterization by the vertex cover number. Another question is about  \probR and \probRT parameterized by the treewidth of a graph? 
Are the problems \classFPT or \classXP for this parameterization? Notice that if the initial positions $s$ and $t$ are not is the same bag of a tree decomposition of width $w$, then 
the upper bound for the \dinumber number by $\lambda_G(s,t)$ together with Theorem~\ref{thm:backtrack}
imply  that the problems can be solved it time $n^{\Oh(w)}$.  Can the problems be solved in this time if $s$ and $t$ are in the same bag?

%\bibliographystyle{siam} 
%\bibliography{games} 

\end{document}